\documentclass{article}
\usepackage{amssymb}
\usepackage{amsmath}
\usepackage{amsthm}
\usepackage{graphicx} 
\usepackage{color}
\usepackage[colorlinks]{hyperref}
\usepackage[a4paper]{geometry}
\usepackage{algorithm}
\usepackage{algorithmic}
\usepackage{booktabs}
\usepackage{float}

\newtheorem{theorem}{Theorem}[section]
\newtheorem{lemma}[theorem]{Lemma}
\newtheorem{proposition}[theorem]{Proposition}
\newtheorem{corollary}[theorem]{Corollary}

\theoremstyle{definition}

\newtheorem{example}{Example}[section]

\theoremstyle{remark}
\newtheorem{remark}{Remark}

\title{Embedding linear codes over $\mathbb{Z}_4$ into self-orthogonal codes}
\author{Junmin An\thanks{junmin0518@sogang.ac.kr, Department of Mathematics and Institute for Mathematical and Data Sciences, Sogang University, Seoul, Korea}, Jon-Lark Kim\thanks{jlkim@sogang.ac.kr, Department of Mathematics and Institute for Mathematical and Data Sciences, Sogang University, Seoul, Korea}, San Ling\thanks{lingsan@ntu.edu.sg, School of Physical and Mathematical Sciences, Nanyang Technological University, Singapore}~\thanks{ling.s@vinuni.edu.vn, VinUniversity, Vinhomes Ocean Park, Gia Lam, Hanoi, Vietnam.}}
\date{}

\begin{document}
\maketitle
\begin{abstract}
The purpose of this paper is to investigate the self-orthogonal embedding problem for linear codes over $\mathbb{Z}_4$. We propose several tight bounds on the length of the shortest self-orthogonal embedding over $\mathbb{Z}_4$, and determine the exact shortest self-orthogonal embedding length under specific conditions. As an example satisfying these conditions, we establish the exact length of the shortest self-orthogonal embedding for the quaternary Preparata codes. Furthermore, to establish these results, we completely classify the exact length of the shortest doubly even self-orthogonal embedding for binary linear codes in every possible case. Finally, when the shortest self-orthogonal embedding length of a given free code over $\mathbb{Z}_4$ is equal to the shortest doubly even self-orthogonal embedding length of its residue code, we present an algorithm to construct all possible shortest self-orthogonal embeddings. With our algorithm, we found twelve linear codes over $\mathbb{Z}_4$ whose minimum Lee distances are higher than those of the $\mathbb{Z}_4$-linear codes in Aydin’s database.
\end{abstract}

\section{Introduction}
Constructing good new codes from a given linear code has become an important problem. As one such approach, the embedding method which creates a new code by adding columns to an existing one has emerged as a novel method for code construction. The central problem is twofold: first, determining the minimum number of columns required to embed a given code so that the resulting code is self-orthogonal, LCD, or has a prescribed hull dimension; and second, how to explicitly construct or obtain such embedded codes with the highest minimum distance. This problem was first investigated in~\cite{KKL-2021}. Specifically, the authors devised an algorithm to obtain binary self-orthogonal codes by adding the minimum number of columns, and applied it to binary linear codes of dimensions three and four to construct optimal codes. Subsequently, Kim and Choi~\cite{KC-2022} devised a new algorithm using a self-orthogonality matrix, and extended the previous results up to dimension 6. An et al.~\cite{AKKLW-arXiv} utilized the hull of a linear code to completely determine the minimum number of columns required to embed a given binary linear code into a self-orthogonal code for all cases, and constructed optimal codes of dimension $8$. Since it was shown in \cite{CMTQP-2018-2} that any linear code over a field of order 5 or greater is equivalent to an LCD code, An et al. \cite{AHKL-arXiv} extended the previous shortest self-orthogonal embeddings to address the shortest LCD embedding problem, which determines the minimum number of columns required to embed a given linear code into an LCD code over $\mathbb{F}_2$, $\mathbb{F}_3$ and $\mathbb{F}_4$. Most recently, Wang and Luo \cite{WL-arXiv} further generalized these existing results by investigating the problem of embedding a given linear code over any finite field to achieve a prescribed hull dimension.

We aim to shift the focus to codes over rings, specifically to study the self-orthogonal embedding problem over $\mathbb{Z}_4$. The intensive study of codes over $\mathbb{Z}_4$ was primarily motivated by the observation that certain optimal binary nonlinear codes can be represented as the images of linear codes over $\mathbb{Z}_4$ under a specific map. Forney et al.~\cite{FST-1992} showed that the Nordstrom-Robinson code, which is an optimal binary nonlinear code, is the image of the octacode over $\mathbb{Z}_4$. Hammons et al.~\cite{HKCSS-1994} later showed that infinite families of very good binary nonlinear codes, including the Kerdock, Preparata, and Goethals codes, similarly arise as the images of linear codes over $\mathbb{Z}_4$. Self-dual codes over $\mathbb{Z}_4$ have been widely studied~\cite{BSBM-1997,CS-1993,DHS-2001,FGPL-1998,G-1996,H-2010,H-2012,HK-2000,M-2024,R-1999,R-2000}, with many classical results over fields including invariant theory, weight enumerators, extremal self-dual codes and shadow theory being extensively generalized to the $\mathbb{Z}_4$ setting. Similarly, well-known families of codes, such as cyclic codes and quadratic residue (QR) codes, have also been generalized to $\mathbb{Z}_4$~\cite{BSC-1995,CMKH-1996,CS-1995,PQ-1996}. Therefore, it is a natural question to ask how to extend the self-orthogonal embedding problem from finite fields to $\mathbb{Z}_4$.

Another key motivation for our research is its connection to the problem of expanding self-orthogonal codes to self-dual codes. Recently, Kim et al.~\cite{KLL-2026} showed that for any Euclidean and Hermitian self-orthogonal code over a finite field, the dimension can always be increased by one while preserving the self-orthogonality property. Through this process, they demonstrated that a self-dual code can be constructed from a self-orthogonal code. Similarly, Shi and Tao~\cite{ST-2026} showed that a self-orthogonal code over $\mathbb{Z}_4$ can also be expanded into a self-dual code and through this approach, they obtained new optimal self-dual codes over $\mathbb{Z}_4$. Therefore, if a given linear code over $\mathbb{Z}_4$ can be embedded into a good self-orthogonal code, it can be combined with the expanding method to construct a good self-dual code over $\mathbb{Z}_4$.

In this paper, we discover that the key to solving the self-orthogonal embedding problem over $\mathbb{Z}_4$ lies in the shortest doubly even self-orthogonal embedding of its binary residue code. Thus, before addressing the self-orthogonal embedding problem over $\mathbb{Z}_4$, we investigate the shortest doubly even self-orthogonal embedding problem for binary codes, which is the problem of finding the minimum number of columns required to embed a given linear code so that the resulting code is not only self-orthogonal but also doubly even. First, we show that the number of additional columns required for the shortest doubly even self-orthogonal embedding exceeds that of the shortest self-orthogonal embedding by at most two. Furthermore, we completely classify the exact number of required columns for every possible case. Next, regarding the self-orthogonal embedding over $\mathbb{Z}_4$, we propose a bound on the number of columns required for the shortest self-orthogonal embedding. Specifically, given a linear code over $\mathbb{Z}_4$ with a generator matrix $G$, we show that the minimum number of columns $m$ required to embed it into a self-orthogonal code satisfies $a+b\le m\le 3a+2b$, where $4^a2^b$ is the type of the code generated by $GG^T$. Using this, we derive two other tight bounds with examples which demonstrate that these bounds are tight. Our main result is that if a code over $\mathbb{Z}_4$ satisfies certain conditions, such as its residue code being LCD, then its shortest self-orthogonal embedding can be obtained from the shortest doubly even self-orthogonal embedding of its residue code. Since we have already completely determined the exact number of columns required for the doubly even case, this result allows us to find the exact value for the corresponding code over $\mathbb{Z}_4$. By applying this result, we exactly determine the length of the shortest self-orthogonal embedding for the quaternary Preparata codes. Finally, when the length of a shortest self-orthogonal embedding of a given free code over $\mathbb{Z}_4$ is equal to the length of a shortest self-orthogonal embedding of its residue code, we present an algorithm to construct all possible shortest self-orthogonal embeddings. Using this algorithm, we provide several construction examples along with their resulting parameters.

This paper is organized as follows. In Section 2, we provide preliminaries on binary codes and codes over $\mathbb{Z}_4$, and introduce previous results on shortest self-orthogonal embeddings. In Section 3, we present our results on shortest doubly even self-orthogonal embeddings. By dividing the codes into even and odd cases, we classify the exact length of the shortest doubly even self-orthogonal embedding for each specific case. In Section 4, we present several tight bounds on the length of the shortest self-orthogonal embedding over $\mathbb{Z}_4$. Furthermore, under a specific condition, we determine the exact length by using the residue code of the corresponding code. In Section 5, we propose an algorithm for constructing shortest self-orthogonal embeddings and provide explicit examples using this algorithm. We conclude our paper in Section 6.

\section{Preliminaries}
For a detailed background on coding theory, we refer the reader to~\cite{book-HP,book-MS}. For $\mathbb{Z}_4$ codes, we refer to~\cite{book-W}.

Let $\mathbb{F}_2$ be the field with two elements. For $n\ge 1$, a nonempty subset $\mathcal{C}$ of $\mathbb{F}_2^n$ is called a \textit{code} over $\mathbb{F}_2$ with length $n$, and an element of $\mathcal{C}$ is called a \textit{codeword}. If a code $\mathcal{C}$ is a $k$-dimensional subspace of $\mathbb{F}_2^n$, then it is called a \textit{linear code} over $\mathbb{F}_2$, denoted by an $[n, k]$ code. Otherwise, $\mathcal{C}$ is called a \textit{nonlinear code}. For an $[n, k]$ linear code $\mathcal{C}$ over $\mathbb{F}_2$, a \textit{generator matrix} $G$ for $\mathcal{C}$ is a $k\times n$ matrix over $\mathbb{F}_2$ whose rows form a basis of $\mathcal{C}$ over $\mathbb{F}_2$. Two codes $\mathcal{C}_1$ and $\mathcal{C}_2$ over $\mathbb{F}_2$ are said to be \textit{equivalent} if there is a permutation of columns $\sigma$ such that $\sigma\mathcal{C}_1=\mathcal{C}_2$.

Similarly, a \textit{code} $\mathcal{C}$ over $\mathbb{Z}_4$ is a nonempty subset of $\mathbb{Z}_4^n$ for $n\ge 1$, and is \textit{linear} if $\mathcal{C}$ is a $\mathbb{Z}_4$-submodule of $\mathbb{Z}_4^n$. A $\mathbb{Z}_4$ linear code $\mathcal{C}$ of length $n$ contains a set of $k_1+k_2$ codewords $\mathbf{c}_1, \ldots, \mathbf{c}_{k_1}\in\mathbb{Z}_4^n\backslash2\mathbb{Z}_4^n$ and $\mathbf{c}_{k_1+1}, \ldots, \mathbf{c}_{k_1+k_2}\in 2\mathbb{Z}_4^n$ such that every codeword in $\mathcal{C}$ is uniquely expressible in the form
\[
\sum_{i=1}^{k_1}a_i\mathbf{c}_i+\sum_{j=k_1+1}^{k_1+k_2}b_j\mathbf{c}_j
\]
where $a_i\in\mathbb{Z}_4$ and $b_j\in\mathbb{Z}_2$. The matrix whose rows are $\mathbf{c}_i$ for $1\le i\le k_1+k_2$ is called a \textit{generator matrix} for $\mathcal{C}$, and $\mathcal{C}$ is said to be of type $4^{k_1}2^{k_2}$. Here, $k_1$ is called the \textit{free rank} of $\mathcal{C}$. If $k_2=0$, then $\mathcal{C}$ is called a \textit{free code}, and $k_1$ is called the \textit{rank} of the free code $\mathcal{C}$. Two codes $\mathcal{C}_1$ and $\mathcal{C}_2$ over $\mathbb{Z}_4$ are called \textit{equivalent} if there is a monomial matrix $M$ which has exactly one nonzero entry from $\{1, 3\}$ in each row and column, such that $\mathcal{C}_1M=\mathcal{C}_2$. Throughout this paper, the term \textit{code} refers to a linear code unless otherwise stated.

For a code $\mathcal{C}$ over $\mathbb{F}_2$ or $\mathbb{Z}_4$, we denote a generator matrix of $\mathcal{C}$ by $G(\mathcal{C})$. Conversely, for a matrix $A$ over $\mathbb{F}_2$ or $\mathbb{Z}_4$, we denote the code generated by the rows of $A$ by $\langle A\rangle$.

For a code $\mathcal{C}$ over $\mathbb{F}_2$ (resp. $\mathbb{Z}_4)$ with length $n$, the \textit{dual} $\mathcal{C}^\perp$ of the code $\mathcal{C}$ is defined as the set $\{\mathbf{x}\in\mathbb{F}_2^n~ (\text{resp. }\mathbb{Z}_4)~|~\mathbf{x}\cdot \mathbf{c}=0\text{ for every }\mathbf{c}\in\mathcal{C}\}$ where $\cdot$ is the standard dot product. For an $[n, k]$ code $\mathcal{C}$ over $\mathbb{F}_2$, $\dim\mathcal{C}^\perp=n-k$. For a code $\mathcal{C}$ over $\mathbb{Z}_4$ with length $n$ and type $4^{k_1}2^{k_2}$, the type of $\mathcal{C}^\perp$ is $4^{n-k_1-k_2}2^{k_2}$. If $\mathcal{C}\subseteq\mathcal{C}^\perp$, then $\mathcal{C}$ is called a \textit{self-orthogonal code}, and if $\mathcal{C}=\mathcal{C}^\perp$, then it is called a \textit{self-dual code}. The \textit{hull} of $\mathcal{C}$ is defined as $\mathcal{C}\cap\mathcal{C}^\perp$, denoted by ${\rm Hull}(\mathcal{C})$. If ${\rm Hull}(\mathcal{C})=\{\mathbf{0}\}$, then $\mathcal{C}$ is called an \textit{LCD code}. The dimension of the hull of a code $\mathcal{C}$ over $\mathbb{F}_2$ can be obtained using the following theorem.
\begin{theorem}[\cite{LZ-2019}]
    Let $\mathcal{C}$ be an $[n, k]$ code over $\mathbb{F}_2$ with $\dim {\rm Hull}(\mathcal{C})=\ell$. Then
    \[
    \ell=k-\text{rank}(GG^T)
    \]
    where $G$ is a generator matrix for $\mathcal{C}$.
\end{theorem}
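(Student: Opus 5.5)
The plan is to identify the hull of $\mathcal{C}$ with the kernel of a linear map on the message space and then apply rank–nullity. Let $G$ be a $k\times n$ generator matrix of $\mathcal{C}$ over $\mathbb{F}_2$. Since the rows of $G$ form a basis, the encoding map $\varphi:\mathbb{F}_2^k\to\mathcal{C}$, $\mathbf{x}\mapsto \mathbf{x}G$, is a linear isomorphism. Hence $\ell=\dim{\rm Hull}(\mathcal{C})=\dim \varphi^{-1}({\rm Hull}(\mathcal{C}))$, and it suffices to describe the preimage of the hull inside $\mathbb{F}_2^k$.

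The key step is the characterization
\[
\varphi^{-1}({\rm Hull}(\mathcal{C}))=\{\mathbf{x}\in\mathbb{F}_2^k \mid \mathbf{x}GG^T=\mathbf{0}\}.
\]
Indeed, $\mathbf{x}G$ lies in $\mathcal{C}^\perp$ if and only if it is orthogonal to every codeword of $\mathcal{C}$. By linearity of the dot product, this holds if and only if it is orthogonal to every row $\mathbf{g}_i$ of $G$. The conditions $(\mathbf{x}G)\cdot\mathbf{g}_i=0$ for $i=1,\dots,k$ are precisely the coordinates of $\mathbf{x}GG^T=\mathbf{0}$. Every $\mathbf{x}G$ already lies in $\mathcal{C}$, so membership in $\mathcal{C}\cap\mathcal{C}^\perp$ reduces to this single condition.

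Finally, $\{\mathbf{x}\mid\mathbf{x}GG^T=\mathbf{0}\}$ is the left kernel of the $k\times k$ matrix $GG^T$. By rank–nullity it has dimension $k-\operatorname{rank}(GG^T)$, and therefore $\ell=k-\operatorname{rank}(GG^T)$. For completeness, one can also check that the result does not depend on the choice of $G$: any other generator matrix is $G'=PG$ with $P$ invertible, so $G'G'^T=PGG^TP^T$ has the same rank.

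There is no real obstacle here. The only point requiring care is the reduction from "orthogonal to all of $\mathcal{C}$" to "orthogonal to the rows of $G$", and this follows from bilinearity of the dot product. The argument works verbatim over any field. It does not transfer directly to $\mathbb{Z}_4$, where rank must be replaced by the type of $\langle GG^T\rangle$, which is presumably why the paper's later bounds are phrased in terms of that type.
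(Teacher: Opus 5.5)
Your proof is correct. The paper gives no proof of this statement and simply quotes it from Li--Zeng, but your argument is exactly the one the paper uses for its $\mathbb{Z}_4$ analogue in Proposition~\ref{Hull-type}: pull the hull back along the encoding isomorphism $\mathbf{x}\mapsto\mathbf{x}G$, identify it with the null space of $GG^T$, and then replace rank--nullity by the type of the dual of $\langle GG^T\rangle$.
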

Hence, $\mathcal{C}$ is self-orthogonal if and only if $GG^T=\mathcal{O}$, and $\mathcal{C}$ is LCD if and only if $GG^T$ is invertible.

The \textit{Hamming weight} of a vector $\mathbf{x}\in\mathbb{F}_2^n$ (resp. $\mathbb{Z}_4^n$), denoted by ${\rm wt}(\mathbf{x})$ is the number of nonzero coordinates in $\mathbf{x}$. For two vectors $\mathbf{x}, \mathbf{y}$ in $\mathbb{F}_2^n$ (resp. $\mathbb{Z}_4^n$), the \textit{Hamming distance} between $\mathbf{x}$ and $\mathbf{y}$ is given as $d(\mathbf{x}, \mathbf{y})={\rm wt}(\mathbf{x}-\mathbf{y})$. The \textit{minimum (Hamming) distance} of a code $\mathcal{C}$, denoted by $d(\mathcal{C})$, is the minimum of the (Hamming) distances between any two distinct codewords in $\mathcal{C}$. For a code $\mathcal{C}$ over $\mathbb{F}_2$, if every codeword of a code $\mathcal{C}$ has even (Hamming) weight, then $\mathcal{C}$ is called an \textit{even code}. Otherwise, it is called an \textit{odd-like code}. For an even code $\mathcal{C}$, if every codeword of $\mathcal{C}$ has weight divisible by four, then it is called a \textit{doubly even code}. For a vector $\mathbf{x}\in\mathbb{Z}_4^n$, let $n_a(\mathbf{x})$ denote the number of components of $\mathbf{x}$ equal to $a$ for $a\in\mathbb{Z}_4$. The \textit{Lee weight} of $\mathbf{x}$ is ${\rm wt}_L(\mathbf{x})=n_1(\mathbf{x})+2n_2(\mathbf{x})+n_3(\mathbf{x})$. For two vectors $\mathbf{x}, \mathbf{y}\in\mathbb{Z}_4^n$, the \textit{Lee distance} between $\mathbf{x}$ and $\mathbf{y}$ is defined as $d_L(\mathbf{x}, \mathbf{y})={\rm wt}_L(\mathbf{x}-\mathbf{y})$.

\begin{theorem}[\cite{CMTQ-2018-1}]\label{LCDbasis}
Let $\mathcal{C}$ be an even code over $\mathbb{F}_2$. Then $\mathcal{C}$ is LCD if and only if the dimension $k$ of $\mathcal{C}$ is even and there exists a basis $\mathbf{c}_1, \mathbf{c}_1', \mathbf{c}_2, \mathbf{c}_2',\ldots, \mathbf{c}_{k/2}, \mathbf{c}_{k/2}'$ of $\mathcal{C}$ such that for any $i, j\in\{1, 2, \ldots, k/2\}$, the following holds:
\begin{enumerate}
    \item[(i)] $\mathbf{c}_i\cdot\mathbf{c}_j=\mathbf{c}_i'\cdot\mathbf{c}_j'=0$,
    \item[(ii)] $\mathbf{c}_i\cdot\mathbf{c}_j'=0$ if $i\ne j$, and
    \item[(iii)] $\mathbf{c}_i\cdot\mathbf{c}_i'=1$.
\end{enumerate}
\end{theorem}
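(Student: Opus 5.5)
The statement is the characterization of Theorem~\ref{LCDbasis}: an even binary code $\mathcal{C}$ is LCD if and only if $k$ is even and $\mathcal{C}$ has a symplectic basis. The plan is to translate everything into the Gram matrix $B=GG^T$ over $\mathbb{F}_2$ and use the theorem of \cite{LZ-2019}. That theorem gives $\dim{\rm Hull}(\mathcal{C})=k-\mathrm{rank}(GG^T)$, so $\mathcal{C}$ is LCD if and only if $B$ is invertible. Since $\mathcal{C}$ is even, every $\mathbf{c}\in\mathcal{C}$ satisfies $\mathbf{c}\cdot\mathbf{c}=\mathrm{wt}(\mathbf{c})\bmod 2=0$. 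Hence $B$ is symmetric with zero diagonal, i.e.\ alternating, and the bilinear form $(\mathbf{x},\mathbf{y})\mapsto\mathbf{x}\cdot\mathbf{y}$ restricted to $\mathcal{C}$ is an alternating form.

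For the ``if'' direction, take the basis $\mathbf{c}_1,\mathbf{c}_1',\dots,\mathbf{c}_{k/2},\mathbf{c}_{k/2}'$. By (i)--(iii), the Gram matrix in this basis is block diagonal with $k/2$ blocks $\begin{pmatrix}0&1\\1&0\end{pmatrix}$. This matrix is invertible, so $\mathcal{C}$ is LCD. Evenness is not even needed here, except that (i) already forces zero diagonal.

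For the ``only if'' direction, assume $B$ is invertible. I will build the basis by the standard symplectic Gram--Schmidt induction on $k$.
\begin{enumerate}
\item If $k=0$ there is nothing to prove, and $k=1$ is impossible because a $1\times1$ alternating matrix is $0$.
\item Pick any nonzero $\mathbf{c}_1\in\mathcal{C}$. Since the form is nondegenerate on $\mathcal{C}$, some $\mathbf{c}_1'\in\mathcal{C}$ satisfies $\mathbf{c}_1\cdot\mathbf{c}_1'=1$. Also $\mathbf{c}_1\cdot\mathbf{c}_1=\mathbf{c}_1'\cdot\mathbf{c}_1'=0$ by evenness, so $\mathbf{c}_1$ and $\mathbf{c}_1'$ are linearly independent.
\item Let $H=\langle\mathbf{c}_1,\mathbf{c}_1'\rangle$ and $W=\{\mathbf{x}\in\mathcal{C}:\mathbf{x}\cdot\mathbf{c}_1=\mathbf{x}\cdot\mathbf{c}_1'=0\}$. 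The map $\mathbf{x}\mapsto\mathbf{x}-(\mathbf{x}\cdot\mathbf{c}_1')\mathbf{c}_1-(\mathbf{x}\cdot\mathbf{c}_1)\mathbf{c}_1'$ projects $\mathcal{C}$ onto $W$, which gives $\mathcal{C}=H\oplus W$. Nondegeneracy on $H$ gives $H\cap W=0$.
\item $W$ is an even code of dimension $k-2$, and its form is nondegenerate. Indeed, a vector of $W$ orthogonal to all of $W$ is also orthogonal to $H$, hence to all of $\mathcal{C}$, so it is $0$.
\end{enumerate}
Induction applied to $W$ then yields the remaining pairs and shows that $k-2$, hence $k$, is even. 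Conditions (i)--(iii) hold across blocks by construction of $W$.

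The main subtlety is step~2: using evenness so that $\mathbf{c}\cdot\mathbf{c}=0$ for every codeword, which makes the form alternating. Without it, the argument fails (odd LCD codes of odd dimension exist). Apart from that, the proof is routine linear algebra.
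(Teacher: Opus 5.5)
The paper gives no proof of this statement. It is quoted from \cite{CMTQ-2018-1} and used as a black box, mainly to know that an even LCD code has even dimension before Lemma~\ref{LCDdoublyevenembedding}. So there is no in-paper argument to compare against.

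Your proposal is a correct, self-contained proof. Evenness makes the restriction of the dot product to $\mathcal{C}$ alternating, so ``LCD'' becomes ``this alternating form is nondegenerate''. The statement then reduces to the classical existence of a symplectic basis.

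Your induction carries this out correctly:
\begin{itemize}
\item $\mathbf{c}_1,\mathbf{c}_1'$ are independent because $\mathbf{c}_1\cdot\mathbf{c}_1=0\neq\mathbf{c}_1\cdot\mathbf{c}_1'$.
\item The map $\mathbf{x}\mapsto\mathbf{x}-(\mathbf{x}\cdot\mathbf{c}_1')\mathbf{c}_1-(\mathbf{x}\cdot\mathbf{c}_1)\mathbf{c}_1'$ does land in $W$, using $\mathbf{c}_1\cdot\mathbf{c}_1=\mathbf{c}_1'\cdot\mathbf{c}_1'=0$ and $\mathbf{c}_1\cdot\mathbf{c}_1'=1$.
\item $H\cap W=\{\mathbf{0}\}$ holds.
\item $W\cap W^\perp=\{\mathbf{0}\}$ follows from $\mathcal{C}=H\oplus W$.
\end{itemize}
Hence $W$ is again an even LCD code, and the induction hypothesis applies. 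The ``if'' direction via the block-diagonal Gram matrix and \cite{LZ-2019} is also fine, and you are right that evenness is not used there.

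One simplification: the ``only if'' direction needs neither the Gram matrix nor \cite{LZ-2019}. The condition $\mathcal{C}\cap\mathcal{C}^\perp=\{\mathbf{0}\}$ directly gives each nonzero $\mathbf{c}_1\in\mathcal{C}$ a partner $\mathbf{c}_1'\in\mathcal{C}$ with $\mathbf{c}_1\cdot\mathbf{c}_1'=1$.

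Your viewpoint is also exactly the one the paper exploits later. In Remark~\ref{evenArf}, $f$ is a quadratic form on the even code whose associated bilinear form is this alternating form. Your symplectic basis is the standard starting point for the Arf-invariant classification the paper invokes through \cite{A-1941}.
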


For a code $\mathcal{C}$ over $\mathbb{Z}_4$, the \textit{residue code} of $\mathcal{C}$, denoted by $\text{Res}(\mathcal{C})$, is a code over $\mathbb{F}_2$ defined as $\text{Res}(\mathcal{C})=\mathcal{C}\pmod 2$. The \textit{torsion code} of $\mathcal{C}$, denoted by $\text{Tor}(\mathcal{C})$ is defined as $\text{Tor}(\mathcal{C})=\{\mathbf{x}\in\mathbb{F}_2^n~|~2\mathbf{x}\in\mathcal{C}\}$. It is known that $\text{Res}(\mathcal{C})^\perp=\text{Tor}(\mathcal{C}^\perp)$ and $\text{Tor}(\mathcal{C})^\perp=\text{Res}(\mathcal{C}^\perp)$.

Next, we introduce the notion of a shortest self-orthogonal embedding of a linear code $\mathcal{C}$ over $\mathbb{F}_2$ and some related results. Detailed information on shortest self-orthogonal embeddings can be found in~\cite{AKKLW-arXiv}.

For an $[n, k]$ code $\mathcal{C}$ over $\mathbb{F}_2$, a \textit{self-orthogonal embedding} $\mathcal{C}'$ of $\mathcal{C}$ is an $[n', k]$ self-orthogonal code over $\mathbb{F}_2$ such that $n'\ge n$ and $\mathcal{C}$ is a punctured code of $\mathcal{C}'$. A \textit{shortest self-orthogonal embedding} of $\mathcal{C}$ is a self-orthogonal embedding of $\mathcal{C}$ with the smallest length among all self-orthogonal embeddings of $\mathcal{C}$. We denote the length of a shortest self-orthogonal embedding of $\mathcal{C}$ by $n_s(\mathcal{C})$.

\begin{theorem}[\cite{AKKLW-arXiv}]\label{shortestSOlcd}
    Let $\mathcal{C}$ be a code over $\mathbb{F}_2$ with generator matrix
    \[
     G=\begin{bmatrix}
        G({\rm Hull}(\mathcal{C}))\\A
    \end{bmatrix}.
    \]
    Then $n_s(\mathcal{C})=n_s(\langle A\rangle)$.
\end{theorem}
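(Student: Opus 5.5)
Write $H=G({\rm Hull}(\mathcal{C}))$, with $h$ rows, and $A$ with $r$ rows, so $G=\begin{bmatrix}H\\ A\end{bmatrix}$ has $k=h+r$ rows. The plan is to prove the two inequalities $n_s(\mathcal{C})\ge n_s(\langle A\rangle)$ and $n_s(\mathcal{C})\le n_s(\langle A\rangle)$ separately. One bookkeeping point matters throughout: $n_s$ denotes the total length of the embedding, while $\mathcal{C}$ has length $n$ and $\langle A\rangle$ also has length $n$. So both sides are compared through the number of appended columns.

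The lower bound is the easy direction. Let $\mathcal{C}'$ be a shortest self-orthogonal embedding of $\mathcal{C}$, with generator matrix $[G\mid X]$ where $X$ has $m=n_s(\mathcal{C})-n$ columns. The bottom $r$ rows $[A\mid X_A]$ generate a subcode of $\mathcal{C}'$, so it is self-orthogonal. Its dimension is $r$, because puncturing it gives $\langle A\rangle$, which has dimension $r$. Hence it is a self-orthogonal embedding of $\langle A\rangle$ of length $n+m$, and $n_s(\langle A\rangle)\le n_s(\mathcal{C})$.

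For the upper bound, take a shortest self-orthogonal embedding $[A\mid Y]$ of $\langle A\rangle$, with $m=n_s(\langle A\rangle)-n$ appended columns, so that $AA^T+YY^T=\mathcal{O}$. I will pad the hull rows with zeros and consider
\[
G'=\begin{bmatrix} H & \mathcal{O}\\ A & Y\end{bmatrix}.
\]
Then
\[
G'G'^T=\begin{bmatrix} HH^T & HA^T\\ AH^T & AA^T+YY^T\end{bmatrix}.
\]
Since the rows of $H$ lie in ${\rm Hull}(\mathcal{C})\subseteq\mathcal{C}^\perp$, they are orthogonal to every row of $G$, so $HH^T=\mathcal{O}$ and $HA^T=\mathcal{O}$. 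The bottom-right block vanishes by the choice of $Y$. Hence $G'G'^T=\mathcal{O}$. The rows of $G'$ are linearly independent because the punctured rows $G$ already are. Therefore $\langle G'\rangle$ is a self-orthogonal $[n+m,k]$ code that punctures to $\mathcal{C}$, which gives $n_s(\mathcal{C})\le n_s(\langle A\rangle)$.

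I do not expect a real obstacle. The points needing care are checking that the relevant subcodes keep full dimension, and noting that the embedding definition allows $n'\ge n$, so zero-padding the hull rows is permitted. Self-orthogonality is checked through $G'G'^T=\mathcal{O}$, which is equivalent to it by the criterion following the theorem of \cite{LZ-2019}. Combining the two inequalities gives $n_s(\mathcal{C})=n_s(\langle A\rangle)$.
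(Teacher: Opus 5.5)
Your proof is correct. Both inequalities are established properly, and the step you leave implicit is a standard normalization that the paper also makes (e.g.\ in the proof of Corollary~\ref{SOembeddingres}): a shortest embedding of $\mathcal{C}$ may be taken with generator matrix $[G\mid X]$, since puncturing is a bijection between codes of equal dimension and permuting coordinates is harmless. The paper itself only cites this result from \cite{AKKLW-arXiv} and gives no proof, but your two moves are exactly the devices it uses in the proofs of Propositions~\ref{deequals} and~\ref{doublyevengen}: restricting to the rows $[A\mid X_A]$, and zero-padding the hull rows of an embedding of $\langle A\rangle$.
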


\begin{theorem}[\cite{AKKLW-arXiv}]\label{shortestSO}
    Let $\mathcal{C}$ be an $[n, k]$ code over $\mathbb{F}_2$ with $\dim{\rm{Hull}}(\mathcal{C})=\ell$. If $\mathcal{C}$ is an even code, then $n_s(\mathcal{C})=n+k-\ell+1$. Otherwise, if $\mathcal{C}$ is an odd-like code, then $n_s(\mathcal{C})=n+k-\ell$.
\end{theorem}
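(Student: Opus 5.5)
Since the statement is quoted from \cite{AKKLW-arXiv}, I will reprove it in two halves. The lower bound comes from a rank count on $GG^T$. The upper bound comes from an explicit construction of added columns.

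\textbf{Reduction and lower bound.} By Theorem~\ref{shortestSOlcd}, it suffices to treat the code $\langle A\rangle$ of dimension $r=k-\ell$. Its hull is trivial, so $AA^T$ is an invertible symmetric $r\times r$ matrix. Suppose $[G\mid X]$ generates a self-orthogonal embedding of length $n+m$. Then $XX^T=GG^T$, and this matrix has rank $k-\ell$ by the hull theorem of \cite{LZ-2019}. Hence $m\ge \operatorname{rank}(XX^T)=k-\ell$.

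For the even case I need one more column. The diagonal of $GG^T$ records the weights of the rows mod $2$, so if $\mathcal{C}$ is even this diagonal is zero. Hence $XX^T$ is an alternating matrix, so every row of $X$ has even weight. All rows of $X$ then lie in the even-weight subspace $\mathbf{1}^\perp\subseteq\mathbb{F}_2^m$.

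Now suppose $m=k-\ell$. The $r\times m$ block $X$ restricted to the $A$-rows is then square and invertible, so its rows span $\mathbb{F}_2^m$. This is impossible, since those rows lie in $\mathbf{1}^\perp\ne \mathbb{F}_2^m$. Therefore $m\ge k-\ell+1$ in the even case.

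\textbf{Upper bound, odd-like case.} Here I need a matrix $X$ with $r$ columns and $XX^T=AA^T$. A symmetric invertible binary matrix that is not alternating is congruent to the identity. The diagonal is nonzero, because $\mathcal{C}$ is odd-like and the hull rows are even, so the odd rows must occur among the rows of $A$ after adjusting the basis. Thus $AA^T=PP^T$ with $P$ invertible, and I take $X=P$. Adding the hull rows with zero extension keeps the code self-orthogonal, and the extended rows stay independent.

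\textbf{Upper bound, even case.} Now $AA^T$ is alternating and invertible, so by Theorem~\ref{LCDbasis} it is congruent to a direct sum of $r/2$ copies of the hyperbolic block $\begin{bmatrix}0&1\\1&0\end{bmatrix}$. I must realise this form with $r+1$ columns. Choose rows $x_i,x_i'\in\mathbb{F}_2^{r+1}$ as follows:
\[
x_i=e_{2i-1}+e_{2i},\qquad x_i'=e_{2i}+e_{r+1}.
\]
Then every row has weight $2$, and one checks $x_i\cdot x_j=0$, $x_i'\cdot x_j'=0$ for $i\ne j$ (they share $e_{r+1}$, which gives $1$) — this pairing fails. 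So I will instead patch by adding $e_{r+1}$ to several rows, or I will use the standard fact that a hyperbolic space of dimension $r$ embeds isometrically into $\mathbb{F}_2^{r+1}$ with the dot product. This holds because $\mathbf{1}^\perp/\langle\mathbf{1}\rangle$ for odd length $r+1$ is a nondegenerate alternating space of dimension $r$, and any complement of $\langle\mathbf{1}\rangle$ in $\mathbf{1}^\perp$ carries it isometrically.

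\textbf{Main obstacle.} The hardest step is this last isometric realisation, together with checking that the resulting rows combine with $G$ to give independent rows. Independence is automatic, since $G$ already has rank $k$.
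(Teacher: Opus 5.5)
The paper does not prove this theorem; it cites it from \cite{AKKLW-arXiv}. Judged on its own, your proof has a genuine gap in the even-case upper bound, which is exactly where the theorem's ``$+1$'' comes from.

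\textbf{What is sound.} Your lower bounds are correct. In general $m\ge\operatorname{rank}(XX^T)=k-\ell$. In the even case the rows of $X$ lie in $\mathbf{1}^\perp$, so the square block with Gram matrix $AA^T$ cannot be invertible when $m=k-\ell$. Your odd-like upper bound is also correct, given the classical fact that a non-alternating invertible symmetric binary matrix has the form $PP^T$.

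\textbf{Where it fails.} You rightly abandon your explicit rows, since $x_i'\cdot x_j'=1$ for $i\ne j$. But the replacement justification is wrong. Because $r$ is even, the length $r+1$ is odd, so $\mathbf{1}\cdot\mathbf{1}=1$ and $\langle\mathbf{1}\rangle\not\subseteq\mathbf{1}^\perp$. The quotient $\mathbf{1}^\perp/\langle\mathbf{1}\rangle$ is therefore not defined. Even if it were, a quotient of the $r$-dimensional space $\mathbf{1}^\perp$ by a line would have odd dimension $r-1$, which cannot carry a nondegenerate alternating form.

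What you actually describe, a complement of the radical $\langle\mathbf{1}\rangle$ inside $\mathbf{1}^\perp$, is the even-length picture. In $\mathbb{F}_2^N$ with $N$ even it gives a nondegenerate alternating space of dimension $N-2$. To host dimension $r$ you would need $N=r+2$, which only proves $n_s(\mathcal{C})\le n+k-\ell+2$. As written, no valid $X$ with $r+1$ columns is produced.

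\textbf{The missing observation.} In odd length no quotient is needed. The space $E_{r+1}=\mathbf{1}^\perp$ has dimension $r$, and its radical is $E_{r+1}\cap\langle\mathbf{1}\rangle=\{\mathbf{0}\}$ because $\mathbf{1}$ has odd weight. So $E_{r+1}$ is itself an even LCD code; the paper treats $E_{2m+1}$ this way in Section 3.

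By Theorem~\ref{LCDbasis}, $E_{r+1}$ has a basis, the rows of some $W$, with $WW^T$ a direct sum of $r/2$ hyperbolic blocks. Explicitly, $c_i=e_{2i-1}+e_{2i}$ and $c_i'=e_{2i}+e_{2i+1}+\cdots+e_{r+1}$ for $1\le i\le r/2$ work. Both $AA^T$ and $WW^T$ are congruent to that same block sum, so $AA^T=QWW^TQ^T$ for some invertible $Q$. Then $X=QW$ has $r+1$ columns and $XX^T=AA^T$.

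Equivalently, take $R=[I_r\mid\mathbf{1}^T]$ as in Proposition~\ref{evenalg}. Its Gram matrix $RR^T=I_r+J_r$ is alternating and squares to $I_r$, hence is congruent to $AA^T$.

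\textbf{Edge case.} The even-case formula is false when $\ell=k$, since then $n_s(\mathcal{C})=n$. Your lower-bound step silently uses $r\ge1$ to get $\mathbf{1}^\perp\ne\mathbb{F}_2^m$. You should add the hypothesis that $\mathcal{C}$ is not self-orthogonal, which is the only setting in which the paper uses this result.
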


\begin{theorem}[\cite{AKKLW-arXiv}]\label{shortestSOgen}
    Let $\mathcal{C}'$ be a shortest self-orthogonal embedding of a code $\mathcal{C}$ over $\mathbb{F}_2$. Then $\mathcal{C}'$ has a generator matrix
    \[
    G'=\begin{bmatrix}
        G({\rm Hull}(\mathcal{C})) & \mathcal{O}\\A & B
    \end{bmatrix}
    \]
    where $\mathcal{O}$ is the zero matrix and \[
    G=\begin{bmatrix}
        G({\rm Hull}(\mathcal{C}))\\A
    \end{bmatrix}
    \]
    is a generator matrix of $\mathcal{C}$.
\end{theorem}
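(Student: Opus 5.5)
We will show directly that the given $\mathcal{C}'$ contains every hull codeword of $\mathcal{C}$ extended by zeros. The whole argument is a dimension count on the appended coordinates, using the value of $n_s(\mathcal{C})$ from Theorem~\ref{shortestSO}.

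\textbf{Setup.} Let $\mathcal{C}$ be an $[n,k]$ code with $\dim{\rm Hull}(\mathcal{C})=\ell$. Let $\mathcal{C}'$ be a shortest self-orthogonal embedding of $\mathcal{C}$, of length $n'=n+r$. After a column permutation we may assume $\mathcal{C}$ is obtained from $\mathcal{C}'$ by deleting the last $r$ coordinates. Since $\dim\mathcal{C}'=k=\dim\mathcal{C}$, this puncturing map is a bijection $\mathcal{C}'\to\mathcal{C}$. Its inverse therefore has the form $\mathbf{c}\mapsto(\mathbf{c}\mid\varphi(\mathbf{c}))$ for a uniquely determined linear map $\varphi:\mathcal{C}\to\mathbb{F}_2^r$. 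Self-orthogonality of $\mathcal{C}'$ says exactly that
\[
\mathbf{c}\cdot\mathbf{d}=\varphi(\mathbf{c})\cdot\varphi(\mathbf{d})\quad\text{for all }\mathbf{c},\mathbf{d}\in\mathcal{C}.
\]

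\textbf{Step 1: $\ker\varphi\subseteq{\rm Hull}(\mathcal{C})$.} Let $D_0=\ker\varphi$. If $\varphi(\mathbf{c})=\mathbf{0}$, then $\mathbf{c}\cdot\mathbf{d}=0$ for every $\mathbf{d}\in\mathcal{C}$, so $\mathbf{c}\in{\rm Hull}(\mathcal{C})$. Write $\dim D_0=\ell-t$ with $t\ge 0$.

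\textbf{Step 2: compute the hull of $E$.} Let $E=\varphi(\mathcal{C})\subseteq\mathbb{F}_2^r$, so $\dim E=k-\ell+t$. The map $\varphi$ induces an isometric isomorphism $\mathcal{C}/D_0\cong E$ with respect to the dot products. Hence the Gram matrix of $E$ has the same rank as that of $\mathcal{C}$, namely $k-\ell$, by the theorem of \cite{LZ-2019}. Applying the same theorem to $E$ gives
\[
\dim{\rm Hull}(E)=(k-\ell+t)-(k-\ell)=t.
\]

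\textbf{Step 3: force $t=0$.} Since $E\subseteq{\rm Hull}(E)^\perp$, we have $\dim E+\dim{\rm Hull}(E)\le r$, that is, $k-\ell+2t\le r$. Because $\mathcal{C}'$ is shortest, Theorem~\ref{shortestSO} gives $r=n_s(\mathcal{C})-n\le k-\ell+1$. Combining the two inequalities yields $2t\le 1$, so $t=0$ and $D_0={\rm Hull}(\mathcal{C})$.

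\textbf{Step 4: read off the generator matrix.} Choose a basis of ${\rm Hull}(\mathcal{C})$ and let it form the rows of $G({\rm Hull}(\mathcal{C}))$. Extend this to a basis of $\mathcal{C}$ by rows forming a matrix $A$, so that $\begin{bmatrix}G({\rm Hull}(\mathcal{C}))\\A\end{bmatrix}$ is a generator matrix of $\mathcal{C}$. Lifting each row $\mathbf{c}$ to $(\mathbf{c}\mid\varphi(\mathbf{c}))$ gives a basis of $\mathcal{C}'$. By Step 3 the hull rows lift with zero tail, and the rows of $A$ lift with tails forming some matrix $B$. Thus $\mathcal{C}'$ has a generator matrix of the required form $G'$.

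The only nontrivial point is Step 3. There the upper bound $r\le k-\ell+1$ from Theorem~\ref{shortestSO} must be played against the lower bound $k-\ell+2t$ coming from the hull of $E$. The rest is bookkeeping.
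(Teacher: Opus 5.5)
Your proof is correct. The paper gives no proof of this statement; it is quoted from \cite{AKKLW-arXiv}. The closest in-paper argument is the matrix reasoning in the proof of Theorem~\ref{doublyevenod}. There one writes a generator matrix of $\mathcal{C}'$ as $\begin{bmatrix}G({\rm Hull}(\mathcal{C})) & C\\ A & B\end{bmatrix}$ and notes that $BB^T=AA^T$ is invertible. Hence $\langle B\rangle$ is an LCD code of dimension $k-\ell$ inside $\mathbb{F}_2^r$ with $r\le k-\ell+1$. So $\langle B\rangle^\perp$ is either $\{\mathbf{0}\}$ or spanned by a single $\mathbf{u}$ with $\mathbf{u}\cdot\mathbf{u}=1$, and the rows of $C$, which satisfy $CB^T=\mathcal{O}$ and $CC^T=\mathcal{O}$, are forced to vanish.

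Your argument is a basis-free form of the same dimension count. Instead of fixing a complement $A$ and first proving $\langle B\rangle$ is LCD, you measure the defect $t=\ell-\dim\ker\varphi$ directly. You show ${\rm Hull}(E)=\varphi({\rm Hull}(\mathcal{C}))$ has dimension $t$, which gives $k-\ell+2t\le r$.

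What your version buys:
\begin{itemize}
\item The conclusion $\ker\varphi={\rm Hull}(\mathcal{C})$ is intrinsic. The block form therefore holds for every basis of the hull and every complement $A$, which is how the theorem is used, e.g.\ in Proposition~\ref{deequals}.
\item Steps 1 and 2 and the first inequality of Step 3 never use minimality. So $2t\le r-(k-\ell)$ holds for self-orthogonal embeddings of any length, and it gives $r\ge k-\ell$ for free.
\end{itemize}
The matrix version, in turn, is more concrete and fits directly into the block computations used elsewhere in the paper.

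Both arguments need only the upper bound $n_s(\mathcal{C})\le n+k-\ell+1$ from Theorem~\ref{shortestSO}, which is its constructive direction. That bound also holds trivially in the degenerate self-orthogonal case, where $r=0$.
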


Let $V$ be a finite dimensional vector space over $\mathbb{F}_2$. A function $q:V\to \mathbb{F}_2$ is called a \textit{quadratic form} if there exists a bilinear map $B:V\times V\to \mathbb{F}_2$ such that $q(x+y)+q(x)+q(y)=B(x, y)$ for every $x, y\in V$. If the bilinear map $B$ is nondegenerate, then $q$ is called \textit{nonsingular}. A \textit{quadratic space} $(V, q_V)$ is called \textit{nonsingular} if $q_V$ is a nonsingular quadratic form.

Let $V$ and $W$ be finite dimensional vector spaces over $\mathbb{F}_2$ with quadratic forms $q_V:V\to \mathbb{F}_2$ and $q_W:W\to \mathbb{F}_2$, respectively. Two quadratic spaces $(V, q_V)$ and $(W, q_W)$ are called \textit{isometric} if there is a linear isomorphism $T:V\to W$ such that $q_V(x)=q_W(T(x))$ for all $x\in V$.

Let $V$ be a vector space over $\mathbb{F}_2$ and let $q$ be a quadratic form on $V$. It is known that if $q$ is nonsingular, then one of the values $0$ or $1$ occurs more frequently than the other among the values of $q(x)$. The \textit{Arf invariant} ${\rm{Arf}}(q)\in\mathbb{F}_2$ of such a form $q$ is defined as this most frequently taken value as $x$ varies over $V$.

\begin{theorem}[\cite{A-1941}]\label{Arfiso}
Let $V$ and $W$ be finite dimensional vector spaces over $\mathbb{F}_2$ with quadratic forms $q_V:V\to \mathbb{F}_2$ and $q_W:W\to \mathbb{F}_2$ such that both $q_V$ and $q_W$ are nonsingular. Then $(V, q_V)$ and $(W, q_W)$ are isometric if and only if $\dim V=\dim W$ and ${\rm{Arf}}(q_V)={\rm{Arf}}(q_W)$.
\end{theorem}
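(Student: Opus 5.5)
The statement is Arf's classical theorem. I would prove it by reducing every nonsingular quadratic space over $\mathbb{F}_2$ to a normal form and then checking that the Arf invariant tells the normal forms apart.

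\emph{Necessity.} Suppose $T:V\to W$ is an isometry. Then $\dim V=\dim W$. Moreover, $T$ is a bijection carrying $q_V^{-1}(a)$ onto $q_W^{-1}(a)$ for each $a\in\mathbb{F}_2$. So the two forms take each value equally often, and ${\rm Arf}(q_V)={\rm Arf}(q_W)$.

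\emph{Sufficiency, step 1: symplectic structure.} Let $B$ be the bilinear form attached to $q$.
\begin{itemize}
\item Putting $x=y=0$ in the defining identity gives $q(0)=B(0,0)=0$.
\item Putting $y=x$ then gives $B(x,x)=q(2x)+2q(x)=q(0)=0$, so $B$ is alternating.
\end{itemize}
A nondegenerate alternating form forces $\dim V=2m$ to be even. Such a $B$ has a symplectic basis $e_1,f_1,\dots,e_m,f_m$, so $V$ is the $B$-orthogonal sum of the planes $P_i=\langle e_i,f_i\rangle$. Since $q(x+y)=q(x)+q(y)$ whenever $B(x,y)=0$, the form $q$ is additive across these orthogonal summands.

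\emph{Step 2: classification of planes.} On a plane $\langle e,f\rangle$ with $B(e,f)=1$ we have $q(e+f)=q(e)+q(f)+1$.
\begin{itemize}
\item If some nonzero vector has $q=0$, a change of symplectic basis gives $q(e)=q(f)=0$. This is the hyperbolic plane $H$, with values $0,0,1$ on its nonzero vectors.
\item Otherwise $q(e)=q(f)=q(e+f)=1$. This is the anisotropic plane $E$.
\end{itemize}
Hence $(V,q)$ is isometric to an orthogonal sum of copies of $H$ and $E$.

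\emph{Step 3: the key relation $E\perp E\cong H\perp H$.} I expect this to be the main technical step. Take $e_1,f_1,e_2,f_2$ with all four $q$-values equal to $1$, and set
\[
e_1'=e_1+e_2,\qquad f_1'=f_1+e_1+e_2 .
\]
Then
\[
q(e_1')=1+1+B(e_1,e_2)=0,\qquad q(f_1')=q(f_1)+q(e_1')+B(f_1,e_1+e_2)=1+0+1=0,\qquad B(e_1',f_1')=1 .
\]
So $\langle e_1',f_1'\rangle\cong H$.

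Its orthogonal complement $P$ is a nonsingular plane. To identify $P$, count zeros of $q$ on the whole $4$-dimensional space. In $E\perp E$ the zeros are $0$ and the $9$ sums $u+v$ with $u,v$ nonzero, one from each copy, so there are $10$ zeros. An orthogonal sum $H\perp P$ has $3\cdot z(P)+1\cdot(4-z(P))$ zeros, where $z(P)$ is the number of zeros of $q$ on $P$. This equals $10$ only for $z(P)=3$, that is, $P\cong H$. Iterating the relation, every nonsingular space is isometric to either $H^{m}$ or $H^{m-1}\perp E$.

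\emph{Step 4: Arf distinguishes the two normal forms.} Let $z(X)$ be the number of zeros of $q$ on $X$. For an orthogonal sum,
\[
z(X\perp Y)=z(X)z(Y)+(|X|-z(X))(|Y|-z(Y)).
\]
Induction with $z(H)=3$ and $z(E)=1$ gives $z(H^m)=2^{2m-1}+2^{m-1}$ and $z(H^{m-1}\perp E)=2^{2m-1}-2^{m-1}$. Thus ${\rm Arf}(H^m)=0$ and ${\rm Arf}(H^{m-1}\perp E)=1$, which also confirms that one value always strictly dominates.

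Finally, suppose $\dim V=\dim W$ and ${\rm Arf}(q_V)={\rm Arf}(q_W)$. Both spaces are isometric to the same normal form, hence to each other.
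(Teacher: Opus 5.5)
Your proof is correct. The paper gives no proof of this statement: it quotes it from \cite{A-1941} and uses it as a black box. It also asserts without proof, just before the theorem, that a nonsingular form takes one of its two values strictly more often than the other. Your argument supplies both facts. You follow the standard elementary route: a symplectic basis, splitting into hyperbolic planes $H$ and anisotropic planes $E$, the relation $E\perp E\cong H\perp H$, and the zero counts $2^{2m-1}\pm 2^{m-1}$ for the two normal forms. Your identification of the complement $P$ through $10=2z(P)+4$ is a tidy shortcut that avoids writing down a basis of $P$.

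Working with the paper's majority-value definition of ${\rm Arf}$ makes necessity a one-line observation. Had you used Arf's original definition $\sum_i q(e_i)q(f_i)$ relative to a symplectic basis, you would also have to prove independence from the chosen basis. Here all the work sits in the normal-form reduction and the count, at the cost of the argument being specific to $\mathbb{F}_2$.

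Your counts are exactly the numbers $t_m$ of Lemma~\ref{LCDdoublyevenembedding}, which are the odd-length case of Lemma~\ref{doublyevencodeword}. This is precisely how the paper uses the theorem: an even LCD code of dimension $2m$ has $2^{2m-1}\pm 2^{m-1}$ doubly even codewords, and the sign records its Arf invariant.
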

As a consequence, every $k$-dimensional nonsingular quadratic space is classified into exactly two types up to isometry, depending on whether its Arf invariant is zero or one.

\section{Shortest doubly even self-orthogonal embeddings of binary codes}
For a linear code $\mathcal{C}$ over $\mathbb{F}_2$, we define $\mathcal{C}'$ to be a \textit{doubly even self-orthogonal embedding} of $\mathcal{C}$ if $\mathcal{C}'$ is a self-orthogonal embedding of $\mathcal{C}$ such that $\mathcal{C}'$ is doubly even. A shortest doubly even self-orthogonal embedding of $\mathcal{C}$ is a doubly even self-orthogonal embedding of $\mathcal{C}$ with the minimum possible length. We denote the length of a shortest doubly even self-orthogonal embedding of $\mathcal{C}$ by $n_{de}(\mathcal{C})$.

\begin{proposition}
    Let $\mathcal{C}$ be an $[n, k]$ code over $\mathbb{F}_2$. Then
    \[
    n_s(\mathcal{C})\le n_{de}(\mathcal{C})\le n_s(\mathcal{C})+2.
    \]
\end{proposition}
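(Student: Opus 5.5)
The lower bound is immediate. Every doubly even self-orthogonal embedding of $\mathcal{C}$ is in particular a self-orthogonal embedding of $\mathcal{C}$, so $n_s(\mathcal{C})\le n_{de}(\mathcal{C})$. The content of the proposition is the upper bound. For it I would start from a shortest self-orthogonal embedding and add at most two columns so that every row of the generator matrix gets weight divisible by four, without destroying orthogonality.

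Let $\mathcal{C}'$ be a shortest self-orthogonal embedding, of length $n_s(\mathcal{C})$, with $k\times n_s(\mathcal{C})$ generator matrix $G'$ whose first $n$ columns give a generator matrix of $\mathcal{C}$. Since $\mathcal{C}'$ is self-orthogonal, every row has even weight, so each row weight is $0$ or $2 \pmod 4$. Define a column vector $\mathbf{v}\in\mathbb{F}_2^k$ by $v_i=0$ if the $i$-th row has weight $\equiv 0\pmod 4$, and $v_i=1$ if it has weight $\equiv 2 \pmod 4$. Now append the two columns $\mathbf{v},\mathbf{v}$ to obtain $G''=[G'\mid \mathbf{v}\mid\mathbf{v}]$.

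Then $G''G''^T=G'G'^T+2\mathbf{v}\mathbf{v}^T=\mathcal{O}$ over $\mathbb{F}_2$, so $\langle G''\rangle$ is self-orthogonal. Its rank is still $k$, because $G'$ is a submatrix, and puncturing the new coordinates (and the coordinates added in $\mathcal{C}'$) recovers $\mathcal{C}$. Each row with $v_i=1$ gains weight $2$, so every row weight becomes $\equiv 0\pmod 4$.

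It remains to pass from the rows to all codewords. I would use the standard fact that a self-orthogonal binary code generated by doubly even vectors is doubly even, via
\[
{\rm wt}(\mathbf{x}+\mathbf{y})={\rm wt}(\mathbf{x})+{\rm wt}(\mathbf{y})-2|\mathbf{x}\cap\mathbf{y}|.
\]
Here $|\mathbf{x}\cap\mathbf{y}|$ is even because $\mathbf{x}\cdot\mathbf{y}=0$, so the sum of two orthogonal doubly even vectors is doubly even, and induction on the number of generators finishes. Hence $\langle G''\rangle$ is a doubly even self-orthogonal embedding of length $n_s(\mathcal{C})+2$, which gives $n_{de}(\mathcal{C})\le n_s(\mathcal{C})+2$.

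No step is a real obstacle. The only points to check carefully are that the embedding still has dimension $k$ and that the doubly even property is checked on all codewords rather than just on the generators; the identity above handles the latter.
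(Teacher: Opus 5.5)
Your proposal is correct and is essentially the paper's proof: the paper likewise takes a generator matrix of a shortest self-orthogonal embedding and appends $11$ to each row whose weight is not divisible by $4$ and $00$ to the remaining rows. You also spell out two details the paper leaves implicit: self-orthogonality is preserved because $G''G''^T=G'G'^T$, and a self-orthogonal code with doubly even generators is doubly even.
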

\begin{proof}
    If there is a shortest self-orthogonal embedding of $\mathcal{C}$ which is doubly even, then $n_s(\mathcal{C})=n_{de}(\mathcal{C})$. If not, then take a shortest self-orthogonal embedding $\mathcal{C}'$ of $\mathcal{C}$ with generator matrix $G'$. For each row $g$ of $G'$, if ${\rm{wt}}(g)$ is not divisible by $4$, then concatenate $11$ to $g$. On the other hand, if ${\rm{wt}}(g)$ is divisible by $4$, then concatenate $00$ to $g$. The resulting matrix $G''$ generates a doubly even self-orthogonal code of length $n_s(\mathcal{C})+2$, which implies $n_{de}(\mathcal{C})\le n_s(\mathcal{C})+2$.
\end{proof}

\begin{proposition}\label{deequals}
    Let $\mathcal{C}$ be an $[n, k]$ code over $\mathbb{F}_2$ with generator matrix
    \[
    G=\begin{bmatrix}
        G(\mathrm{Hull}(\mathcal{C}))\\A
    \end{bmatrix}.
    \]
    Then $n_{de}(\mathcal{C})=n_s(\mathcal{C})$ if and only if $\mathrm{Hull}(\mathcal{C})$ is doubly even and there is a shortest self-orthogonal embedding of $\langle A\rangle$ which is doubly even.
\end{proposition}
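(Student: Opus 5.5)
We prove the two directions separately. Both rest on Theorem~\ref{shortestSOgen}, which describes the shape of every shortest self-orthogonal embedding, and on Theorem~\ref{shortestSOlcd}, which gives $n_s(\mathcal{C})=n_s(\langle A\rangle)$.

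\emph{($\Rightarrow$)} Suppose $n_{de}(\mathcal{C})=n_s(\mathcal{C})$, and let $\mathcal{C}'$ be a shortest doubly even self-orthogonal embedding of $\mathcal{C}$. Since its length is $n_s(\mathcal{C})$, it is also a shortest self-orthogonal embedding. By Theorem~\ref{shortestSOgen}, $\mathcal{C}'$ has a generator matrix
\[
G'=\begin{bmatrix} G(\mathrm{Hull}(\mathcal{C})) & \mathcal{O}\\ A & B\end{bmatrix}.
\]
A slight care point is that Theorem~\ref{shortestSOgen} might produce a different complement $A$ than the one fixed in the statement. One resolves this by changing basis: if the theorem yields $A_1=A+HM$, where $H=G(\mathrm{Hull}(\mathcal{C}))$, then subtracting $M$ times the first block row, whose appended part is zero, gives rows $[A\;B]$. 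So we may assume the fixed $A$ appears.

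The rows $[H\;\mathcal{O}]$ lie in the doubly even code $\mathcal{C}'$, and their weights equal those of the codewords of $\mathrm{Hull}(\mathcal{C})$. Hence $\mathrm{Hull}(\mathcal{C})$ is doubly even.

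Now look at the rows $[A\;B]$. Deleting the zero columns is not needed; instead, note that $\langle [A\;B]\rangle$ is a subcode of $\mathcal{C}'$, so it is doubly even and self-orthogonal. It has length $n_s(\mathcal{C})$ and punctures to $\langle A\rangle$. Since $n_s(\mathcal{C})=n_s(\langle A\rangle)$ by Theorem~\ref{shortestSOlcd}, it is a shortest self-orthogonal embedding of $\langle A\rangle$, and it is doubly even. One should check that the rows of $A$ are independent, so the dimension is correct. This holds because $G$ is a generator matrix of $\mathcal{C}$.

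\emph{($\Leftarrow$)} Assume $\mathrm{Hull}(\mathcal{C})$ is doubly even, and let $[A\;B]$ generate a doubly even shortest self-orthogonal embedding of $\langle A\rangle$. Its length is $n_s(\langle A\rangle)=n_s(\mathcal{C})$. Form $G'$ as above, with $H$ padded by zeros.

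We check that $\langle G'\rangle$ is self-orthogonal.
\begin{itemize}
\item $HH^T=\mathcal{O}$, since $H$ generates the hull.
\item $H$ times the transpose of $[A\;B]$ equals $HA^T+\mathcal{O}B^T$. This is zero because hull vectors are orthogonal to all of $\mathcal{C}$.
\item $[A\;B]$ is self-orthogonal by choice.
\end{itemize}

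Next we check that $\langle G'\rangle$ is doubly even. All generators have weight divisible by $4$. In a self-orthogonal binary code, $\mathrm{wt}(x+y)=\mathrm{wt}(x)+\mathrm{wt}(y)-2|x\cap y|$, and $|x\cap y|$ is even. So double evenness propagates from the generators to the whole code.

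The rows of $G'$ are independent, and puncturing $\langle G'\rangle$ gives $\mathcal{C}$. Hence it is a doubly even self-orthogonal embedding of length $n_s(\mathcal{C})$. This gives $n_{de}(\mathcal{C})\le n_s(\mathcal{C})$, and the reverse inequality is the preceding proposition.

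The main obstacle is the basis-alignment issue in ($\Rightarrow$): matching the $A$ supplied by Theorem~\ref{shortestSOgen} to the given $A$. This is handled by the row-reduction described above, which works because the hull rows carry zero in the added columns.
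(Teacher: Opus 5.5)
Your proof is correct and follows essentially the same route as the paper. In one direction you read off the doubly even hull block and a doubly even shortest embedding $[A\mid B]$ of $\langle A\rangle$ from the form given by Theorem~\ref{shortestSOgen}, using $n_s(\mathcal{C})=n_s(\langle A\rangle)$; in the other you pad $G(\mathrm{Hull}(\mathcal{C}))$ with zeros on top of a doubly even shortest embedding of $\langle A\rangle$.

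Your basis-alignment remark is slightly imprecise. In general $A=PA_1+NH$ with $P$ invertible, and the correction term is $NH$ rather than $HM$. The same fix still works, since $P[A_1\mid B]+N[H\mid\mathcal{O}]=[A\mid PB]$.
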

\begin{proof}
    Suppose that $n_{de}(\mathcal{C})=n_s(\mathcal{C})$. Then there is a shortest self-orthogonal embedding $\mathcal{C}'$ of $\mathcal{C}$ which is doubly even. By Theorem~\ref{shortestSOgen}, $\mathcal{C}'$ has a generator matrix
    \[
    G'=\begin{bmatrix}G({\rm Hull}(\mathcal{C})) & \mathcal{O}\\A & B \end{bmatrix}
    \]
    for some matrix $B$. Since $\mathcal{C}'$ is doubly even, ${\rm{Hull}}(\mathcal{C})$ and the code generated by $[A~|~B]$ are also doubly even. Moreover, $[A~|~B]$ generates a self-orthogonal code, and by Theorem~\ref{shortestSOlcd}, it is a shortest self-orthogonal embedding of $\langle A\rangle$. Hence, $[A~|~B]$ generates a shortest doubly even self-orthogonal embedding of $\langle A\rangle$.

    Conversely, suppose that $\langle A\rangle$ has a shortest self-orthogonal embedding with generator matrix $[A~|~D]$ which is doubly even. If ${\rm{Hull}}(\mathcal{C})$ is doubly even, then
    \[
    G''=\begin{bmatrix}G({\rm Hull}(\mathcal{C})) & \mathcal{O}\\A & D \end{bmatrix}
    \]
    generates a shortest self-orthogonal embedding of $\mathcal{C}$ which is doubly even. Thus $n_{de}(\mathcal{C})=n_s(\mathcal{C})$.
\end{proof}

\begin{proposition}
    Let $\mathcal{C}$ be a self-orthogonal code over $\mathbb{F}_2$, which is not doubly even. Then $n_{de}(\mathcal{C})=n_s(\mathcal{C})+2$.
\end{proposition}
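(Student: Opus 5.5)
For a self-orthogonal code $\mathcal{C}$ of length $n$ we have ${\rm Hull}(\mathcal{C})=\mathcal{C}$, and $\mathcal{C}$ is itself a self-orthogonal embedding of itself. Hence $n_s(\mathcal{C})=n$. By the first proposition of this section, $n_{de}(\mathcal{C})\le n+2$. It therefore suffices to rule out doubly even self-orthogonal embeddings of length $n$ and $n+1$.

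\emph{Length $n$.} An embedding of length $n$ is $\mathcal{C}$ itself, which is not doubly even by hypothesis. Equivalently, with $G=G({\rm Hull}(\mathcal{C}))$ and $A$ empty, Proposition~\ref{deequals} fails because the hull $\mathcal{C}$ is not doubly even, so $n_{de}(\mathcal{C})\neq n_s(\mathcal{C})$.

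\emph{Length $n+1$.} Suppose $\mathcal{C}'$ is a doubly even self-orthogonal embedding with generator matrix $[G~|~\mathbf{b}]$, where $\mathbf{b}$ is a column vector. Self-orthogonality of $\mathcal{C}'$ gives
\[
GG^T+\mathbf{b}\mathbf{b}^T=\mathcal{O}.
\]
Since $GG^T=\mathcal{O}$, this forces $\mathbf{b}\mathbf{b}^T=\mathcal{O}$. The diagonal entries of $\mathbf{b}\mathbf{b}^T$ are $b_i^2=b_i$, so $\mathbf{b}=\mathbf{0}$. Then every codeword of $\mathcal{C}'$ is a codeword of $\mathcal{C}$ followed by a zero, so $\mathcal{C}'$ has exactly the same weights as $\mathcal{C}$. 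It is therefore not doubly even, a contradiction.

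More generally, any embedding of length $n+1$ appends one coordinate, and the same computation applies to each row. A self-orthogonal code over $\mathbb{F}_2$ is automatically even, so each row $g$ of $G$ has even weight, and the appended bit must be $0$ to keep the row even. Thus the appended column is zero and the weights are unchanged.

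Combining the two cases gives $n_{de}(\mathcal{C})\ge n+2=n_s(\mathcal{C})+2$, hence equality. I do not expect a real obstacle. The only point requiring care is justifying $n_s(\mathcal{C})=n$, which follows from Theorem~\ref{shortestSO} with $\ell=k$ in the even case: self-orthogonal binary codes are even, so the theorem gives $n_s(\mathcal{C})=n+k-\ell+1$. I need to check that this formula is meant to apply only when the code is not already self-orthogonal; otherwise I will use the direct observation that $\mathcal{C}$ embeds into itself.
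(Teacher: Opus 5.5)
Your proof is correct and follows the paper's argument: the upper bound comes from the first proposition of the section, $n_s(\mathcal{C})=n$ because $\mathcal{C}$ is its own self-orthogonal embedding, and one appended coordinate cannot suffice. The paper shows this last step more directly: a codeword of weight $\equiv 2 \pmod 4$ acquires weight $\equiv 2$ or $3 \pmod 4$. You instead show the appended column must vanish, which is equally valid.

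One correction: do not cite Theorem~\ref{shortestSO} for $n_s(\mathcal{C})=n$. With $\ell=k$ its even-case formula gives $n+1$, so it must be read as applying only to codes that are not self-orthogonal, and your direct observation is the right justification.
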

\begin{proof}
    Choose a codeword $\mathbf{x}\in\mathcal{C}$ such that ${\rm wt}(\mathbf{x})\equiv 2\pmod 4$. If only one coordinate is appended, then the resulting vector $\mathbf{x}'$ has weight either ${\rm wt}(\mathbf{x})$ or ${\rm wt}(\mathbf{x})+1$. Thus, it cannot be doubly even.
\end{proof}

\begin{proposition}\label{doublyevengen}
Let $\mathcal{C}$ be an $[n, k]$ code over $\mathbb{F}_2$. If ${\rm Hull}(\mathcal{C})$ is doubly even, then there is a shortest doubly even self-orthogonal embedding $\mathcal{C}'$ of $\mathcal{C}$ with generator matrix
\[
    G'=\begin{bmatrix}
        G({\rm Hull}(\mathcal{C})) & \mathcal{O}\\A & B
    \end{bmatrix},
\]
where
\[
G=\begin{bmatrix}
        G({\rm Hull}(\mathcal{C}))\\A
    \end{bmatrix}
\]
is a generator matrix of $\mathcal{C}$. 
\end{proposition}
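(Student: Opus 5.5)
Start from an arbitrary shortest doubly even self-orthogonal embedding $\mathcal{C}'$ of $\mathcal{C}$, of length $n'=n_{de}(\mathcal{C})$, and modify it so that the hull rows have zeros in every appended coordinate, without changing the length or losing double evenness. Write the generator matrix of $\mathcal{C}'$ as
\[
\begin{bmatrix} H & X\\ A & Y\end{bmatrix},
\]
where $H=G({\rm Hull}(\mathcal{C}))$ and $X,Y$ are the appended columns. We need to produce $B$ with $[A~|~B]$ of the same width $n'-n$ such that $\left[\begin{smallmatrix} H & \mathcal{O}\\ A & B\end{smallmatrix}\right]$ is doubly even and self-orthogonal. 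The natural choice is $B=Y$, i.e.\ simply replace $X$ by $\mathcal{O}$.

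First, the inner products. Since the rows of $H$ lie in ${\rm Hull}(\mathcal{C})$, we have $HH^T=\mathcal{O}$ and $HA^T=\mathcal{O}$ over $\mathbb{F}_2$. Hence, after replacing $X$ by $\mathcal{O}$, the block products are $HH^T=\mathcal{O}$ and $HA^T+\mathcal{O}Y^T=\mathcal{O}$. The remaining block is $AA^T+YY^T$, which is unchanged and equals $\mathcal{O}$ because $\mathcal{C}'$ was self-orthogonal. So the new matrix generates a self-orthogonal code.

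Next, double evenness. For a self-orthogonal binary code, being doubly even is equivalent to every row of a generator matrix having weight divisible by $4$; this is standard, since ${\rm wt}(x+y)={\rm wt}(x)+{\rm wt}(y)-2\,|x\cap y|$ and $|x\cap y|$ is even. The rows $[h~|~\mathbf{0}]$ have weight ${\rm wt}(h)\equiv 0\pmod 4$ because ${\rm Hull}(\mathcal{C})$ is assumed doubly even. The rows $[a~|~y]$ are unchanged rows of $\mathcal{C}'$, so their weights are divisible by $4$. It remains to check that the dimension is $k$ and that $\mathcal{C}$ is a puncturing. Both hold because the first $n$ columns form $G$, which has rank $k$. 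Thus the new code is a doubly even self-orthogonal embedding of the same length $n_{de}(\mathcal{C})$, hence shortest, and it has the required form.

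The only delicate point is the double-evenness criterion for self-orthogonal codes, and in particular remembering that the hull rows need the hypothesis that ${\rm Hull}(\mathcal{C})$ is doubly even. Without that hypothesis, zeroing $X$ could destroy the weight condition on those rows. Everything else is a direct block computation.
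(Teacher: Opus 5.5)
Your proposal is correct and is exactly the paper's argument: take any shortest doubly even self-orthogonal embedding with generator matrix $\left[\begin{smallmatrix} H & X\\ A & Y\end{smallmatrix}\right]$ and replace $X$ by $\mathcal{O}$. The paper states this in one line, and your block computation of the Gram matrix and your row-weight check supply the verification it leaves implicit; the row-weight check uses that ${\rm Hull}(\mathcal{C})$ is doubly even for the rows $[h~|~\mathbf{0}]$.
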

\begin{proof}
    Let
    \[
    G_0=\begin{bmatrix}
        G({\rm Hull}(\mathcal{C})) & B_H\\A & B_A
    \end{bmatrix}
    \]
    be a generator matrix of a shortest doubly even self-orthogonal embedding $\mathcal{C}_0$ of $\mathcal{C}$. Then the matrix
    \[
    G'=\begin{bmatrix}
        G({\rm Hull}(\mathcal{C})) & \mathcal{O}\\A & B_A
    \end{bmatrix}
    \]
    generates a shortest doubly even self-orthogonal embedding of $\mathcal{C}$.
\end{proof}

\begin{lemma}\label{doublyevencodeword}
    For $r\ge 2$, let $E_{r}$ be the binary even weight code of length $r$, which is given as
    \[
    E_r=\{\mathbf{x}\in\mathbb{F}_2^r~|~{\rm wt}(\mathbf{x})\mbox{ is even}\}.
    \]
    Then the number of doubly even codewords in $E_{r}$ is
    \[
    \begin{cases}
        2^{r-2}+(-1)^{\lfloor\frac{r+1}{4}\rfloor}2^{\frac{r-3}{2}} & \text{if }r\text{ is odd},\\
        2^{r-2}+(-1)^{\frac{r}{4}}2^{\frac{r}{2}-1} & \text{if }r\equiv 0\pmod 4,\\
        2^{r-2} & \text{if }r\equiv 2\pmod 4.
    \end{cases}
    \]
\end{lemma}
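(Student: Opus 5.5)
The count is a roots-of-unity filter over the binomial coefficients. The number of doubly even codewords of $E_r$ is $N_r=\sum_{j\equiv 0 \pmod 4}\binom{r}{j}$, since every vector of weight divisible by $4$ automatically lies in $E_r$. Filtering with the fourth roots of unity gives
\[
N_r=\frac14\sum_{t=0}^{3}(1+i^t)^r=\frac14\left(2^r+(1+i)^r+(1-i)^r+0^r\right).
\]
For $r\ge 2$ the last term vanishes, so
\[
N_r=2^{r-2}+\tfrac12\,\mathrm{Re}\,(1+i)^r .
\]

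The main computation is $\mathrm{Re}\,(1+i)^r$. Write $1+i=\sqrt2\,e^{i\pi/4}$, so that $(1+i)^r=2^{r/2}e^{ir\pi/4}$ and
\[
N_r=2^{r-2}+2^{\frac r2-1}\cos\frac{r\pi}{4}.
\]
I then split according to $r \bmod 8$, consistent with the three cases of the lemma.

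\emph{Case $r\equiv 2\pmod 4$.} Here $\cos(r\pi/4)=0$, so $N_r=2^{r-2}$.

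\emph{Case $r\equiv 0\pmod 4$.} Here $\cos(r\pi/4)=\cos(\pi\cdot r/4)=(-1)^{r/4}$, which gives $2^{r-2}+(-1)^{r/4}2^{\frac r2-1}$.

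\emph{Case $r$ odd.} Here $\cos(r\pi/4)=\pm\frac{\sqrt2}{2}$, so $2^{\frac r2-1}\cdot\frac{\sqrt2}{2}=2^{\frac{r-3}{2}}$. It remains to check the sign:
\begin{itemize}
\item it is $+$ for $r\equiv 1,7\pmod 8$ and $-$ for $r\equiv 3,5\pmod 8$;
\item the formula's $\lfloor (r+1)/4\rfloor$ equals $2m,\,2m+1,\,2m+1,\,2m+2$ for $r=8m+1,\,8m+3,\,8m+5,\,8m+7$ respectively;
\item so $(-1)^{\lfloor (r+1)/4\rfloor}$ gives the signs $+,-,-,+$, which match.
\end{itemize}

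The only real obstacle is this sign bookkeeping in the odd case. As a sanity check, for $r=3$ the formula gives $2-1=1$ (only the zero vector), and for $r=4$ it gives $4-2=2$ (the zero vector and $1111$), both correct. Alternatively, the whole argument could be phrased as the recursion $(1+i)^{r+8}=16(1+i)^r$ together with checking the base cases $r=2,\dots,9$. I would present the complex-number derivation, since it is shorter.
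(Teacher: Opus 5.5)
Your proof is correct and is essentially the paper's argument: the paper sets $k_0+k_1=2^{r-1}$ and $k_0-k_1=W_{E_r}(i)=\tfrac12\bigl((1+i)^r+(1-i)^r\bigr)$ from the weight enumerator of $E_r$, which is your four-term roots-of-unity filter with the $t=0,2$ and $t=1,3$ terms grouped, and it arrives at the same expression $2^{r-2}+2^{\frac r2-1}\cos\frac{r\pi}{4}$. Your mod-$8$ sign check against $(-1)^{\lfloor (r+1)/4\rfloor}$ supplies the case split that the paper leaves implicit after that formula.
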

\begin{proof}
    It is well-known that the weight enumerator of $E_r$ is given as
    \[
    W_{E_r}(z)=\frac{1}{2}((1+z)^r+(1-z)^r).
    \]
    Let $k_0$ and $k_1$ be the numbers of doubly even codewords and singly even codewords in $E_{r}$, respectively. Then $k_0+k_1=|E_r|=2^{r-1}$, and
    \[
    k_0-k_1=\sum_{\mathbf{x}\in E_r} i^{{\rm{wt}}(\mathbf{x})} = W_{E_r}(i),
    \]
    where $i$ denotes the imaginary unit satisfying $i^2=-1$. Since $1+i=\sqrt{2}e^{\pi i/4}$ and $1-i=\sqrt{2}e^{-\pi i/4}$, we obtain
    \begin{align*}
    W_{E_r}(i)&=\frac{1}{2}((1+i)^r+(1-i)^r)\\&=\frac{2^{r/2}}{2}(e^{r\pi i/4}+e^{-r\pi i/4})\\&=2^{r/2}\cos\left({\frac{r\pi}{4}}\right).
    \end{align*}
    Therefore,
    \[
    k_0=\frac{1}{2}((k_0+k_1)+(k_0-k_1))=2^{r-2}+2^{\frac{r}{2}-1}\cos\left(\frac{r\pi}{4}\right).
    \]
    This completes the proof.
\end{proof}

\begin{remark}\label{evenArf}
For each positive integer $n$, define a map
$f:E_n\to\mathbb{F}_2$ by
\[
f(\mathbf{x})=\frac{\operatorname{wt}(\mathbf{x})}{2}\pmod 2
\]
for every $\mathbf{x}\in E_n$. Throughout this section, for any positive integer $n$, we use the same notation $f$ for these maps, since the domain is determined by the length of the vector under consideration. For any
$\mathbf{x},\mathbf{y}\in E_n$, we have
\begin{align*}
    f(\mathbf{x}+\mathbf{y})&=\frac{{\rm{wt}}(\mathbf{x+\mathbf{y}})}{2}\pmod 2\\&=\frac{1}{2}({\rm{wt}}(\mathbf{x})+{\rm{wt}}(\mathbf{y})+2\mathbf{x}\cdot \mathbf{y})\pmod 2\\&=f(\mathbf{x})+f(\mathbf{y})+\mathbf{x}\cdot\mathbf{y}.
\end{align*}
Thus, $f$ is a quadratic form on $E_n$. Let $\mathcal{C}$ be an even code of length $n$ over $\mathbb{F}_2$. Then $f|_{\mathcal{C}}$ is a quadratic form on $\mathcal{C}$. If $\mathcal{C}$ is an LCD code, then the standard dot product on $\mathcal{C}$ is nondegenerate, and therefore $(\mathcal{C}, f|_\mathcal{C})$ is nonsingular.

Assume that $(\mathcal{C}, f|_\mathcal{C})$ is nonsingular. Note that if $\mathbf{x}\in\mathcal{C}$ is doubly even, then $f(\mathbf{x})=0$, and if not, then $f(\mathbf{x})=1$. Therefore, counting the number of doubly even codewords is equivalent to counting the number of codewords $\mathbf{x}\in\mathcal{C}$ such that $f(\mathbf{x})=0$. Consequently, if the number of doubly even codewords is greater than the number of singly even codewords, then ${\rm Arf}(f|_\mathcal{C})=0$. Otherwise, ${\rm Arf}(f|_\mathcal{C})=1$. Therefore, by Theorem~\ref{Arfiso}, the isometry type of an even LCD code $\mathcal{C}$ with respect to the quadratic form $f|_\mathcal{C}$ is determined by the number of doubly even codewords.
\end{remark}

\begin{lemma}\label{LCDeveniso}
    Let $\mathcal{C}$ be an $[n, k]$ even code over $\mathbb{F}_2$ and let $E_r$ be the even weight code over $\mathbb{F}_2$ with length $r$. Then there is a doubly even self-orthogonal embedding of $\mathcal{C}$ with length $n+r$ if and only if there exists a linear map $T:\mathcal C\to E_r$ such that $f(T(\mathbf c))=f(\mathbf c)$ for all $\mathbf c\in\mathcal C$. Moreover, if $\mathcal{C}$ is LCD, then such a map $T$ is injective.
\end{lemma}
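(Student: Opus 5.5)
We will prove both directions by identifying a doubly even self-orthogonal embedding of $\mathcal{C}$ of length $n+r$ with a linear map $\mathcal{C}\to\mathbb{F}_2^r$ that sends each codeword to its appended part. The key identity is the polarization formula of Remark~\ref{evenArf}, $f(\mathbf{x}+\mathbf{y})=f(\mathbf{x})+f(\mathbf{y})+\mathbf{x}\cdot\mathbf{y}$, together with the observation that an even code is doubly even exactly when $f$ vanishes on it.

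\textbf{($\Rightarrow$)} Let $\mathcal{C}'$ be a doubly even self-orthogonal embedding of length $n+r$. Since $\mathcal{C}'$ has the same dimension $k$ as $\mathcal{C}$, puncturing on the last $r$ coordinates is a bijection $\mathcal{C}'\to\mathcal{C}$. Its inverse has the form $\mathbf{c}\mapsto(\mathbf{c},T(\mathbf{c}))$ for a linear map $T:\mathcal{C}\to\mathbb{F}_2^r$. Because $\mathcal{C}$ is even and $(\mathbf{c},T(\mathbf{c}))$ has weight divisible by $4$, the weight ${\rm wt}(T(\mathbf{c}))={\rm wt}(\mathbf{c},T(\mathbf{c}))-{\rm wt}(\mathbf{c})$ is even, so $T(\mathbf{c})\in E_r$. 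Weights add over the concatenation, which gives
\[
f(\mathbf{c},T(\mathbf{c}))=f(\mathbf{c})+f(T(\mathbf{c})).
\]
The left side is $0$, so $f(T(\mathbf{c}))=f(\mathbf{c})$, using that the characteristic is $2$.

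\textbf{($\Leftarrow$)} Given a linear $T:\mathcal{C}\to E_r$ preserving $f$, set $\mathcal{C}'=\{(\mathbf{c},T(\mathbf{c})):\mathbf{c}\in\mathcal{C}\}$. This is linear of dimension $k$, and it punctures to $\mathcal{C}$. Each word is even, and by additivity $f(\mathbf{c},T(\mathbf{c}))=f(\mathbf{c})+f(T(\mathbf{c}))=0$, so every word is doubly even. For self-orthogonality, apply polarization on $E_{n+r}$:
\[
(\mathbf{c},T\mathbf{c})\cdot(\mathbf{d},T\mathbf{d})=f(\mathbf{c}+\mathbf{d},T(\mathbf{c}+\mathbf{d}))+f(\mathbf{c},T\mathbf{c})+f(\mathbf{d},T\mathbf{d})=0,
\]
since $f$ vanishes on $\mathcal{C}'$. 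Alternatively, polarization on each side shows directly that $T$ preserves the dot product, $T\mathbf{c}\cdot T\mathbf{d}=\mathbf{c}\cdot\mathbf{d}$, and the inner product of the concatenations is then $2\,\mathbf{c}\cdot\mathbf{d}=0$.

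\textbf{Injectivity when $\mathcal{C}$ is LCD.} The same polarization argument shows that $T$ preserves the bilinear form, $T\mathbf{c}\cdot T\mathbf{d}=\mathbf{c}\cdot\mathbf{d}$. If $T\mathbf{c}=0$, then $\mathbf{c}\cdot\mathbf{d}=0$ for every $\mathbf{d}\in\mathcal{C}$, so $\mathbf{c}\in{\rm Hull}(\mathcal{C})=\{\mathbf{0}\}$. Hence $T$ is injective.

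The only delicate point is checking that $T$ lands in $E_r$, which uses the evenness of $\mathcal{C}$. Everything else is routine.
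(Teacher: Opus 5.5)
Your proposal is correct and follows essentially the same route as the paper. The appended part defines the linear map $T$ (the paper writes it as $T(\mathbf{x}G)=\mathbf{x}B$), additivity of weights under concatenation gives $f(T(\mathbf{c}))=f(\mathbf{c})$ and, conversely, double evenness of the graph of $T$, and injectivity in the LCD case comes from the same polarization identity showing that $T$ preserves the dot product. The only cosmetic difference is that you check self-orthogonality of the graph explicitly via polarization, whereas the paper simply invokes the standard fact that a doubly even binary code is self-orthogonal.
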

\begin{proof}
    Assume that there is a doubly even self-orthogonal embedding $\mathcal{C}'$ of $\mathcal{C}$ with length $n+r$. Let $G'=[G~|~B]$ be a generator matrix of $\mathcal{C}'$ where $G$ is a generator matrix of $\mathcal{C}$. Note that for a codeword $\mathbf{c}\in\mathcal{C}$, $\mathbf{c}=\mathbf{x}G$ for some unique $\mathbf{x}\in\mathbb{F}_2^{k}$. Define a linear map $T:\mathcal{C}\to \langle B\rangle$ as $T(\mathbf{x}G)=\mathbf{x}B$ for all $\mathbf{x}\in\mathbb{F}_2^{k}$. Since $(\mathbf{x}G, \mathbf{x}B)$ is doubly even and $\mathbf{x}G$ is even, it follows that $\mathbf{x}B$ is even, that is, $\mathbf{x}B\in E_r$. Also, we have
    \[
    {\rm{wt}}(\mathbf{x}G)+{\rm{wt}}(\mathbf{x}B)\equiv 0\pmod 4,
    \]
    that is,
    \[
    f(\mathbf{x}G)=f(\mathbf{x}B)=f(T(\mathbf{x}G)).
    \]

    On the other hand, assume that there is a linear map $T':\mathcal{C}\to E_r$ such that $f(T'(\mathbf c))=f(\mathbf c)$ for all $\mathbf c\in\mathcal C$. For each row $g_i$ of a generator matrix $G$ of $\mathcal{C}$, define $a_i=T'(g_i)$ for $1\le i\le k$. Let $A$ be the matrix whose rows are $a_1, \ldots, a_{k}$. We claim that $G''=[G~|~A]$ generates a doubly even self-orthogonal code $\mathcal{C}''$. Since $f(\mathbf{x})=f(T'(\mathbf{x}))$, we have
    \[
    {\rm{wt}}(\mathbf{x})+{\rm{wt}}(T'(\mathbf{x}))\equiv 0\pmod 4
    \]
    for every $\mathbf{x}\in\mathcal{C}$. Thus, $\mathcal{C}''$ is doubly even and hence self-orthogonal.    

    Finally, we prove the injectivity of $T$. Let $\mathcal{C}$ be an LCD code. Suppose $T(\mathbf{u})=\mathbf{0}$ for some $\mathbf{u}\in\mathcal{C}$. Then
    \begin{align*}
    \mathbf{u}\cdot\mathbf{v}&=f(\mathbf{u}+\mathbf{v})+f(\mathbf{u})+f(\mathbf{v})\\
    &=f(T(\mathbf{u})+T(\mathbf{v}))+f(T(\mathbf{u}))+f(T(\mathbf{v}))\\
    &=T(\mathbf{u})\cdot T(\mathbf{v})=0
    \end{align*}
    for every $\mathbf{v}\in\mathcal{C}$. Since $\mathcal{C}$ is an LCD code, $\mathbf{u}$ must be zero. Therefore $T$ is injective.
\end{proof}

By Theorem~\ref{LCDbasis}, any even LCD code over $\mathbb{F}_2$ has an even dimension. Thus, considering an even LCD code of dimension $2m$ in Lemma~\ref{LCDdoublyevenembedding} covers all possible cases without loss of generality.

\begin{lemma}\label{LCDdoublyevenembedding}
    Let $\mathcal{C}$ be an even LCD code over $\mathbb{F}_2$ with $\dim \mathcal{C}=2m$. Let $t$ be the number of doubly even codewords in $\mathcal{C}$ and $t_m=2^{2m-1}+(-1)^{\lfloor \frac{m+1}{2}\rfloor}2^{m-1}$, which is the number of doubly even codewords in $E_{2m+1}$. Then $n_{de}(\mathcal{C})$ is given as follows:
    \[
    n_{de}(\mathcal{C})=\begin{cases}
        n_s(\mathcal{C}) & \mbox{if}~t=t_m,\\
        n_s(\mathcal{C})+1 & \mbox{if}~t\ne t_m~\mbox{and}~m~\mbox{is even},\\
        n_s(\mathcal{C})+2 & \mbox{if}~t\ne t_m~\mbox{and}~m~\mbox{is odd}.
    \end{cases}
    \]
\end{lemma}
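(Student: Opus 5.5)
The plan is to reduce everything to isometric embeddings of quadratic spaces via Lemma~\ref{LCDeveniso}. Since $\mathcal{C}$ is even and LCD, ${\rm Hull}(\mathcal{C})=\{\mathbf 0\}$, so Theorem~\ref{shortestSO} gives $n_s(\mathcal{C})=n+2m+1$. A doubly even self-orthogonal embedding of length $n+r$ exists if and only if there is an injective linear map $T:\mathcal{C}\to E_r$ with $f\circ T=f$ (Lemma~\ref{LCDeveniso}). Such a map is an isometric embedding of the nonsingular quadratic space $(\mathcal{C},f|_{\mathcal{C}})$ into $(E_r,f)$. We already have $n_s(\mathcal{C})\le n_{de}(\mathcal{C})\le n_s(\mathcal{C})+2$. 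So it suffices to decide the cases $r=2m+1$ and $r=2m+2$.

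\textbf{Case $r=2m+1$.} Here $\dim E_{2m+1}=2m$, so $T$ must be an isometry onto $E_{2m+1}$. The code $E_{2m+1}$ is LCD, because its dual is the repetition code and the all-ones vector has odd length. Hence $(E_{2m+1},f)$ is nonsingular, and by Lemma~\ref{doublyevencodeword} it has exactly $t_m$ doubly even codewords. By Remark~\ref{evenArf} and Theorem~\ref{Arfiso}, the two spaces are isometric if and only if their Arf invariants agree. Since the Arf invariant is determined by the count of doubly even codewords and the total is $2^{2m}$ in both spaces, this holds if and only if $t=t_m$. This gives the first case. When $t\ne t_m$, we also get $n_{de}(\mathcal{C})\ge n_s(\mathcal{C})+1$.

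\textbf{Case $r=2m+2$, with $t\ne t_m$.} The space $E_{2m+2}$ has dimension $2m+1$, and its radical is $\langle\mathbf j\rangle$, where $\mathbf j$ is the all-ones vector. Note that $f(\mathbf j)=(m+1)\bmod 2$. Any isometric image $W=T(\mathcal{C})$ is nondegenerate, so $\mathbf j\notin W$ and $E_{2m+2}=W\oplus\langle\mathbf j\rangle$. Fix the complement $W_0=\{(\mathbf x,0):\mathbf x\in E_{2m+1}\}$, which has Arf invariant equal to that of $E_{2m+1}$. Every other complement has the form $W_\lambda=\{\mathbf w+\lambda(\mathbf w)\mathbf j:\mathbf w\in W_0\}$ for a linear functional $\lambda$ on $W_0$. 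For $\mathbf w\in W_0$ we have $\mathbf w\cdot\mathbf j=0$, so
\[
f(\mathbf w+\lambda(\mathbf w)\mathbf j)=f(\mathbf w)+\lambda(\mathbf w)f(\mathbf j).
\]

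If $m$ is odd, then $f(\mathbf j)=0$. Every complement is then isometric to $W_0$ and has the wrong Arf invariant, so no embedding of length $n_s+1$ exists. Together with the upper bound, this gives $n_{de}=n_s+2$.

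If $m$ is even, then $f(\mathbf j)=1$, and the form on $W_\lambda$ becomes $f+\lambda$. By nondegeneracy, write $\lambda=B(\mathbf a,\cdot)$ for some $\mathbf a\in W_0$. Then
\[
f(\mathbf x)+B(\mathbf a,\mathbf x)=f(\mathbf x+\mathbf a)+f(\mathbf a),
\]
so the value distribution is that of $f$ translated and shifted by $f(\mathbf a)$. Choosing $\mathbf a$ with $f(\mathbf a)=1$, which exists since a nonsingular form is not identically zero, flips the Arf invariant. Hence $W_\lambda$ is isometric to $\mathcal{C}$ by Theorem~\ref{Arfiso}, and $n_{de}=n_s+1$.

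The main obstacle is the last step: checking carefully that the perturbation by $\lambda$ changes the Arf invariant exactly by $f(\mathbf a)$. This amounts to verifying that the map $\mathbf x\mapsto\mathbf x+\mathbf a$ is a bijection, so that the value counts of $f+\lambda$ are those of $f$ shifted by $f(\mathbf a)$.
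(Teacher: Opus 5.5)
Your proof is correct and is essentially the paper's argument. The case $r=2m+1$ goes through Lemma~\ref{LCDeveniso} and Theorem~\ref{Arfiso}, and for $m$ even you use the same twisted complement $\mathbf{u}\mapsto\mathbf{u}+(\mathbf{u}\cdot\mathbf{a})\mathbf{j}$ with $f(\mathbf{a})=1$ to flip the Arf invariant. The only small difference is for $m$ odd: the paper counts that $T(\mathcal{C})$ contains exactly half the doubly even words of $E_{2m+2}$ via Lemma~\ref{doublyevencodeword}, whereas your parametrization of all complements of $\langle\mathbf{j}\rangle$ shows directly that each is isometric to $W_0\cong E_{2m+1}$, which avoids that second count.
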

\begin{proof}
    We first show that $n_{de}(\mathcal{C})=n_s(\mathcal{C})$ if and only if $t=t_m$. By Theorem~\ref{shortestSO}, $n_s(\mathcal{C})=n+2m+1$ where $n$ is the length of $\mathcal{C}$. By Lemma~\ref{LCDeveniso}, $n_{de}(\mathcal{C})=n_s(\mathcal{C})$ if and only if there is a linear map $T:\mathcal{C}\to E_{2m+1}$ such that $f(T(\mathbf c))=f(\mathbf c)$ for all $\mathbf c\in\mathcal C$. Assume that $n_s(\mathcal{C})=n_{de}(\mathcal{C})$. Let $G'=[G~|~B]$ be a generator matrix of a shortest doubly even self-orthogonal embedding of $\mathcal{C}$ where $G$ is a generator matrix of $\mathcal{C}$. Since
    \[
    (G')(G')^T=GG^T+BB^T=\mathcal{O}
    \]
    and $\mathcal{C}$ is an LCD code, $BB^T$ is invertible. Then we have ${\rm{rank}}(B)=2m=\dim E_{2m+1}$. Also, since $\mathcal{C}$ and $E_{2m+1}$ are LCD codes, both $f|_\mathcal{C}$ and $f|_{E_{2m+1}}$ are nonsingular. Thus, a map $T$ given by Lemma~\ref{LCDeveniso} is an isometry between $(\mathcal{C}, f|_\mathcal{C})$ and $(E_{2m+1}, f|_{E_{2m+1}})$. Therefore, $T$ induces a bijection between the doubly even codewords of $\mathcal{C}$ and those of $E_{2m+1}$, which implies that $t=t_m$. Conversely, if $t=t_m$, then there is an isometry between $(\mathcal{C}, f|_\mathcal{C})$ and $(E_{2m+1}, f|_{E_{2m+1}})$, which implies that $n_{de}(\mathcal{C})=n_s(\mathcal{C})$.
    
    Next, we show that if $t\ne t_m$ and $m$ is odd, then $n_{de}(\mathcal{C})$ cannot be $n_s(\mathcal{C})+1$, which implies that $n_{de}(\mathcal{C})=n_s(\mathcal{C})+2$. Assume that $t\ne t_m$ and $m$ is odd. If $n_{de}(\mathcal{C})=n_s(\mathcal{C})+1=n+2m+2$, then by Lemma~\ref{LCDeveniso}, there is an injective linear map $T:\mathcal{C}\to E_{2m+2}$ such that $f(T(\mathbf c))=f(\mathbf c)$. Then $T(\mathcal{C})$, equipped with the restriction of $f|_{E_{2m+2}}$ to $T(\mathcal{C})$, is a nonsingular subspace of $E_{2m+2}$ which is isometric to $(\mathcal{C}, f|_\mathcal{C})$. Note that ${\rm Hull}(E_{2m+2})=\langle \mathbf{1}\rangle$, where $\mathbf{1}$ denotes the all-one vector of length $2m+2$. Since $T(\mathcal{C})$ is nonsingular, the inner product on $T(\mathcal{C})$ is nondegenerate. Then we have $T(\mathcal{C})\cap\langle\mathbf{1}\rangle=\mathbf{0}$, that is, $E_{2m+2}=T(\mathcal{C})\oplus\langle \mathbf{1}\rangle$. Since $f(\mathbf{1})=m+1=0$, we have 
    \[
    f(\mathbf{u}+\mathbf{1})=f(\mathbf{u})+f(\mathbf{1})+\mathbf{u}\cdot \mathbf{1}=f(\mathbf{u})
    \]
    for every $\mathbf{u}\in T(\mathcal{C})$. It follows that the number of doubly even codewords in $T(\mathcal{C})$ is half the number of doubly even codewords in $E_{2m+2}$. By Lemma~\ref{doublyevencodeword}, this number is therefore given by
    \[
    \frac{2^{2m}+(-1)^{(m+1)/2}2^{m}}{2}=2^{2m-1}+(-1)^{(m+1)/2}2^{m-1}=t_m.
    \]
    Then the number of doubly even codewords in $\mathcal{C}$ is $t_m$ which contradicts our assumption. Thus, if $t\ne t_m$ and $m$ is odd, then $n_{de}(\mathcal{C})=n_s(\mathcal{C})+2$.

    Finally, we show that if $t\ne t_m$ and $m$ is even, then $n_{de}(\mathcal{C})\le n_s(\mathcal{C})+1$, so it follows that $n_{de}(\mathcal{C})= n_s(\mathcal{C})+1$. Assume that $t\ne t_m$ and $m$ is even. Define a subset $U$ of $E_{2m+2}$ as
    \[
    U=\{(\mathbf{x}, 0)~|~\mathbf{x}\in E_{2m+1}\}.
    \]
    Clearly, $(U, f|_U)$ is isometric to $(E_{2m+1}, f|_{E_{2m+1}})$ which is nonsingular. Choose $\mathbf{x}\in U$ such that $f(\mathbf{x})=1$. Define a linear map $\varphi:U\to E_{2m+2}$ as
    \[
    \varphi(\mathbf{u})=\mathbf{u}+(\mathbf{u}\cdot \mathbf{x})\mathbf{1}
    \]
    for every $\mathbf{u}\in U$ and the all-one vector $\mathbf{1}$ of length $2m+2$. For $\mathbf{v}\in U$, if $\varphi(\mathbf{v})=0$, then $\mathbf{v}=(\mathbf{v}\cdot \mathbf{x})\mathbf{1}$. Since $\mathbf{1}\notin U$, $\mathbf{v}=0$. Thus, the dimension of $U'=\varphi(U)$ is $2m$.
    For $\mathbf{u}\in U$, since $f(\mathbf{1})=m+1=1\pmod 2$, we have
    \begin{align*}
    f(\mathbf{u}+(\mathbf{u}\cdot \mathbf{x})\mathbf{1})&=f(\mathbf{u})+f((\mathbf{u}\cdot \mathbf{x})\mathbf{1})+(\mathbf{u}\cdot \mathbf{x})(\mathbf{u}\cdot \mathbf{1})\\&=f(\mathbf{u})+(\mathbf{u}\cdot \mathbf{x})\\&=f(\mathbf{u})+f(\mathbf{u}+\mathbf{x})+f(\mathbf{u})+f(\mathbf{x})\\&=f(\mathbf{u}+\mathbf{x})+1.
    \end{align*}
    Since $\mathbf{u}\mapsto\mathbf{u}+\mathbf{x}$ is a bijection on $U$, we have that ${\rm Arf}(f|_{U'})={\rm Arf}(f|_{U})+1$. Thus, by Theorem~\ref{Arfiso}, $\mathcal{C}$ is either isometric to $U$ or $U'$. Then, by Lemma~\ref{LCDeveniso}, there is a doubly even self-orthogonal embedding of $\mathcal{C}$ with length $n+2m+2=n_s(\mathcal{C})+1$. Thus, $n_{de}(\mathcal{C})\le n_s(\mathcal{C})+1$, so it follows that $n_{de}(\mathcal{C})= n_s(\mathcal{C})+1$.
\end{proof}

\begin{theorem}\label{doublyevenev}
    Let $\mathcal{C}$ be an $[n, k]$ even code over $\mathbb{F}_2$ which is not self-orthogonal. Let 
    \[
    G=\begin{bmatrix}G({\rm Hull}(\mathcal{C}))\\A \end{bmatrix}
    \]
    be a generator matrix of $\mathcal{C}$. Let $t_A$ be the number of doubly even codewords in $\langle A\rangle$ and $t_m=2^{2m-1}+(-1)^{\lfloor \frac{m+1}{2}\rfloor}2^{m-1}$ where $m = \dim \langle A \rangle/2$. Then $n_{de}(\mathcal{C})=n_s(\mathcal{C})$ if and only if $t_A=t_m$ and ${\rm{Hull}}(\mathcal{C})$ is doubly even. If either ${\rm Hull}(\mathcal C)$ is not doubly even or $t_A\ne t_m$, then
    \[
    n_{de}(\mathcal{C})=\begin{cases}
        n_s(\mathcal{C})+1 & \mbox{if}~m~\mbox{is even},\\
        n_s(\mathcal{C})+2 & \mbox{if}~m~\mbox{is odd}.
    \end{cases}
    \]
\end{theorem}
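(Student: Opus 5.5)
The plan is to reduce everything to the LCD part $\langle A\rangle$ and then invoke Lemma~\ref{LCDdoublyevenembedding}. First I check that $\langle A\rangle$ is an even LCD code of even dimension $2m\ge 2$. It is even because $\mathcal{C}$ is even. It is LCD because ${\rm Hull}(\mathcal{C})\cap\langle A\rangle=\{\mathbf 0\}$ and the rows of $G({\rm Hull}(\mathcal{C}))$ are orthogonal to all of $\mathcal{C}$, so the Gram matrix $GG^T$ is block diagonal with blocks $\mathcal{O}$ and $AA^T$. Hence ${\rm rank}(AA^T)={\rm rank}(GG^T)=k-\ell=\dim\langle A\rangle$. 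Its dimension is positive since $\mathcal{C}$ is not self-orthogonal, and it is even by Theorem~\ref{LCDbasis}. By Theorem~\ref{shortestSOlcd}, $n_s(\mathcal{C})=n_s(\langle A\rangle)$. Equivalently, by Theorem~\ref{shortestSO}, $n_s(\mathcal{C})=n+2m+1$ and $n_s(\langle A\rangle)=n+2m+1$.

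For the characterization of $n_{de}(\mathcal{C})=n_s(\mathcal{C})$, I combine Proposition~\ref{deequals} with the first case of Lemma~\ref{LCDdoublyevenembedding}. Proposition~\ref{deequals} says equality holds iff ${\rm Hull}(\mathcal{C})$ is doubly even and $\langle A\rangle$ has a doubly even shortest self-orthogonal embedding, i.e. $n_{de}(\langle A\rangle)=n_s(\langle A\rangle)$. Lemma~\ref{LCDdoublyevenembedding} shows the latter is equivalent to $t_A=t_m$.

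For the remaining cases I need both an upper and a lower bound on $n_{de}(\mathcal{C})-n_s(\mathcal{C})$.

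\textbf{Case ${\rm Hull}(\mathcal{C})$ doubly even and $t_A\ne t_m$.} By Proposition~\ref{doublyevengen}, a shortest doubly even embedding can be taken of the form $\begin{bmatrix}G({\rm Hull}) & \mathcal O\\ A & B\end{bmatrix}$. Here $[A\,|\,B]$ is a doubly even self-orthogonal embedding of $\langle A\rangle$ with the same number of added columns, so $n_{de}(\mathcal{C})-n\ge n_{de}(\langle A\rangle)-n$. Conversely, appending zeros to the hull rows of any doubly even embedding $[A\,|\,D]$ of $\langle A\rangle$ gives one for $\mathcal{C}$. Hence $n_{de}(\mathcal{C})-n_s(\mathcal{C})=n_{de}(\langle A\rangle)-n_s(\langle A\rangle)$, which Lemma~\ref{LCDdoublyevenembedding} gives as $1$ or $2$ according to the parity of $m$.

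\textbf{Case ${\rm Hull}(\mathcal{C})$ not doubly even.} This is the main obstacle, since the hull rows must now receive nonzero extra columns and Proposition~\ref{doublyevengen} is unavailable.

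For the lower bound, Proposition~\ref{deequals} already gives $n_{de}(\mathcal{C})\ge n_s(\mathcal{C})+1$. When $m$ is odd I must exclude exactly one extra column, i.e. total added length $2m+2$. Take an embedding $[G\,|\,B]$ with $B$ having $2m+2$ columns, and consider the map $T$ of Lemma~\ref{LCDeveniso} from $\mathcal{C}$ to $E_{2m+2}$. The first part of that lemma's proof needs only evenness, not LCD. Restricted to $\langle A\rangle$, $T$ is injective by the LCD argument, and its image is nonsingular. Since $T$ preserves the inner product (polarization of $f$), $T$ maps ${\rm Hull}(\mathcal{C})$ into $E_{2m+2}\cap T(\langle A\rangle)^\perp$. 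This space is $\langle\mathbf 1\rangle$ by dimension count, because $E_{2m+2}=T(\langle A\rangle)\oplus\langle\mathbf 1\rangle$ as in the proof of Lemma~\ref{LCDdoublyevenembedding}. Since $m$ is odd, $f(\mathbf 1)=(m+1)\bmod 2=0$. Hence $f(T(h))=0$ for every $h$ in the hull, so $f(h)=0$ and the hull is doubly even, a contradiction.

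For the upper bound with $m$ even, I would adapt the construction in Lemma~\ref{LCDdoublyevenembedding}. Here $f(\mathbf 1)=1$ on $E_{2m+2}$. The hull carries the linear form $f|_{\rm Hull}$, which is linear since the hull is self-orthogonal, and this form is nonzero. Choose a basis $h_1,\dots,h_\ell$ of the hull with $f(h_1)=1$ and $f(h_i)=0$ for $i\ge2$. Send $h_1\mapsto\mathbf 1$, the other $h_i\mapsto \mathbf 0$, and $\langle A\rangle$ isometrically onto a $2m$-dimensional nonsingular subspace of $\mathbf 1^\perp\cap E_{2m+2}=E_{2m+2}$. Such a subspace of either Arf type exists, namely $U$ or $U'$ from that proof, and each is complementary to $\langle\mathbf 1\rangle$. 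Then $f$ is preserved on sums because the cross terms vanish, and Lemma~\ref{LCDeveniso} gives length $n_s(\mathcal{C})+1$.

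For $m$ odd, the general bound $n_{de}\le n_s+2$ from the first proposition closes the argument.
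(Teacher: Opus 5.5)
Your proof is correct and follows essentially the same route as the paper: Proposition~\ref{deequals} with Lemma~\ref{LCDdoublyevenembedding} gives the equality criterion; the decomposition $E_{2m+2}=T(\langle A\rangle)\oplus\langle\mathbf{1}\rangle$ forces $T({\rm Hull}(\mathcal{C}))\subseteq\langle\mathbf{1}\rangle$, and $f(\mathbf{1})=m+1=0$ then rules out $n_s(\mathcal{C})+1$ when $m$ is odd; and the map $\mathbf{h}+\mathbf{a}\mapsto f(\mathbf{h})\mathbf{1}+T(\mathbf{a})$ gives the length $n_s(\mathcal{C})+1$ construction when $m$ is even. The only difference is bookkeeping: you treat the subcase where ${\rm Hull}(\mathcal{C})$ is doubly even but $t_A\ne t_m$ separately, by reducing to $\langle A\rangle$ via Proposition~\ref{doublyevengen}, whereas the paper's two arguments cover both subcases at once.
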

\begin{proof}
    Since $\langle A\rangle$ is an even LCD code, by Theorem~\ref{LCDbasis}, $\dim\langle A\rangle$ is even. Let $\mathcal{C}'$ be a shortest doubly even self-orthogonal embedding of $\mathcal{C}$. 
    
    We first show that $n_{de}(\mathcal{C})=n_s(\mathcal{C})$ if and only if $t_A=t_m$ and ${\rm{Hull}}(\mathcal{C})$ is doubly even. By Proposition~\ref{deequals}, $n_{de}(\mathcal{C})=n_s(\mathcal{C})$ if and only if ${\rm{Hull}}(\mathcal{C})$ is doubly even and there is a shortest self-orthogonal embedding of $\langle A\rangle$ which is doubly even. By Lemma~\ref{LCDdoublyevenembedding}, $\langle A\rangle$ has a shortest self-orthogonal embedding which is doubly even if and only if $t_A=t_m$.

    Next, we show that if either ${\rm Hull}(\mathcal C)$ is not doubly even or $t_A\ne t_m$ and $m$ is odd, then $n_{de}(\mathcal{C})$ cannot be $n_s(\mathcal{C})+1$, so it follows that $n_{de}(\mathcal{C})=n_s(\mathcal{C})+2$. Assume that either ${\rm Hull}(\mathcal C)$ is not doubly even or $t_A\ne t_m$, and $m$ is odd. Suppose that $n_{de}(\mathcal{C})=n_s(\mathcal{C})+1$. Then there is a doubly even self-orthogonal embedding of $\langle A\rangle$ with length $n_s(\mathcal{C})+1$. By Lemma~\ref{LCDdoublyevenembedding} it is possible only if $\langle A\rangle$ has a doubly even shortest self-orthogonal embedding of length $n_s(\mathcal{C})$. Then, the number of doubly even codewords in $\langle A\rangle$ is $t_m$. 
    
    Let $G'=[G~|~B]$ be a generator matrix of $\mathcal{C}'$ which is a shortest doubly even self-orthogonal embedding of $\mathcal{C}$. Define a linear map $T:\mathcal{C}\to \langle B\rangle\subseteq E_{2m+2}$ as $T(\mathbf{x}G)=\mathbf{x}B$ for all $\mathbf{x}\in\mathbb{F}_2^{k}$. Since $\mathcal{C}'$ is doubly even, we have $f(\mathbf{c})=f(T(\mathbf{c}))$ for every $\mathbf{c}\in\mathcal{C}$. Since $\langle A\rangle$ is nonsingular and $T|_{\langle A\rangle}$ is injective, $T(\langle A\rangle)$ is a nonsingular subspace of $E_{2m+2}$. It follows that $E_{2m+2}=T(\langle A\rangle)\oplus\langle \mathbf{1}\rangle$, where $\mathbf{1}$ denotes the all-one vector of length $2m+2$. For $\mathbf{h}\in {\rm{Hull}}(\mathcal{C})$, we have
    \[
    T(\mathbf{h})\cdot T(\mathbf{u})=\mathbf{h}\cdot \mathbf{u}=0
    \]
    for every $\mathbf{u}\in\langle A\rangle$. Then $T(\mathbf{h})\in\langle\mathbf{1}\rangle$. Thus,
    \[
    f(\mathbf{h})=f(T(\mathbf{h}))=f(\mathbf{1})=m+1=0.
    \]
    Hence, ${\rm{Hull}}(\mathcal{C})$ is doubly even, which contradicts our assumption. So, $n_{de}(\mathcal{C})=n_s(\mathcal{C})+2$.

    Finally, we show that if either ${\rm Hull}(\mathcal C)$ is not doubly even or $t_A\ne t_m$ and $m$ is even, then $n_{de}(\mathcal{C})\le n_s(\mathcal{C})+1$. Hence, we must have $n_{de}(\mathcal{C})= n_s(\mathcal{C})+1$. Assume that either ${\rm Hull}(\mathcal C)$ is not doubly even or $t_A\ne t_m$, and $m$ is even. As shown in the proof of Lemma~\ref{LCDdoublyevenembedding}, there is an isometry $T:\langle A\rangle\to U\subseteq E_{2m+2}$ where $U$ is a $2m$-dimensional nonsingular subspace of $E_{2m+2}$. Define a linear map $F:\mathcal{C}={\rm Hull}(\mathcal{C})\oplus\langle A\rangle\to E_{2m+2}$ as 
    \[
    F(\mathbf{h}+\mathbf{a})=f(\mathbf{h})\mathbf{1}+T(\mathbf{a})
    \]
    for every $\mathbf{h}\in {\rm Hull}(\mathcal{C})$ and $\mathbf{a}\in\langle A\rangle$. Then we have
    \begin{align*}
    f(F(\mathbf{h}+\mathbf{a}))&=f(f(\mathbf{h})\mathbf{1}+T(\mathbf{a}))\\&=f(\mathbf{h})+f(T(\mathbf{a}))+f(\mathbf{h})(\mathbf{1}\cdot T(\mathbf{a}))\\&=f(\mathbf{h})+f(\mathbf{a})\\&=f(\mathbf{h}+\mathbf{a}).
    \end{align*}
    Therefore $n_{de}(\mathcal{C})=n+2m+2=n_s(\mathcal{C})+1$ by Lemma~\ref{LCDeveniso}.
\end{proof}

\begin{lemma}\label{LCDdoublyevenembeddingodd}
    Let $\mathcal{C}$ be an $[n, k]$ odd-like LCD code over $\mathbb{F}_2$ with a generator matrix $G$ such that $GG^T=I$. Let $m=|\{g_i~|~{\rm{wt}}(g_i)\equiv1\pmod 4\}|$ where $g_i$, $1\le i\le k$ are the rows of $G$. Then $n_{de}(\mathcal{C})$ is given as follows:
    \[
    n_{de}(\mathcal{C})=\begin{cases}
        n_s(\mathcal{C}) & \mbox{if}~m\equiv 0\pmod 4,\\
        n_s(\mathcal{C})+1 & \mbox{if}~m\equiv 3\pmod 4,\\
        n_s(\mathcal{C})+2 & \mbox{otherwise}.
    \end{cases}
    \]
\end{lemma}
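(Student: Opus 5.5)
The plan is to reduce the statement to a question about pairwise orthogonal vectors with prescribed weights mod $4$, and then to settle that question with a Gauss-type character sum $\sum i^{\mathrm{wt}}$, in the spirit of Lemma~\ref{doublyevencodeword}.

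\emph{Reduction.} Since $GG^T=I$, the code $\mathcal{C}$ is LCD, so $\ell=0$. Every row $g_i$ has odd weight, and since $\mathcal{C}$ is odd-like, Theorem~\ref{shortestSO} gives $n_s(\mathcal{C})=n+k$. A doubly even self-orthogonal embedding of length $n+r$ has a generator matrix $[G~|~B]$ with $B$ a $k\times r$ matrix whose rows $b_i$ satisfy two conditions. First, $BB^T=GG^T=I$, so the rows are orthonormal and the map $\mathbf{x}\mapsto\mathbf{x}B$ is injective. Second, ${\rm wt}(g_i)+{\rm wt}(b_i)\equiv 0\pmod 4$. Conversely, any such $B$ works, because a self-orthogonal binary code spanned by rows of weight divisible by $4$ is doubly even. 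So $n_{de}(\mathcal{C})-n_s(\mathcal{C})$ is the least $s\ge 0$ such that $\mathbb{F}_2^{k+s}$ contains $k$ orthonormal vectors: $m$ of weight $\equiv 3$ and $k-m$ of weight $\equiv 1\pmod 4$. By the first Proposition of this section we already know $s\le 2$.

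\emph{Key identity.} For pairwise orthogonal vectors, all pairwise intersections have even size. By inclusion--exclusion, ${\rm wt}(\sum_{i\in S}b_i)\equiv\sum_{i\in S}{\rm wt}(b_i)\pmod 4$: the pairwise terms contribute $2\cdot$even, and the higher terms carry coefficients divisible by $4$. Summing over all $\mathbf{x}\in\mathbb{F}_2^k$ gives
\[
\sum_{\mathbf{c}\in\langle B\rangle} i^{{\rm wt}(\mathbf{c})}=(1+i)^{k-m}(1-i)^m=(1+i)^k(-i)^m .
\]
I then compute the left side independently in the two short cases.
\begin{itemize}
\item If $s=0$, then $B$ is invertible, so $\langle B\rangle=\mathbb{F}_2^k$ and the sum is $(1+i)^k$. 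This forces $(-i)^m=1$, i.e.\ $m\equiv 0\pmod 4$.
\item If $s=1$, then $\langle B\rangle$ is a $k$-dimensional LCD code in $\mathbb{F}_2^{k+1}$, so its dual is spanned by a single vector $\mathbf{v}$ with $\mathbf{v}\cdot\mathbf{v}=1$, i.e.\ ${\rm wt}(\mathbf{v})=w$ is odd. MacWilliams gives the sum as $\tfrac12(1+i)^{k+1}\bigl(1+(-i)^w\bigr)$. This equals $(1+i)^k$ if $w\equiv 1$ and $i(1+i)^k$ if $w\equiv 3\pmod 4$. Hence $m\equiv 0$ or $m\equiv 3\pmod 4$.
\end{itemize}
Together these give the necessity half of all three cases.

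\emph{Constructions.} For sufficiency I would build $B$ block-diagonally from two pieces: copies of $J_4-I_4$, whose rows are orthonormal of weight $3$, and identity rows, of weight $1$. Permuting rows of $G$ is harmless, so only the count of each type matters.
\begin{itemize}
\item If $m\equiv 0\pmod 4$, take $m/4$ copies of $J_4-I_4$ and $I_{k-m}$, with $r=k$.
\item If $m\equiv 3\pmod 4$, add one extra block: the $3\times 4$ matrix with rows $1110,1101,1011$. These rows are orthonormal of weight $3$, and the block uses one extra column, so $r=k+1$.
\item Otherwise use the bound $s\le 2$, which is attained because $s=0,1$ have been excluded.
\end{itemize}
The main obstacle is the $s=1$ character-sum evaluation via MacWilliams, where the signs must be tracked carefully. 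I would double-check it on small examples, e.g.\ $k=3$, $m=3$, before writing it up.
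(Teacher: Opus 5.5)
Your proposal is correct and follows essentially the same route as the paper: the same reduction to an orthonormal $B$ with prescribed weights mod $4$, the same character sum $\sum_{\mathbf{c}\in\langle B\rangle} i^{{\rm wt}(\mathbf{c})}=(1+i)^k(-i)^m$, the same $s=0$ argument, the same $J_4-I_4$ construction, and the same appeal to $n_{de}\le n_s+2$ for the remaining case. The only differences are cosmetic. For $s=1$ you evaluate the sum by MacWilliams on the one-dimensional dual $\langle \mathbf{v}\rangle$, whereas the paper adjoins $\mathbf{h}$ to the rows of $B$ to get an orthonormal basis of $\mathbb{F}_2^{k+1}$ and factors the full sum. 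For $m\equiv 3\pmod 4$ you use one $3\times 4$ block plus copies of $J_4-I_4$, whereas the paper uses a single $m\times(m+1)$ block taken from $J_{m+1}-I_{m+1}$.
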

\begin{proof}
    Let $G'=[G~|~B]$ be a generator matrix of a shortest doubly even self-orthogonal embedding $\mathcal{C}'$ of $\mathcal{C}$. Since $G'$ generates a doubly even code, each row $(g_i, b_i)$ has weight divisible by $4$. It follows that ${\rm{wt}}(g_i)\equiv 1\pmod 4$ if and only if ${\rm{wt}}(b_i)\equiv 3\pmod 4$, that is, $|\{b_i~|~{\rm{wt}}(b_i)\equiv3\pmod 4\}|=m$. 

    Since $BB^T=GG^T=I$, the rows of $B$ form an orthonormal basis of $\langle B\rangle$. Thus, any codeword $\mathbf{x}\in\langle B\rangle$ can be written uniquely as $\mathbf{x}=\sum_{j=1}^k\alpha_jb_j$. Since the rows of $B$ are pairwise orthogonal, the weight of $\mathbf{x}$ is given as
    \[
    {\rm{wt}}(\mathbf{x})={\rm{wt}}\left(\sum_{j=1}^k\alpha_jb_j\right)\equiv\sum_{j=1}^k\alpha_j{\rm wt}(b_j)\pmod 4.
    \]
    Thus, we have
    \begin{align*}
    \sum_{\mathbf{x}\in\langle B\rangle} i^{{\rm{wt}}(\mathbf{x})}&=\sum_{(\alpha_1,\dots,\alpha_k)\in\mathbb F_2^k}i^{\sum_{j=1}^k \alpha_j{\rm wt}(b_j)}=\prod_{j=1}^k\sum_{\alpha_j\in\mathbb{F}_2}i^{\alpha_j{\rm wt}(b_j)}\\&=\prod_{j=1}^k\left(1+i^{{\rm wt}(b_j)}\right)=(1+i)^{k-m}(1-i)^m=(-i)^m(1+i)^k,
    \end{align*}
    where $i$ denotes the imaginary unit satisfying $i^2=-1$.

    We first show that $n_{de}(\mathcal{C})=n_s(\mathcal{C})$ if and only if $m\equiv 0\pmod 4$. Suppose that $n_{de}(\mathcal{C})=n_s(\mathcal{C})$. The length of a shortest self-orthogonal embedding $\mathcal{C}'$ of $\mathcal{C}$ is $n+k$ by Theorem~\ref{shortestSO}. Therefore $\langle B\rangle=\mathbb{F}_2^k$. Since $\sum_{\mathbf{x}\in\mathbb F_2^k} i^{{\rm{wt}}(\mathbf{x})}=(1+i)^k$, we have $(-i)^m=1$, that is, $m\equiv 0\pmod 4$.
    
    Conversely, assume that $m\equiv0\pmod 4$. Define a block diagonal matrix
    \[
    B_0={\rm{diag}}\left(\underbrace{J_4-I_4,\dots,J_4-I_4}_{m/4\text{ copies}},\, I_{k-m}\right)
    \]
    over $\mathbb{F}_2$ where $J_4$ is the $4\times 4$ all one matrix. Since
    \[
    (J_4-I_4)(J_4-I_4)^T=I_4
    \]
    in $\mathbb{F}_2$, we have $B_0B_0^T=I$. Since the weight of every row of $J_4-I_4$ is $3$, $B_0$ has $m$ rows of weight $3$ and $k-m$ rows of weight $1$. Let $B'$ be the matrix obtained by permuting the rows of $B_0$ so that a row of weight $3$ is paired with each row $g_i$ satisfying ${\rm wt}(g_i)\equiv 1\pmod 4$. Note that still $(B')(B')^T=I$, and each row of $G''=[G~|~B']$ is doubly even. Hence, the code generated by $G''$ is a doubly even self-orthogonal code with length $n+k$.

    Next, we show that $n_{de}(\mathcal{C})=n_s(\mathcal{C})+1$ if and only if $m\equiv 3\pmod 4$. Assume that $n_{de}(\mathcal{C})=n_s(\mathcal{C})+1$. Since $\langle B\rangle$ is an LCD code, $\langle B\rangle^\perp$ is a one-dimensional LCD code. Let $\mathbf{h}\in\langle B\rangle^\perp$ be a nonzero element. Since $\langle B\rangle^\perp$ is an LCD code, $\mathrm{wt}(\mathbf{h})$ must be odd. Then the rows of $B$ with $\mathbf{h}$ form an orthonormal basis of $\mathbb{F}_2^{k+1}$. Then we have
    \begin{align*}
    \sum_{\mathbf{x}\in\mathbb{F}_2^{k+1}}i^{{\rm wt}(\mathbf{x})}&=\sum_{\mathbf{u}\in\langle B\rangle}\sum_{a\in \mathbb{F}_2}i^{{\rm wt}(\mathbf{u}+a\mathbf{h})}\\&=\left(\sum_{\mathbf{u}\in\langle B\rangle}i^{{\rm wt}(\mathbf{u})}\right)\left(\sum_{a\in \mathbb{F}_2}i^{{\rm wt}(a\mathbf{h})}\right)\\&=(-i)^m(1+i)^k(1+i^{{\rm wt}(\mathbf{h})}).
    \end{align*}
    Since $\sum_{\mathbf{x}\in\mathbb F_2^{k+1}} i^{{\rm{wt}}(\mathbf{x})}=(1+i)^{k+1}$, dividing by $(1+i)^k$, we have
    \[
    1+i=(-i)^m\left(1+i^{{\rm wt}(\mathbf{h})}\right).
    \]
    Since ${\rm wt}(\mathbf{h})$ is odd, it is either $1$ or $3\pmod 4$. Then $m\equiv 0\pmod 4$ or $m\equiv 3\pmod 4$. Since it cannot be congruent to $0\pmod 4$ (or we will have $n_s(\mathcal{C})=n_{de}(\mathcal{C})$), we conclude that $m\equiv 3\pmod 4$.

    Conversely, suppose that $m\equiv 3\pmod 4$. For $1\le i\le m$, let $b_i$ be the vector of length $k+1$ whose first $m+1$ coordinates are all $1$ except for the $i$-th coordinate which is $0$, and whose remaining coordinates are all $0$. For $m+1\le i\le k$, let $b_i$ be the standard $(i+1)$-th unit vector. Then $b_i$ are pairwise orthogonal and ${\rm wt}(b_i)\equiv 3\pmod 4$ if $1\le i\le m$ and ${\rm wt}(b_i)\equiv 1\pmod 4$ if $m+1\le i\le k$. Let $B$ be the matrix obtained by permuting $b_1\ldots, b_k$ so that each $b_i$ with ${\rm wt}(b_i)\equiv 3\pmod 4$ is paired with each row $g_i$ satisfying ${\rm wt}(g_i)\equiv 1\pmod 4$. Then $G'=[G~|~B]$ generates a doubly even self-orthogonal code of length $n+k+1=n_s(\mathcal{C})+1$. This completes the proof.
\end{proof}

\begin{theorem}\label{doublyevenod}
Let $\mathcal{C}$ be an $[n, k]$ odd-like code over $\mathbb{F}_2$ which is not self-orthogonal. Let
    \[
    G=\begin{bmatrix}G({\rm{Hull}}(\mathcal{C}))\\A \end{bmatrix}
    \]
    be a generator matrix of $\mathcal{C}$ where $AA^T=I$. Let $m=|\{a_i~|~{\rm{wt}}(a_i)\equiv1\pmod 4\}|$ where $a_i$ are the rows of $A$. Then $n_{de}(\mathcal{C})$ is given as follows:
    \[
    n_{de}(\mathcal{C})=\begin{cases}
        n_s(\mathcal{C}) & \mbox{if}~m\equiv 0\pmod 4~\mbox{and}~{\rm Hull}(\mathcal{C})~\mbox{is doubly even},\\
        n_s(\mathcal{C})+1 & \mbox{if}~m\equiv 3\pmod 4~\mbox{and}~{\rm Hull}(\mathcal{C})~\mbox{is doubly even},\\
        n_s(\mathcal{C})+2 & \mbox{otherwise}.
    \end{cases}
    \]
\end{theorem}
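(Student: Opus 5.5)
The plan is to mirror the proof of Theorem~\ref{doublyevenev}, with Lemma~\ref{LCDdoublyevenembeddingodd} playing the role of Lemma~\ref{LCDdoublyevenembedding}.

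Throughout, write $\mathcal{H}={\rm Hull}(\mathcal{C})$, $h=\dim\mathcal{H}$ and $k'=k-h$, the number of rows of $A$. By Theorem~\ref{shortestSOlcd} and Theorem~\ref{shortestSO}, $n_s(\mathcal{C})=n_s(\langle A\rangle)=n+k'$, since $\mathcal{C}$ is odd-like. Because $AA^T=I$, the code $\langle A\rangle$ is an odd-like LCD code with an orthonormal generator matrix, so Lemma~\ref{LCDdoublyevenembeddingodd} computes $n_{de}(\langle A\rangle)$ in terms of $m$.

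\textbf{First case.} By Proposition~\ref{deequals}, $n_{de}(\mathcal{C})=n_s(\mathcal{C})$ if and only if $\mathcal{H}$ is doubly even and $\langle A\rangle$ has a doubly even shortest self-orthogonal embedding. By Lemma~\ref{LCDdoublyevenembeddingodd}, the second condition holds exactly when $m\equiv 0\pmod 4$.

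\textbf{Upper bounds.} First suppose $m\equiv 3\pmod 4$ and $\mathcal{H}$ is doubly even. Take a doubly even self-orthogonal embedding $[A~|~B]$ of $\langle A\rangle$ of length $n+k'+1$, which exists by Lemma~\ref{LCDdoublyevenembeddingodd}. Then $\begin{bmatrix} G(\mathcal{H}) & \mathcal{O}\\ A & B\end{bmatrix}$ is doubly even and self-orthogonal, because $\mathcal{H}$ is orthogonal to $\langle A\rangle$ and doubly even. Hence $n_{de}(\mathcal{C})\le n_s(\mathcal{C})+1$. In every case, the first proposition of this section gives $n_{de}(\mathcal{C})\le n_s(\mathcal{C})+2$.

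\textbf{Lower bounds.} Two facts are needed:
\begin{enumerate}
\item[(a)] if $\mathcal{H}$ is not doubly even, then $n_{de}(\mathcal{C})\ne n_s(\mathcal{C})+1$;
\item[(b)] if $\mathcal{H}$ is doubly even and $m\not\equiv 0,3\pmod 4$, then $n_{de}(\mathcal{C})\ne n_s(\mathcal{C})+1$.
\end{enumerate}

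For (b), let $[G~|~B]$ be a doubly even self-orthogonal embedding of length $n+k'+1$. Its restriction $[A~|~B_A]$ is a doubly even self-orthogonal embedding of $\langle A\rangle$ of length $n_s(\langle A\rangle)+1$. This forces $n_{de}(\langle A\rangle)\le n_s(\langle A\rangle)+1$, which Lemma~\ref{LCDdoublyevenembeddingodd} rules out for such $m$.

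For (a), the main obstacle is the analogue of the ``$T(\mathbf h)\in\langle\mathbf 1\rangle$'' step from the proof of Theorem~\ref{doublyevenev}. Suppose $G'=[G~|~B]$ is a doubly even self-orthogonal embedding with $B$ having $k'+1$ columns, and write $B_A$ for the rows of $B$ corresponding to $A$. Since $B_AB_A^T=AA^T=I$, the rows of $B_A$ are $k'$ orthonormal vectors in $\mathbb{F}_2^{k'+1}$. Then $\langle B_A\rangle$ is LCD of codimension one, so $\mathbb{F}_2^{k'+1}=\langle B_A\rangle\oplus\langle \mathbf{w}\rangle$, where $\langle\mathbf{w}\rangle=\langle B_A\rangle^\perp$ and $\mathbf{w}$ has odd weight.

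Now take a row $\mathbf{h}$ of $G(\mathcal{H})$ with corresponding row $\mathbf{b}$ of $B$. Self-orthogonality gives $\mathbf{b}\cdot\mathbf{b}'=\mathbf{h}\cdot\mathbf{a}=0$ for every row $\mathbf{b}'$ of $B_A$, so $\mathbf{b}\in\{\mathbf{0},\mathbf{w}\}$. The same holds for every codeword of $\mathcal{H}$, which is the image of a linear map into $\langle\mathbf{w}\rangle$. Since $\mathcal{H}$ is self-orthogonal, its codewords are even. If $\mathbf{b}=\mathbf{w}$, then ${\rm wt}(\mathbf{h})+{\rm wt}(\mathbf{w})$ is odd, contradicting double evenness. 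Hence $\mathbf{b}=\mathbf{0}$, and then ${\rm wt}(\mathbf{h})\equiv 0\pmod 4$. So $\mathcal{H}$ is doubly even, contradicting the hypothesis of (a).

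\textbf{Conclusion.} Combining the upper and lower bounds yields $n_s(\mathcal{C})+1$ exactly when $m\equiv 3\pmod 4$ and $\mathcal{H}$ is doubly even, and $n_s(\mathcal{C})+2$ in all remaining cases.
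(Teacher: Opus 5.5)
Your proposal is correct and follows essentially the same route as the paper. You use Proposition~\ref{deequals} together with Lemma~\ref{LCDdoublyevenembeddingodd} for the $n_s(\mathcal{C})$ case; for the $n_s(\mathcal{C})+1$ case you show that the columns appended to the hull rows must lie in the one-dimensional LCD complement $\langle B_A\rangle^\perp=\langle\mathbf{w}\rangle$ and hence vanish. Your weight-parity step (even $\mathbf{h}$ plus odd $\mathbf{w}$) is the paper's ``$\mathbf{c}\cdot\mathbf{c}=0$ versus $\mathbf{u}\cdot\mathbf{u}=1$'' step in different language, and your split into (a) and (b) just unpacks the paper's ``if and only if'' argument.
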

\begin{proof}
By Proposition~\ref{deequals}, $n_{de}(\mathcal{C})=n_s(\mathcal{C})$ if and only if ${\rm{Hull}}(\mathcal{C})$ is doubly even and there is a shortest self-orthogonal embedding of $\langle A\rangle$ which is doubly even, that is $m\equiv 0\pmod 4$ by Lemma~\ref{LCDdoublyevenembeddingodd}.

Next, we show that $n_{de}(\mathcal{C})=n_s(\mathcal{C})+1$ if and only if $m\equiv 3\pmod 4$ and $\mathrm{Hull}(\mathcal{C})$ is doubly even. Suppose that $n_{de}(\mathcal{C})=n_s(\mathcal{C})+1$. Let $\dim\langle A\rangle=k_a$ and let
\[
    G'=\begin{bmatrix}G({\rm Hull}(\mathcal{C})) & C\\A & B \end{bmatrix}
\]
be a generator matrix of a shortest doubly even self-orthogonal embedding $\mathcal{C}'$ of $\mathcal{C}$. Since $\langle B\rangle\subseteq\mathbb{F}_2^{k_a+1}$ is an LCD code with dimension $k_a$, it follows that $\langle B\rangle^\perp=\langle \mathbf{u}\rangle$ for some $\mathbf{u}\in\mathbb{F}_2^{k_a+1}$ such that $\mathbf{u}\cdot\mathbf{u}=1$. Since ${\rm Hull}(\mathcal{C})$ is orthogonal to $\langle A\rangle$, $\langle C\rangle$ is also orthogonal to $\langle B\rangle$, that is, $\langle C\rangle\subseteq \langle \mathbf{u}\rangle$. Also, since ${\rm Hull}(\mathcal{C})$ is self-orthogonal, $\mathbf{c}\cdot\mathbf{c}=0$ for any $\mathbf{c}\in\langle C\rangle$. Then, $\mathbf{c}$ must be $\mathbf{0}$, that is, $C=\mathcal{O}$. Hence ${\rm Hull}(\mathcal{C})$ is doubly even. Note that $[A~|~B]$ is a doubly even self-orthogonal embedding of an LCD code $\langle A\rangle$. This is possible only when $m\equiv 0\mbox{ or }3\pmod 4$ by Lemma~\ref{LCDdoublyevenembeddingodd}. Since $m$ cannot be equivalent to $0\pmod 4$ (or we will have $n_s(\mathcal{C})=n_{de}(\mathcal{C})$), we conclude that $m\equiv 3\pmod 4$.

Conversely, assume that ${\rm{Hull}}(\mathcal{C})$ is doubly even and $m\equiv 3\pmod 4$. By Lemma~\ref{LCDdoublyevenembeddingodd}, there is a doubly even self-orthogonal embedding $\langle [A~|~B]\rangle$ of $\langle A\rangle$ with length $n_s(\mathcal{C})+1$. Then, appending the zero matrix to ${\rm{Hull}}(\mathcal{C})$ gives a doubly even self-orthogonal embedding of $\mathcal{C}$ with length $n_s(\mathcal{C})+1$. This completes the proof.
\end{proof}

The $m$-th order binary extended Hamming code $\mathcal{H}_e(m)$ is the binary extended Hamming code with parameters $[2^m, 2^{m}-m-1, 4]$, which is the extended code of the binary $[2^m-1, 2^m-m-1, 3]$ Hamming code $\mathcal{H}(m)$. It is known that the $3$rd order extended Hamming code $\mathcal{H}_e(3)$ is a doubly even self-orthogonal code. Therefore, we focus on extended Hamming codes with $m\ge 4$.
\begin{theorem}\label{DoublyevenHamming}
    Let $\mathcal{H}_e(m)$ be the $m$-th order binary extended Hamming code where $m\ge 4$. Then the following holds:
\[
n_{de}(\mathcal{H}_e(m))=
\begin{cases}
2^{m+1}-2m-1, & \text{if } m\equiv0,\,3\pmod4,\\
2^{m+1}-2m, & \text{if } m\equiv1\pmod4,\\
2^{m+1}-2m+1, & \text{if } m\equiv2\pmod4.
\end{cases}
\]
\end{theorem}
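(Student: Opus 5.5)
The plan is to reduce everything to Theorem~\ref{doublyevenev}, since $\mathcal{H}_e(m)$ is an even code. Write $n=2^m$ and $k=2^m-m-1$. I will use the standard identifications $\mathcal{H}_e(m)=\mathrm{RM}(m-2,m)$ and $\mathcal{H}_e(m)^\perp=\mathrm{RM}(1,m)$.

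\emph{Step 1: the hull and $n_s$.} For $m\ge 3$ we have $\mathrm{RM}(1,m)\subseteq \mathrm{RM}(m-2,m)$, so ${\rm Hull}(\mathcal{H}_e(m))=\mathrm{RM}(1,m)$ and $\ell=m+1$. Then Theorem~\ref{shortestSO} gives
\[
n_s(\mathcal{H}_e(m))=n+k-\ell+1=2^{m+1}-2m-1 .
\]
The hull has weights $0$, $2^{m-1}$ and $2^m$, all divisible by $4$ when $m\ge 3$, so it is doubly even. Hence Theorem~\ref{doublyevenev} applies with $\dim\langle A\rangle=k-\ell=2^m-2m-2$. Its parameter, which I call $m'$ to avoid a clash with the Hamming order $m$, is $m'=2^{m-1}-m-1$. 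Since $2^{m-1}$ is even, $m'$ is even exactly when $m$ is odd.

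\emph{Step 2: counting doubly even codewords.} Write $\mathcal{C}=\mathcal{H}_e(m)={\rm Hull}(\mathcal{C})\oplus\langle A\rangle$. For $\mathbf{h}$ in the hull and $\mathbf{a}\in\langle A\rangle$ we have $f(\mathbf{h}+\mathbf{a})=f(\mathbf{h})+f(\mathbf{a})+\mathbf{h}\cdot\mathbf{a}=f(\mathbf{a})$. Therefore the number of doubly even codewords of $\mathcal{C}$ equals $2^{m+1}t_A$.

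I will compute that number as $\tfrac12(|\mathcal{C}|+W_{\mathcal{C}}(i))$, where $W_{\mathcal{C}}(i)=\sum_{\mathbf{c}\in\mathcal{C}}i^{{\rm wt}(\mathbf{c})}$. By MacWilliams, $W_{\mathcal{C}}(i)$ is $|\mathcal{C}^\perp|^{-1}$ times a sum over $\mathrm{RM}(1,m)$, in which a word of weight $w$ contributes $(1+i)^{n-w}(1-i)^{w}$. Since $2^{m-1}\equiv 0\pmod 4$, each of the words of weight $0$, $2^{m-1}$ (there are $2^{m+1}-2$ of them) and $2^m$ contributes $2^{2^{m-1}}$. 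Hence $W_{\mathcal{C}}(i)=2^{2^{m-1}}$, and therefore
\[
t_A=2^{k-m-2}+2^{2^{m-1}-m-2}.
\]

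\emph{Step 3: comparison and conclusion.} Compare this with $t_{m'}=2^{2m'-1}+(-1)^{\lfloor (m'+1)/2\rfloor}2^{m'-1}$. The exponents agree: $2m'-1=k-m-2$ and $m'-1=2^{m-1}-m-2$. So $t_A=t_{m'}$ if and only if $\lfloor (m'+1)/2\rfloor$ is even, that is, $m'\equiv 0,3\pmod 4$. Because $m'\equiv -m-1\pmod 4$, this happens exactly when $m\equiv 3$ or $0\pmod 4$. In those cases Theorem~\ref{doublyevenev} gives $n_{de}=n_s=2^{m+1}-2m-1$.

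Otherwise $t_A\ne t_{m'}$ and the parity of $m'$ decides. If $m\equiv 1\pmod 4$, then $m'$ is even and $n_{de}=n_s+1$. If $m\equiv 2\pmod 4$, then $m'$ is odd and $n_{de}=n_s+2$. This matches the three cases of the statement.

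The main obstacle is the MacWilliams evaluation in Step 2. One has to verify carefully that all three weight classes of $\mathrm{RM}(1,m)$ contribute the same power $2^{2^{m-1}}$, which uses $(1\pm i)^{2^{m-1}}=2^{2^{m-2}}$ for $m\ge 3$. One must also track the normalisation by $|\mathcal{C}^\perp|=2^{m+1}$ correctly, because that factor is what makes the exponent of $t_A$ match the one in $t_{m'}$.
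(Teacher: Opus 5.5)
Your proposal is correct and follows the paper's proof essentially step for step. It gets $n_s$ from Theorem~\ref{shortestSO}, shows the hull $\mathcal{R}(1,m)$ is doubly even, evaluates MacWilliams at $(1+i,1-i)$, divides by $|\mathcal{R}(1,m)|$ to obtain $t_A=2^{2\mu-1}+2^{\mu-1}$, and feeds the parity analysis of $\lfloor(\mu+1)/2\rfloor$ and $\mu$ into Theorem~\ref{doublyevenev}. Your explicit justification of that division via $f(\mathbf{h}+\mathbf{a})=f(\mathbf{a})$ supplies a step the paper leaves implicit; your side remark that $(1\pm i)^{2^{m-1}}=2^{2^{m-2}}$ "for $m\ge 3$" actually needs $m\ge 4$, which is harmless since the theorem assumes $m\ge 4$.
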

\begin{proof}
    Note that $\mathcal{H}_e(m)$ has parameters $[2^m, 2^m-m-1, 4]$ and contains its dual which is the first order Reed-Muller code $\mathcal{R}(1, m)$. Let $\mathcal{H}_e(m)=\mathcal{R}(1, m)\oplus\mathcal{D}$ for some $\mathcal{D}\subseteq \mathcal{H}_e(m)$. Then, we have
    \[
    \dim\mathcal{D}=\dim\mathcal{H}_e(m)-\dim\mathcal{R}(1, m)=2^m-2m-2,
    \]
    which is even. Since $\mathcal{H}_e(m)$ is an even code, it follows from Theorem~\ref{shortestSO} that
    \[
    n_s(\mathcal{H}_e(m))=2^m+\dim\mathcal{D}+1=2^{m+1}-2m-1.
    \]
    It is known that the weight enumerator of ${\rm{Hull}}(\mathcal{H}_e(m))=\mathcal{R}(1, m)$ is given as
    \[
    W_{\mathcal{R}(1, m)}(z)=1+(2^{m+1}-2)z^{2^{m-1}}+z^{2^m}.
    \]
    Thus, if $m\ge 3$, then ${\rm{Hull}}(\mathcal{H}_e(m))$ is doubly even. Let $N$ be the number of doubly even codewords in $\mathcal{H}_e(m)$, and let
    \[
    W_{\mathcal{H}_e(m)}(x, y)=\sum_{j=0}^nA_jx^{n-j}y^j
    \]
    be the weight enumerator of $\mathcal{H}_e(m)$. Since $\mathcal{H}_e(m)$ is an even code, we have 
    \begin{align*}
    W_{\mathcal{H}_e(m)}(1, 1)+W_{\mathcal{H}_e(m)}(1, i)&=\sum_{j=0}^{n/2}A_{2j}+\sum_{j=0}^{n/2}A_{2j}i^{2j}\\&=\sum_{j=0}^{n/2} A_{2j} \left( 1 + (-1)^j \right)\\&=2(A_0+A_4+A_8+\cdots).
    \end{align*}
    Thus, $(W_{\mathcal{H}_e(m)}(1, 1)+W_{\mathcal{H}_e(m)}(1, i))/2$ counts $N$. Clearly, $W_{\mathcal{H}_e(m)}(1, 1)=|\mathcal{H}_e(m)|=2^{2^m-m-1}$. Since the weight enumerator of $\mathcal{R}(1, m)$ is given as
    \[
    W_\mathcal{R}(x, y)=x^{2^m}+(2^{m+1}-2)x^{2^{m-1}}y^{2^{m-1}}+y^{2^m},
    \]
    and $\mathcal{H}_e(m)$ is the dual of $\mathcal{R}(1, m)$, we have
    \begin{align*}
    \frac{W_{\mathcal{H}_e(m)}(1, 1)+W_{\mathcal{H}_e(m)}(1, i)}{2}&=\frac{1}{2}(2^{2^m-m-1})+\frac{1}{2|\mathcal{R}(1, m)|}W_\mathcal{R}(1+i, 1-i)\\&=2^{2^m-m-2}+\frac{1}{2^{m+2}}(2^{2^{m-1}}+(2^{m+1}-2)2^{2^{m-1}}+2^{2^{m-1}})\\&=2^{2^m-m-2}+2^{2^{m-1}-1}.
    \end{align*}
    Put $\mu=\dim\mathcal{D}/2=2^{m-1}-m-1$. Then the number of doubly even codewords $N_\mathcal{D}$ in $\mathcal{D}$ is given as
    \[
    N_\mathcal{D}=N/|\mathcal{R}(1,m)|=2^{2\mu-1}+2^{\mu-1}.
    \]
    Note that
    \[
    \left\lfloor\frac{\mu+1}{2}\right\rfloor\equiv \begin{cases}
        0\pmod 2, & \mbox{if }m\equiv 0,\, 3\pmod 4,\\
        1\pmod 2, & \mbox{if }m\equiv 1,\, 2\pmod 4.
    \end{cases}
    \]
    If $m\equiv 0,\,3\pmod 4$, then by Theorem~\ref{doublyevenev},
    \[
    n_{de}(\mathcal{H}_e(m))=2^{m+1}-2m-1.
    \]
    If $m\equiv 1\pmod 4$, then since $\mu$ is even, we have
    \[
    n_{de}(\mathcal{H}_e(m))=n_s(\mathcal{H}_e(m))+1=2^{m+1}-2m.
    \]
    On the other hand, if $m\equiv 2\pmod 4$, then since $\mu$ is odd, it follows that
    \[
    n_{de}(\mathcal{H}_e(m))=n_s(\mathcal{H}_e(m))+2=2^{m+1}-2m+1.
    \]
\end{proof}

\section{Shortest self-orthogonal embeddings of quaternary codes}
Let $\mathcal{C}$ be a linear code over $\mathbb{Z}_4$. As in the binary case, we use the notation $n_s(\mathcal{C})$ to denote the length of the shortest self-orthogonal embedding of $\mathcal{C}$. Throughout this section, for a matrix $A$ over $\mathbb{Z}_4$, we denote $A\pmod 2$ by $\overline{A}$.

\begin{proposition}\label{trivialembedding}
    Let $\mathcal{C}$ be a code of length $n$ over $\mathbb{Z}_4$. Then $n_s(\mathcal{C})\le 4n$.
\end{proposition}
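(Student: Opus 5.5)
The plan is to exhibit an explicit self-orthogonal embedding of length $4n$. Let $G$ be a generator matrix of $\mathcal{C}$, and set
\[
G'=[\,G~|~G~|~G~|~G\,].
\]
This matrix has $4n$ columns, and puncturing the last $3n$ coordinates recovers $G$. So $\langle G'\rangle$ contains $\mathcal{C}$ as a punctured code.

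The key computation is the Gram matrix:
\[
G'G'^T=4GG^T=\mathcal{O}\pmod 4.
\]
Hence every pair of rows of $G'$ is orthogonal over $\mathbb{Z}_4$. Since orthogonality is bilinear, the code $\langle G'\rangle$ is self-orthogonal.

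It remains to check that $\langle G'\rangle$ is an honest embedding, with the same type as $\mathcal{C}$. The map $\mathbf{x}G\mapsto \mathbf{x}G'=(\mathbf{x}G,\mathbf{x}G,\mathbf{x}G,\mathbf{x}G)$ is well defined, since it only depends on $\mathbf{c}=\mathbf{x}G$. It is also a bijective $\mathbb{Z}_4$-module homomorphism from $\mathcal{C}$ onto $\langle G'\rangle$: it is injective because the first block recovers $\mathbf{c}$. Therefore $\langle G'\rangle\cong\mathcal{C}$, the two codes have the same type, and the puncturing map is an isomorphism.

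This gives $n_s(\mathcal{C})\le 4n$. The argument is short; the only point needing care is that "embedding" over $\mathbb{Z}_4$ should preserve the type, and this is handled by the observation that the embedding map is injective.
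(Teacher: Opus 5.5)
Your proof is correct and follows essentially the same approach as the paper: take $G'=[G\mid G\mid G\mid G]$, so that $G'G'^T=4GG^T=\mathcal{O}$ over $\mathbb{Z}_4$. Your additional check that the puncturing map is an isomorphism, so the type of $\mathcal{C}$ is preserved, is a detail the paper leaves implicit.
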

\begin{proof}
    Let $G$ be a generator matrix of $\mathcal{C}$. Define $G'=[G\mid G\mid G\mid G]$. Then $(G')(G')^T=4\cdot GG^T=\mathcal{O}$. Thus $G'$ generates a self-orthogonal embedding of $\mathcal{C}$. This completes the proof.
\end{proof}

\begin{proposition}\label{Hull-type}
    Let $\mathcal{C}$ be a free code over $\mathbb{Z}_4$ with ${\rm{rank}}(\mathcal{C})=k$, and let $G$ be a generator matrix of $\mathcal{C}$. Let $\mathcal{D}$ be the code generated by the rows of $GG^T$. If the type of $\mathcal{D}$ is $4^a2^b$, then the type of ${\rm{Hull}}(\mathcal{C})$ is $4^{k-a-b}2^b$.
\end{proposition}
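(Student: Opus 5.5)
Since $\mathcal{C}$ is free of rank $k$, the map $\mathbf{x}\mapsto \mathbf{x}G$ is a $\mathbb{Z}_4$-module isomorphism $\mathbb{Z}_4^k\to\mathcal{C}$. We will use it to identify ${\rm Hull}(\mathcal{C})$ with a submodule of $\mathbb{Z}_4^k$ that is described by $GG^T$, and then compute the type of that submodule.

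\textbf{Step 1 (identify the hull).} A codeword $\mathbf{x}G$ lies in $\mathcal{C}^\perp$ if and only if $\mathbf{x}G\cdot \mathbf{y}G=0$ for every $\mathbf{y}\in\mathbb{Z}_4^k$. This is equivalent to $\mathbf{x}GG^T\mathbf{y}^T=0$ for all $\mathbf{y}$, that is, $\mathbf{x}GG^T=\mathbf{0}$. Hence
\[
{\rm Hull}(\mathcal{C})\cong K:=\{\mathbf{x}\in\mathbb{Z}_4^k \mid \mathbf{x}M=\mathbf{0}\},\qquad M=GG^T,
\]
and the isomorphism preserves type, because type depends only on the abstract module structure $\mathbb{Z}_4^{a'}\oplus\mathbb{Z}_2^{b'}$.

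\textbf{Step 2 (compute the type of $K$).} Since $M$ is symmetric, $\mathbf{x}M=\mathbf{0}$ is the same as $M\mathbf{x}^T=\mathbf{0}$. So $K$ is the dual $\mathcal{D}^\perp$, in $\mathbb{Z}_4^k$, of the row code $\mathcal{D}=\langle M\rangle$. The preliminaries state that the dual of a length-$k$ code of type $4^a2^b$ has type $4^{k-a-b}2^b$, which gives the claim directly.

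As a safeguard against symmetry subtleties, I would also verify this via the Smith normal form. Write $PMQ=D$ with $P,Q$ invertible over $\mathbb{Z}_4$ and $D={\rm diag}(1^{a},2^{b},0^{k-a-b})$. The numbers of unit and $2$ entries are exactly $a$ and $b$, since the row module of $M$ is isomorphic to that of $D$ and $\mathcal{D}$ has type $4^a2^b$. The left kernel of $M$ is then $\{\mathbf{z}P : \mathbf{z}D=\mathbf{0}\}$. The condition $\mathbf{z}D=\mathbf{0}$ forces the first $a$ coordinates of $\mathbf{z}$ to be $0$, the next $b$ to lie in $\{0,2\}$, and leaves the last $k-a-b$ free. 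This kernel is isomorphic to $\mathbb{Z}_2^{b}\oplus\mathbb{Z}_4^{k-a-b}$, giving type $4^{k-a-b}2^b$.

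\textbf{Main obstacle.} The step that needs the most care is justifying that the Smith form counts $a$ and $b$ match the type of $\mathcal{D}$. Right multiplication by the invertible $Q$ changes the row space of $M$, but only by a module isomorphism, and type is an isomorphism invariant. Once that is checked, the rest is routine.
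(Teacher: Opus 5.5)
Your proposal is correct and follows the paper's argument: both pull ${\rm Hull}(\mathcal{C})$ back along the isomorphism $\mathbf{u}\mapsto\mathbf{u}G$ to the kernel of $GG^T$, identify that kernel (using symmetry of $GG^T$) with $\mathcal{D}^\perp\subseteq\mathbb{Z}_4^k$, and read off the type $4^{k-a-b}2^b$ from the dual-type formula in the preliminaries. Your Smith normal form computation is valid, but it only re-derives that dual-type formula, so the proof does not need it.
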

\begin{proof}
    Define a $\mathbb{Z}_4$-module isomorphism $\varphi:\mathbb{Z}_4^k\to \mathcal{C}$ as $\varphi(\mathbf{u})=\mathbf{u}G$ for $\mathbf{u}\in\mathbb{Z}_4^k$. Then $\varphi(\mathbf{u})\in{\rm{Hull}}(\mathcal{C})$ if and only if $\varphi(\mathbf{u})\in\mathcal{C}^\perp$, that is,
    \[
    \mathbf{0}=G\varphi(\mathbf{u})^T=G(\mathbf{u}G)^T=GG^T\mathbf{u}.
    \]
    It follows that $\mathbf{u}\in\varphi^{-1}({\rm{Hull}}(\mathcal{C}))$ if and only if $\mathbf{u}\in\mathcal{D}^\perp$. Since $\varphi$ is an isomorphism, the type of ${\rm{Hull}}(\mathcal{C})$ is equal to the type of $\varphi^{-1}({\rm{Hull}}(\mathcal{C}))=\mathcal{D}^\perp$, which is $4^{k-a-b}2^b$.
\end{proof}

\begin{lemma}\label{add-column-type}
    Let $G$ be a $k\times n$ matrix over $\mathbb{Z}_4$. Let $\mathcal{D}$ be the code generated by the rows of $GG^T$ such that the type of $\mathcal{D}$ is $4^a2^b$. For a vector $\mathbf{v}\in\mathbb{Z}_4^k$, let $G'$ be the concatenation of $G$ by $\mathbf{v}$, that is, $G'=[G\mid\mathbf{v}^T]$. Then the code $\mathcal{D}'$, generated by the rows of $G'(G')^T$, has one of the following types:
    \[
    4^{a}2^b,\,4^a2^{b\pm 1},\,4^{a+1}2^b,\,4^{a+1}2^{b-1},\,4^{a+1}2^{b-2},\,4^{a-1}2^b,\,4^{a-1}2^{b+1},\,4^{a-1}2^{b+2}.
    \]
\end{lemma}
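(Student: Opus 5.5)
Write $M=GG^T$, a symmetric $k\times k$ matrix over $\mathbb{Z}_4$. Then
\[
G'(G')^T=M+\mathbf{v}^T\mathbf{v},
\]
so $\mathcal{D}'$ is the row module of $M$ perturbed by the rank-one matrix $R=\mathbf{v}^T\mathbf{v}$. The plan is to control the two invariants of the type separately. The type of a $\mathbb{Z}_4$-module $\langle X\rangle$ is determined by the binary rank $r_1(X)=\mathrm{rank}(\overline{X})$, which equals the free rank, and by the total $\log_2|\langle X\rangle|=2a+b$. Since $r_1=a$ and $a+b=\dim{\rm Tor}$, it is equivalent to control $a$ and $a+b$. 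The claim is exactly that $|a'-a|\le 1$, $|(a'+b')-(a+b)|\le 1$, and that the combinations $(a+1,b+1)$ and $(a-1,b-1)$ do not occur. Listing all pairs with $a'-a\in\{-1,0,1\}$ and $(a'+b')-(a+b)\in\{-1,0,1\}$ and removing $(a\pm1,b\pm1)$ leaves precisely the listed types $4^{a'}2^{b'}$.

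\textbf{Free rank.} We have $\overline{M+R}=\overline{M}+\overline{\mathbf{v}}^T\overline{\mathbf{v}}$, and $\overline{\mathbf{v}}^T\overline{\mathbf{v}}$ has binary rank at most $1$, so $|a'-a|\le 1$.

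\textbf{Size of the torsion.} For the other invariant I use the torsion code ${\rm Tor}(\langle X\rangle)$, whose dimension is $a+b$. By the duality ${\rm Tor}(\mathcal{D})^\perp={\rm Res}(\mathcal{D}^\perp)$ and symmetry of $M$, ${\rm Tor}(\langle M\rangle)$ is the binary dual of the residue of $\ker M=\{\mathbf{u}:M\mathbf{u}^T=0\}$. Hence it suffices to compare $\dim{\rm Res}(\ker M)$ with $\dim{\rm Res}(\ker(M+R))$. Restrict to the subgroup $K_0=\{\mathbf{u}:\mathbf{v}\cdot\mathbf{u}=0\}$. It has index dividing $4$ in $\mathbb{Z}_4^k$, and its residue has codimension at most $1$, because $\mathbf{u}\mapsto\mathbf{v}\cdot\mathbf{u}$ is a homomorphism to $\mathbb{Z}_4$ and the residue image loses at most one binary dimension. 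On $K_0$ the maps $M$ and $M+R$ agree, so $\ker M\cap K_0=\ker(M+R)\cap K_0$. The main technical step is then to show that ${\rm Res}(\ker M\cap K_0)$ has codimension at most $1$ in ${\rm Res}(\ker M)$. I expect this to be the hardest point, since intersecting with an index-$4$ subgroup could a priori lose two binary dimensions in the residue. To handle it, I would split on whether $\overline{\mathbf{v}}=0$. If $\overline{\mathbf{v}}=0$, then $\mathbf{v}=2\mathbf{w}$, so $K_0$ is the index-$2$ subgroup $\{\mathbf{u}:\overline{\mathbf{w}}\cdot\overline{\mathbf{u}}=0\}$, which contains $2\mathbb{Z}_4^k$, and the codimension bound is immediate. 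If $\overline{\mathbf{v}}\ne 0$, pick $\mathbf{u}_0\in\ker M$ with residue outside the relevant hyperplane. The freedom to add $2\mathbb{Z}_4^k\cap\ker M$ adjusts $\mathbf{v}\cdot\mathbf{u}$ only by even amounts, so I would argue directly that residue vectors $\overline{\mathbf{u}}$ with $\overline{\mathbf{v}}\cdot\overline{\mathbf{u}}=0$ can be lifted into $K_0$ up to one further binary condition. By symmetry between $M$ and $M+R$, this gives $|(a'+b')-(a+b)|\le1$.

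\textbf{Excluding the two combinations.} Finally I rule out $(a+1,b+1)$ and $(a-1,b-1)$, which would change $2a+b$ by $\pm3$. I would use a counting argument: $|\langle M+R\rangle|\le |\langle M\rangle|\cdot|\langle \mathbf{v}\rangle|\le 4|\langle M\rangle|$, since every row of $M+R$ lies in $\langle M\rangle+\langle\mathbf{v}\rangle$. This gives $2a'+b'\le 2a+b+2$. The same argument with $M=(M+R)-R$ gives the reverse inequality. Together these exclude the $\pm3$ jumps and complete the list of possible types.
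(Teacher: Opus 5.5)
The free-rank step is fine and is essentially the paper's: $\overline{M+R}=\overline{M}+\overline{\mathbf{v}}^T\overline{\mathbf{v}}$ is a perturbation of binary rank at most $1$. The gap is in the bound on $a+b$.

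You correctly reduce it to showing that $\mathrm{Res}(\ker M\cap K_0)$ has codimension at most $1$ in $\mathrm{Res}(\ker M)$. For $\overline{\mathbf{v}}\neq\mathbf{0}$, however, you only announce an argument. The mechanism you describe is the hyperplane condition $\overline{\mathbf{v}}\cdot\overline{\mathbf{u}}=0$ plus ``up to one further binary condition''. Read literally, that gives codimension up to $2$, which is exactly the loss you needed to rule out. You also presuppose a $\mathbf{u}_0\in\ker M$ with $\overline{\mathbf{v}}\cdot\overline{\mathbf{u}}_0=1$, which need not exist.

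The missing observation is that the two conditions never bind at the same time.
\begin{itemize}
\item If such a $\mathbf{u}_0$ exists, take $\mathbf{u}\in\ker M$ with $\overline{\mathbf{v}}\cdot\overline{\mathbf{u}}=0$. Then $\mathbf{v}\cdot\mathbf{u}\in\{0,2\}$ and $\mathbf{v}\cdot 2\mathbf{u}_0=2$, so one of $\mathbf{u}$ and $\mathbf{u}+2\mathbf{u}_0$ lies in $K_0$ with the same residue. Hence $\mathrm{Res}(\ker M\cap K_0)=\mathrm{Res}(\ker M)\cap\overline{\mathbf{v}}^{\perp}$.
\item If no such $\mathbf{u}_0$ exists, the hyperplane condition is vacuous on $\mathrm{Res}(\ker M)$. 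Then $\ker M\cap K_0$ is the kernel of a homomorphism $\ker M\to 2\mathbb{Z}_4$, so it has index at most $2$, and its residue has codimension at most $1$.
\end{itemize}
In one line: $\ker M/(\ker M\cap K_0)$ embeds in $\mathbb{Z}_4$, so it is cyclic. It surjects onto the elementary abelian group $\mathrm{Res}(\ker M)/\mathrm{Res}(\ker M\cap K_0)$, which therefore has order at most $2$.

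With this filled in, your dual route is valid, but it is a detour. The paper never passes to kernels. It uses $\mathcal{D}'\subseteq\mathcal{D}+\langle\mathbf{v}\rangle$ and $\mathcal{D}\subseteq\mathcal{D}'+\langle\mathbf{v}\rangle$. It then uses that $a$ (the residue dimension) and $a+b$ (the minimal number of generators, equivalently $\dim\mathrm{Tor}$) are monotone under inclusion and increase by at most one when a single generator is adjoined. You already wrote down this containment in your last paragraph, but only used it to bound $|\mathcal{D}'|$.

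That last paragraph is in fact unnecessary. The pairs $(a+1,b+1)$ and $(a-1,b-1)$ change $a+b$ by $\pm 2$, so $|(a'+b')-(a+b)|\le 1$ already excludes them. The nine pairs with $|a'-a|\le 1$ and $|(a'+b')-(a+b)|\le 1$ are exactly the listed types.
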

\begin{proof}
Let the type of $\mathcal{D}'$ be $4^{a'}2^{b'}$. Since
\[
G'(G')^T=GG^T+\mathbf{v}^T\mathbf{v},
\]
and
\[
\langle \mathbf{v}^T\mathbf{v}\rangle=\mathrm{span}\{v_i\cdot\mathbf{v}\mid 1\le i\le k\},
\]
we have 
\[
\mathcal{D}'=\langle GG^T+\mathbf{v}^T\mathbf{v}\rangle\subseteq \langle GG^T\rangle+\langle\mathbf{v}^T\mathbf{v}\rangle\subseteq \mathcal{D}+\langle \mathbf{v}\rangle.
\]
Since the free rank of $\langle \mathbf{v}\rangle$ is at most one, the free rank of $\mathcal{D}+\langle\mathbf{v}\rangle$ is at most $a+1$. It follows that $a'\le a+1$. Similarly, we have $\mathcal{D}\subseteq \mathcal{D}'+\langle\mathbf{v}\rangle$ and this implies that $a\le a'+1$. Thus, we have $|a'-a|\le 1$, that is, $a'\in\{a,\, a+1,\, a-1\}$.

Since the size of a minimal generating set of $\mathcal{D}+\langle \mathbf{v}\rangle$ cannot exceed $(a+b)+1$, and $\mathbb{Z}_4$ is a principal ideal ring, we have $a'+b'\le a+b+1$. Similarly, we also have $a+b\le a'+b'+1$. Thus, we obtain $|(a'-a)+(b'-b)|\le 1$, that is, $a'+b'\in\{a+b,\,a+b+1,\,a+b-1\}$. This completes the proof.
\end{proof}

For a $k\times k$ symmetric matrix $M$ over $\mathbb{Z}_4$, define $\nu:M_{k}(\mathbb{Z}_4)\to \mathbb{Z}_{\ge 0}\cup \{\infty\}$ by
\[
\nu(M) = \begin{cases} 
0, & \text{if }M=\mathcal{O},\\
\min \{ m\in\mathbb{Z}_{> 0} \mid M + HH^T = \mathcal{O}, \, H \in M_{k \times m}(\mathbb{Z}_4) \}, & \text{if such } H \text{ exists,} \\ 
\infty, & \text{otherwise.} 
\end{cases}
\]
For a linear code $\mathcal{C}$ over $\mathbb{Z}_4$ of length $n$, let $G$ be a generator matrix of $\mathcal{C}$. It is straightforward that $n_s(\mathcal{C})-n=\nu(GG^T)$.

For matrices $A$ and $B$ over $\mathbb{Z}_4$, if there is an invertible matrix $U$ over $\mathbb{Z}_4$ such that $A=UBU^T$, then $A$ and $B$ are said to be \textit{congruent}.

\begin{remark}\label{embed-cong}
    Let $M$ be a $k\times k$ symmetric matrix over $\mathbb{Z}_4$ such that $\nu(M)=m<\infty$. Then, there is a matrix $X\in M_{k\times m}(\mathbb{Z}_4)$ such that $M+XX^T=\mathcal{O}$. For any invertible matrix $U\in GL_k(\mathbb{Z}_4)$, define $Y=UX$. Then we have
    \[
    UMU^T+YY^T=UMU^T+UXX^TU^T=U(M+XX^T)U^T=\mathcal{O}.
    \]
    Thus $\nu(UMU^T)\le m$.

    On the other hand, let $Y'$ be a $k\times \nu(UMU^T)$ matrix over $\mathbb{Z}_4$ such that $UMU^T+(Y')(Y')^T=\mathcal{O}$. Define $X'=U^{-1}Y'$. Then we have
    \begin{align*}
    M+(X')(X')^T&=U^{-1}(UMU^T)(U^T)^{-1}+U^{-1}(Y')(Y')^T(U^{-1})^T\\&=U^{-1}((UMU^T)+(Y')(Y')^T)(U^{-1})^T\\&=\mathcal{O}.
    \end{align*}
    Hence, $m\le \nu(UMU^T)$. This implies that any two congruent matrices have the same $\nu$ value.
\end{remark}

\begin{lemma}\label{embed-inequal}
    Let $M$ be a $k\times k$ symmetric matrix over $\mathbb{Z}_4$ such that 
    \[
    M=\begin{pmatrix} A & \mathcal{O}\\\mathcal{O} & B \end{pmatrix}
    \]
    for some $A\in M_{k_1}(\mathbb{Z}_4)$ and $B\in M_{k_2}(\mathbb{Z}_4)$ where $k_1+k_2=k$. Then $\nu(M)\le \nu(A)+\nu(B)$.
\end{lemma}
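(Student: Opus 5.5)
The plan is a block-diagonal construction. First dispose of the infinite case: if either $\nu(A)$ or $\nu(B)$ is $\infty$, the right-hand side is $\infty$ and there is nothing to prove. So assume $\nu(A)=m_1<\infty$ and $\nu(B)=m_2<\infty$.

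If $m_1>0$, the definition of $\nu$ gives a matrix $X\in M_{k_1\times m_1}(\mathbb{Z}_4)$ with $A+XX^T=\mathcal{O}$. If $m_2>0$, it likewise gives $Y\in M_{k_2\times m_2}(\mathbb{Z}_4)$ with $B+YY^T=\mathcal{O}$.

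Set
\[
H=\begin{pmatrix} X & \mathcal{O}\\ \mathcal{O} & Y\end{pmatrix}\in M_{k\times(m_1+m_2)}(\mathbb{Z}_4).
\]
Block multiplication gives $HH^T=\operatorname{diag}(XX^T,\,YY^T)$. Hence $M+HH^T=\operatorname{diag}(A+XX^T,\,B+YY^T)=\mathcal{O}$. Therefore $\nu(M)\le m_1+m_2$, since $\nu(M)$ is the minimum over all such $H$ (or $\nu(M)=0$ if $M=\mathcal{O}$, in which case the inequality is trivial).

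The only real subtlety is the degenerate case where one of the values is $0$, say $m_1=0$, meaning $A=\mathcal{O}$ and there is no $X$ with zero columns in the definition. In that case drop the first block column of $H$. That is, take $H=\begin{pmatrix}\mathcal{O}\\ Y\end{pmatrix}$ of size $k\times m_2$; the same computation gives $M+HH^T=\mathcal{O}$. The symmetric case $m_2=0$ is handled the same way, and if both are $0$ then $M=\mathcal{O}$ and $\nu(M)=0$. No step is a genuine obstacle; it is just careful bookkeeping of these edge cases.
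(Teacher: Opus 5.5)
Your proposal is correct and is essentially the paper's proof: you dispose of the infinite case, then take witnesses $X$ and $Y$ for $\nu(A)$ and $\nu(B)$ and use the block-diagonal matrix $\begin{pmatrix} X & \mathcal{O}\\ \mathcal{O} & Y\end{pmatrix}$. Your separate handling of $\nu(A)=0$ or $\nu(B)=0$, where the definition supplies no witness, is a point the paper skips, but it does not change the argument.
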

\begin{proof}
If either $\nu(A)$ or $\nu(B)$ is infinite, the inequality holds trivially. Suppose that both $\nu(A)$ and $\nu(B)$ are finite. Take $X\in M_{k_1\times \nu(A)}(\mathbb{Z}_4)$ and $Y\in M_{k_2\times \nu(B)}(\mathbb{Z}_4)$ such that $A+XX^T=\mathcal{O}$ and $B+YY^T=\mathcal{O}$, respectively. This gives \[
    M+\begin{pmatrix} X & \mathcal{O}\\\mathcal{O} & Y \end{pmatrix}\begin{pmatrix} X & \mathcal{O}\\\mathcal{O} & Y \end{pmatrix}^T=M-M=\mathcal{O}.
    \]
    Therefore, $\nu(M)\le \nu(A)+\nu(B)$.
\end{proof}

\begin{lemma}\label{embed-upper}
    Let $M$ be a symmetric matrix over $\mathbb{Z}_4$ such that $\nu(M)<\infty$. If the type of $\langle M\rangle$ is $4^a2^b$, then $\nu(M)\le 3a+2b$.
\end{lemma}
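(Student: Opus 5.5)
The plan is to reduce $M$ by congruence to a block-diagonal normal form, bound $\nu$ of each block separately, and add the bounds using Lemma~\ref{embed-inequal}. By Remark~\ref{embed-cong}, $\nu$ is a congruence invariant. Congruence $M\mapsto UMU^T$ with $U\in GL_k(\mathbb{Z}_4)$ does not change the type of $\langle M\rangle$, since $\langle UMU^T\rangle=\langle MU^T\rangle$ and $U^T$ is invertible. So we may replace $M$ by any congruent matrix.

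\textbf{Step 1: normal form.} I would prove that every symmetric $M$ over $\mathbb{Z}_4$ is congruent to a block-diagonal matrix built from the following blocks:
\begin{itemize}
\item $1\times 1$ unit blocks $(1)$ or $(3)$;
\item $2\times 2$ unit blocks $P=\begin{pmatrix}0&1\\1&0\end{pmatrix}$, possibly also $\begin{pmatrix}2&1\\1&2\end{pmatrix}$;
\item $1\times 1$ blocks $(2)$;
\item $2\times 2$ blocks $2P$;
\item zero blocks.
\end{itemize}
The construction is the standard Gram--Schmidt style reduction. If some diagonal entry is a unit, move it to position $(1,1)$ and clear its row and column by elementary congruences. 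If all diagonal entries are even but some off-diagonal entry is a unit, the corresponding $2\times 2$ principal submatrix is invertible mod $2$, hence invertible over $\mathbb{Z}_4$, and it can be split off as an orthogonal summand. Both cases recurse. What remains is $2M_0$, which reduces in the same way using the binary classification of $\overline{M_0}$. In this form the unit blocks account for the free rank $a$, with a $1\times1$ unit block contributing $1$ and a $2\times 2$ unit block contributing $2$. The blocks $(2)$ and $2P$ account for $b$, contributing $1$ and $2$ respectively.

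\textbf{Step 2: block-by-block bounds.} For each block $N$ I exhibit $H$ with $N+HH^T=\mathcal{O}$. The only squares in $\mathbb{Z}_4$ are $0$ and $1$, which is what makes the constructions below work.
\begin{itemize}
\item For $(1)$ we need a vector with $h\cdot h=3$: take $h=(1,1,1)$, so $3$ columns.
\item For $(3)$ take $h=(1)$, so $1$ column.
\item For $(2)$ we need $h\cdot h=2$: take $h=(1,1)$, so $2$ columns.
\item For $2P$ take $h_1=(2,0,0,0)$ and $h_2=(1,1,1,1)$. Then $h_1\cdot h_1=0$, $h_2\cdot h_2=0$ and $h_1\cdot h_2=2$, so $4=2\cdot 2$ columns.
\item For $P$ we need $h_1\cdot h_1=h_2\cdot h_2=0$ and $h_1\cdot h_2=3$. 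Take $h_1=(1,1,1,1,0)$ and $h_2=(0,1,1,1,1)$, which uses $5\le 6=3\cdot 2$ columns.
\item For the block $\begin{pmatrix}2&1\\1&2\end{pmatrix}$ (if it is genuinely needed) I would do a similar small search, expecting at most $6$ columns, or else show it is congruent to a sum of blocks already handled.
\end{itemize}
In every case each unit of free rank costs at most $3$ columns and each unit of $b$ costs at most $2$. Zero blocks cost nothing, because we may pad $H$ with zero rows. Summing with Lemma~\ref{embed-inequal} gives $\nu(M)\le 3a+2b$.

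\textbf{Main obstacle and the role of finiteness.} The main difficulty is Step~1: a careful classification of symmetric matrices over $\mathbb{Z}_4$ up to congruence. The delicate point is the $2$-part. After splitting off the unit part, the residual $2M_0$ need not diagonalize, and its off-diagonal structure interacts with the diagonal of $M_0$ modulo $2$. There is a further subtlety: some residual block, for example one with a diagonal entry that is $1$ or $3$ in a slot forced to be zero, might have infinite $\nu$. This is exactly where the hypothesis $\nu(M)<\infty$ enters. I would use it to exclude such degenerate residual forms and show that only the blocks listed above occur. If this classification becomes messy, my fallback is to avoid full normalization: split off only the unit part via the recursion above, and handle the remaining $2M_0$ by reducing $\overline{M_0}$ over $\mathbb{F}_2$ with the binary hull and LCD decomposition (Theorem~\ref{LCDbasis}).
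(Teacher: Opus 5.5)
Your plan is essentially the paper's proof. The paper also splits off, by congruence:
\begin{itemize}
\item a unit diagonal entry, at cost $\le 3$;
\item a $2\times 2$ block with even diagonal and a unit off-diagonal entry, at cost $\le 5$;
\item a diagonal entry $2$, at cost $2$;
\item a block $2P$, at cost $4$, with the same witness as yours up to column order.
\end{itemize}
It then adds the costs with Lemma~\ref{embed-inequal}. The differences are only packaging: the paper runs an induction on $a+b$, and it writes an explicit witness for each raw block $\begin{pmatrix} m_1&u\\u&m_2\end{pmatrix}$ with $m_1,m_2\in\{0,2\}$ instead of first normalizing the unit $2\times 2$ blocks.

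There are two loose ends in your write-up.

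First, the block $\begin{pmatrix}2&1\\1&2\end{pmatrix}$ genuinely occurs, so your fallback of rewriting it through blocks already handled would fail.
\begin{itemize}
\item It is not congruent to $P$. For a matrix with even diagonal, $\mathbf{x}M\mathbf{x}^T \bmod 4$ depends only on $\mathbf{x}\bmod 2$. This value vanishes on three residue classes for $P$ but on only one for this block, and the count is a congruence invariant.
\item It is not congruent to anything with a unit on the diagonal, because even diagonal is preserved under congruence.
\end{itemize}
The direct search is trivial: $H=\begin{bmatrix}0&1&1\\1&0&3\end{bmatrix}$ gives $HH^T=-\begin{pmatrix}2&1\\1&2\end{pmatrix}$ using only $3$ columns. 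This is exactly the paper's $m_1=m_2=2$ case.

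Second, the hypothesis $\nu(M)<\infty$ plays no role, and there are no degenerate residual forms to exclude. The residual $2M_0$ is handled directly by the binary classification of $\overline{M_0}$. This works because $2M_0 \bmod 4$ depends only on $\overline{M_0}$, and any lift of a matrix in $GL_k(\mathbb{F}_2)$ is invertible over $\mathbb{Z}_4$. Every block in your list has finite $\nu$, so $\nu(M)<\infty$ holds automatically for every symmetric $M$ over $\mathbb{Z}_4$. The paper's proof does not use this hypothesis either.
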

\begin{proof}
    If $a=b=0$, it is trivial that $\nu(M)=0$. Assume that the claim holds for every symmetric matrix $M_0$ over $\mathbb{Z}_4$ such that $\nu(M_0)<\infty$ with type $4^{a_0}2^{b_0}$ where $a_0+b_0<a+b$. We consider four cases based on the entries of $M$.
    \begin{enumerate}
        \item[(i)] Let $M$ have at least one odd entry in its diagonal. By permuting rows and columns of $M$, we assume that the $(1, 1)$-entry $u$ is odd without loss of generality. Since any odd element in $\mathbb{Z}_4$ is a unit, by applying simultaneous row and column operations, we obtain a congruent matrix
        $\begin{pmatrix}
            u & \mathbf{0}\\\mathbf{0} & M_1
        \end{pmatrix}$
        where $M_1$ is symmetric. Since $\langle M_1\rangle$ has type $4^{a-1}2^b$, by the inductive hypothesis, we have $\nu(M_1)\le 3(a-1)+2b$. Note that since $u$ is odd, we have $u=1$ or $3$ in $\mathbb Z_4$. If $u=1$, then $-u=3=1^2+1^2+1^2$, while if $u=3$, then $-u=1=1^2$ over $\mathbb Z_4$. Hence $\nu([u])\le 3$. Then, by Remark~\ref{embed-cong} and Lemma~\ref{embed-inequal}, we obtain $\nu(M)\le \nu([u])+\nu(M_1)\le 3+3(a-1)+2b=3a+2b$.
        \item[(ii)] Let every diagonal entry of $M$ be even, and let at least one off-diagonal entry be odd. By row and column permutations, we assume that the upper left $2\times 2$ submatrix of $M$ is of the form
        $M_1=\begin{pmatrix}
            m_1 & u\\u & m_2
        \end{pmatrix}$
        where $m_1$ and $m_2$ are even integers and $u$ is an odd integer. By row and column operations, we have a congruent matrix $\begin{pmatrix} M_1 & \mathcal{O}\\\mathcal{O} & M_2\end{pmatrix}$ where $M_2$ is symmetric. Define a matrix $A$ as follows:
        \[
        A=\begin{cases}
            \begin{bmatrix}
                0 & 1 & 1\\1 & 0 & -u
            \end{bmatrix}, & \text{if }m_1=m_2=2,\\[1.0em]
            \begin{bmatrix}
                0 & 1 & 1 & 1 & 1\\1 & 0 & 0 & 0 & -u
            \end{bmatrix}, & \text{if }m_1=0,\, m_2=2,\\[1.0em]
            \begin{bmatrix}
                1 & 0 & 0 & 0 & -u\\0 & 1 & 1 & 1 & 1
            \end{bmatrix}, & \text{if }m_1=2,\, m_2=0,\\[1.0em]
            \begin{bmatrix}
                0 & 1 & 1 & 1 & 1\\1 & 1 & 1 & 0 & u
            \end{bmatrix}, & \text{if }m_1=m_2=0.\\
        \end{cases}
        \]
        Since $AA^T=-M_1$, we have $\nu(M_1)\le 5\le 6$. Hence, we obtain $\nu(M)\le \nu(M_1)+\nu(M_2)\le 3\cdot 2+3(a-2)+2b=3a+2b$.
        \item[(iii)] Let every entry of $M$ be even, and let $M$ have at least one nonzero entry in its diagonal. Without loss of generality, assume that the $(1, 1)$-entry is $2$. Then we obtain a congruent matrix $\begin{pmatrix}
            2 & \mathbf{0}\\\mathbf{0}^T & M_1
        \end{pmatrix}$ where $M_1$ is symmetric. Then we have $\nu(M)\le \nu([2])+\nu(M_1)\le 2+3a+2(b-1)=3a+2b$.
        \item[(iv)] Let every entry of $M$ be even and let every diagonal entry be zero. Assume that the upper left $2\times 2$ submatrix $M_1$ of $M$ is $M_1=\begin{pmatrix}
            0 & 2\\2 & 0
        \end{pmatrix}$. Then we have a congruent matrix $\begin{pmatrix} M_1 & \mathcal{O}\\\mathcal{O} & M_2\end{pmatrix}$ where $M_2$ is symmetric. Since $-M_1=\begin{pmatrix}0 & 0 & 0 & 2\\1 & 1 & 1 &1 \end{pmatrix}\begin{pmatrix}0 & 0 & 0 & 2\\1 & 1 & 1 &1 \end{pmatrix}^T$, we have $\nu(M_1)\le 4$. It follows that $\nu(M)\le \nu(M_1)+\nu(M_2)\le 4+3a+2(b-2)=3a+2b$.  
    \end{enumerate}
    Thus, in all cases, $\nu(M)\le 3a+2b$.
\end{proof}

\begin{theorem}\label{embedding-bound}
    Let $\mathcal{C}$ be a linear code over $\mathbb{Z}_4$ with length $n$. Let $G$ be a generator matrix of $\mathcal{C}$ and let $\mathcal{D}=\langle GG^T\rangle$ be of type $4^a2^b$. Then
    \[
    n+a+b\le n_s(\mathcal{C})\le n+3a+2b.
    \]
\end{theorem}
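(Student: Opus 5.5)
Both inequalities reduce to statements about $\nu(GG^T)$, since $n_s(\mathcal{C})-n=\nu(GG^T)$ as observed right after the definition of $\nu$. So it suffices to prove $a+b\le \nu(GG^T)\le 3a+2b$, where $4^a2^b$ is the type of $\mathcal{D}=\langle GG^T\rangle$.

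For the upper bound I will first check that $\nu(GG^T)<\infty$. By Proposition~\ref{trivialembedding}, taking $H=[G\mid G\mid G]$ gives $GG^T+HH^T=4GG^T=\mathcal{O}$, so $\nu(GG^T)\le 3n<\infty$. The matrix $GG^T$ is symmetric, so Lemma~\ref{embed-upper} applies directly and gives $\nu(GG^T)\le 3a+2b$. Hence $n_s(\mathcal{C})\le n+3a+2b$.

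For the lower bound I will use Lemma~\ref{add-column-type}. Let $m=\nu(GG^T)$ and pick $H\in M_{k\times m}(\mathbb{Z}_4)$ with $GG^T+HH^T=\mathcal{O}$. Then $G'=[G\mid H]$ is obtained from $G$ by appending the $m$ columns of $H$ one at a time. Track the type $4^{a_j}2^{b_j}$ of $\langle G_jG_j^T\rangle$ after $j$ columns have been appended, and the quantity $s_j=a_j+b_j$, the size of a minimal generating set. Lemma~\ref{add-column-type} gives $|s_{j+1}-s_j|\le 1$ at each step, as shown in its proof via $a'+b'\in\{a+b,a+b\pm1\}$. We start at $s_0=a+b$ and end at $G'(G')^T=\mathcal{O}$, so $s_m=0$. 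Therefore $m\ge a+b$, which gives $n_s(\mathcal{C})\ge n+a+b$.

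The only delicate point is that Lemma~\ref{add-column-type} is stated for a single appended column, so it must be iterated; it does apply to any $k\times n$ matrix, so the induction is immediate. The other point is that the upper bound needs finiteness of $\nu$, which the trivial embedding supplies. The lower bound could also be obtained as a rank argument: $\langle HH^T\rangle=\langle GG^T\rangle$ is generated by the columns of $H$ in the sense that $\langle HH^T\rangle\subseteq\langle H^T\rangle$, which has at most $m$ generators. I will present the column-by-column version because it matches the lemma already available.
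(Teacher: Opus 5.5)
Your proof is correct and follows the paper's own argument. The lower bound comes from iterating Lemma~\ref{add-column-type} column by column, so $a+b$ drops by at most one per appended column. The upper bound comes from applying Lemma~\ref{embed-upper} to the symmetric matrix $GG^T$; your check that $\nu(GG^T)\le 3n<\infty$ via Proposition~\ref{trivialembedding} supplies the finiteness hypothesis of Lemma~\ref{embed-upper}, which the paper uses without comment.
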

\begin{proof}
    Let $G'$ be a generator matrix of a shortest self-orthogonal embedding $\mathcal{C}'$ of $\mathcal{C}$ such that first $n$ columns constitute $G$. By Lemma~\ref{add-column-type}, each appended column can decrease the value of $a+b$ by at most one. Therefore, at least $a+b$ columns must be appended to change the type from $4^a2^b$ to $4^02^0$. Hence $n+a+b\le n_s(\mathcal{C})$. Since the matrix $GG^T$ is symmetric, by Lemma~\ref{embed-upper}, it follows that $n_s(\mathcal{C})\le n+3a+2b$.
\end{proof}

\begin{corollary}\label{free_emb_bound}
    Let $\mathcal{C}$ be a free code over $\mathbb{Z}_4$ with length $n$ and rank $k$. If ${\rm{Hull}}(\mathcal{C})$ is of type $4^a2^b$, then 
    \[
    n+k-a\le n_s(\mathcal{C})\le n+3(k-a)-b.
    \]
\end{corollary}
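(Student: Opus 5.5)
This is a direct translation of Theorem~\ref{embedding-bound} via Proposition~\ref{Hull-type}. Let $G$ be a $k\times n$ generator matrix of the free code $\mathcal{C}$, so its rows form a basis of $\mathcal{C}$ and $GG^T$ is a $k\times k$ symmetric matrix. Let $\mathcal{D}=\langle GG^T\rangle$ have type $4^{a'}2^{b'}$. Theorem~\ref{embedding-bound} then gives
\[
n+a'+b'\le n_s(\mathcal{C})\le n+3a'+2b'.
\]
It therefore suffices to express $a'$ and $b'$ in terms of the type $4^a2^b$ of ${\rm Hull}(\mathcal{C})$.

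By Proposition~\ref{Hull-type}, ${\rm Hull}(\mathcal{C})$ has type $4^{k-a'-b'}2^{b'}$. Since the type of a $\mathbb{Z}_4$-linear code is uniquely determined (the numbers of $4$'s and $2$'s are invariants of the module $\mathbb{Z}_4^{a}\oplus\mathbb{Z}_2^{b}$), comparing with $4^a2^b$ gives
\[
b'=b,\qquad a'=k-a-b.
\]
Substituting yields $a'+b'=k-a$ and $3a'+2b'=3(k-a-b)+2b=3(k-a)-b$. Hence
\[
n+k-a\le n_s(\mathcal{C})\le n+3(k-a)-b,
\]
as claimed.

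There is no real obstacle here. The only point to check is that Proposition~\ref{Hull-type} applies: this uses the freeness of $\mathcal{C}$, so that $\mathbf{u}\mapsto\mathbf{u}G$ is an isomorphism $\mathbb{Z}_4^k\to\mathcal{C}$ and $\mathcal{D}^\perp\subseteq\mathbb{Z}_4^k$ has type $4^{k-a'-b'}2^{b'}$. One should also note that $n_s(\mathcal{C})$ is finite, by Proposition~\ref{trivialembedding}, so that the upper bound in Theorem~\ref{embedding-bound} (through Lemma~\ref{embed-upper}) is applicable.
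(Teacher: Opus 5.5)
Your proposal is correct and matches the paper's proof: Proposition~\ref{Hull-type} turns the hull type $4^a2^b$ into the type $4^{k-a-b}2^b$ of $\langle GG^T\rangle$, and Theorem~\ref{embedding-bound} then gives both bounds. Your extra remark that $\nu(GG^T)<\infty$ (via Proposition~\ref{trivialembedding}), which is needed for Lemma~\ref{embed-upper} to apply, is a worthwhile check that the paper leaves implicit.
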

\begin{proof}
By Proposition~\ref{Hull-type}, the type of $\langle GG^T\rangle$ is $4^{k-a-b}2^b$. Then, by Theorem~\ref{embedding-bound}, we have $n+k-a\le n_s(\mathcal{C})\le n+3(k-a)-b$.
\end{proof}

\begin{example}\label{ex1}
    Here, we provide two examples with the same rank and hull type: one attains the lower bound and the other attains the upper bound of Corollary~\ref{free_emb_bound}, respectively.

    Let $\mathcal{C}_1$ be the free code over $\mathbb{Z}_4$ generated by
    \[
    G_1=\begin{bmatrix}
        1 & 0 & 0 & 2 & 0 & 1\\0 & 1 & 0 & 1 & 0 & 2\\0 & 0 & 1 & 0 & 1 & 2
    \end{bmatrix}.
    \]
    Then the code $\mathcal{D}_1$ over $\mathbb{Z}_4$ generated by
    \[
    G_1G_1^T=\begin{bmatrix}
        2 & 0 & 2\\0 & 2 & 0\\2 & 0 & 2
    \end{bmatrix}
    \]
    is of type $2^2$, which implies that ${\rm Hull}(\mathcal{C}_1)$ is of type $4^12^2$. Therefore, $8\le n_s(\mathcal{C}_1)\le 10$ by Corollary~\ref{free_emb_bound}. Take $A_1=\begin{bmatrix} 1 & 1\\1 & 3\\1 & 1 \end{bmatrix}$. Since 
    \[
    A_1A_1^T=\begin{bmatrix}
        2 & 0 & 2\\0 & 2 & 0\\2 & 0 & 2
    \end{bmatrix},
    \]
    it follows that $G_1'=[G_1~|~A_1]$ generates a shortest self-orthogonal embedding of $\mathcal{C}_1$ with length $8$.

    Next, let $\mathcal{C}_2$ be the free code over $\mathbb{Z}_4$ generated by
    \[
    G_2=\begin{bmatrix}
        1 & 1 & 1 & 1 & 0 & 0 & 0 & 0 & 0 & 0\\0 & 0 & 0 & 0 & 1 & 1 & 1 & 1 & 0 & 0\\0 & 0 & 0 & 0 & 1 & 1 & 0 & 0 & 1 & 1
    \end{bmatrix}.
    \]
    Since the code $\mathcal{D}_2$ over $\mathbb{Z}_4$ generated by
    \[
    G_2G_2^T=\begin{bmatrix}
        0 & 0 & 0\\0 & 0 & 2\\0 & 2 & 0
    \end{bmatrix}
    \]
    is of type $2^2$, ${\rm Hull}(\mathcal{C}_2)$ is of type $4^12^2$. By Corollary~\ref{free_emb_bound}, $n_s(\mathcal{C}_2)$ must lie between $12$ and $14$. 
    
    Now we show that $\nu(G_2G_2^T)=\nu\left(\begin{bmatrix} 0 & 2\\2 & 0\end{bmatrix}\right)$.
    Suppose that there is a $3\times m$ matrix $X$ such that $G_2G_2^T+XX^T=\mathcal{O}$. Let $\mathbf{x}_2$ and $\mathbf{x}_3$ be the bottom two rows of $X$. Then
    \[
    \begin{bmatrix}
       0 & 2\\2 & 0
    \end{bmatrix}+
    \begin{bmatrix}
    \mathbf{x}_2\\\mathbf{x}_3
    \end{bmatrix}
    \begin{bmatrix}
    \mathbf{x}_2\\\mathbf{x}_3
    \end{bmatrix}^T=\mathcal{O}.
    \]
    Then, by Lemma~\ref{embed-inequal}, $\nu(G_2G_2^T)=\nu\left(\begin{bmatrix} 0 & 2\\2 & 0\end{bmatrix}\right)$. Define $A_2=\begin{bmatrix}
        1 & 1 & 1 & 1\\1 & 1 & 1 & 3
    \end{bmatrix}$. Since 
    \[
    A_2A_2^T=\begin{bmatrix}
       0 & 2\\2 & 0
    \end{bmatrix},
    \]
    we have $\nu(G_2G_2^T)\le 4$. Suppose that $\nu(G_2G_2^T)\le 3$. Then, there is a $2\times m$ matrix $Y$ such that $m\le 3$ and
    \[
    \begin{bmatrix}
       0 & 2\\2 & 0
    \end{bmatrix}+YY^T=\mathcal{O}.
    \]
    However, for a vector $\mathbf{y}$ of length $\le 3$ over $\mathbb{Z}_4$, in order to satisfy $\mathbf{y}\cdot\mathbf{y}=0$, the number of its odd coordinates must be zero. Thus, there is no such $Y$. Therefore, $\nu(G_2G_2^T)=4$ and $n_s(\mathcal{C}_2)=14$. Define
    \[
    B_2=\begin{bmatrix}
        0 & 0 & 0 & 0\\1 & 1 & 1 & 1\\1 & 1 & 1 & 3
    \end{bmatrix}.
    \]
    Then $G_2'=[G_2~|~B_2]$ generates a shortest self-orthogonal embedding of $\mathcal{C}_2$ of length $14$.

    These two examples illustrate that both the lower and upper bounds in Corollary~\ref{free_emb_bound} are indeed tight.
\end{example}

The following example shows
that the length of a shortest self-orthogonal embedding can also lie strictly between the bounds in Corollary~\ref{free_emb_bound}.

\begin{example}
    Let $\mathcal{C}$ be the free code over $\mathbb{Z}_4$ generated by
    \[
    G=\begin{bmatrix}
        1 & 0 & 0 & 0 & 0 & 0 & 0 & 0 & 0\\0 & 1 & 1 & 1 & 1 & 0 & 0 & 0 & 0\\0 & 1 & 1 & 1 & 3 & 0 & 0 & 0 & 0\\0 & 0 & 0 & 0 & 0 & 1 & 1 & 1 & 1
    \end{bmatrix}.
    \]
    Since
    \[
    GG^T=\begin{bmatrix}
        1 & 0 & 0 & 0\\0 & 0 & 2 & 0\\0 & 2 & 0 & 0\\0 & 0 & 0 & 0
    \end{bmatrix}
    \]
    generates a code of type $4^12^2$, ${\rm Hull}(\mathcal{C})$ is of type $4^12^2$. Then $12\le n_s(\mathcal{C})\le 16$.

    Define
    \[
    A=\begin{bmatrix}
        0 & 0 & 1 & 1 & 1\\0 & 0 & 0 & 2 & 2\\1 & 1 & 1 & 0 & 3\\0 & 0 & 0 & 0 & 0
    \end{bmatrix}.
    \]
    Since
    \[
    AA^T=\begin{bmatrix}
        3 & 0 & 0 & 0\\0 & 0 & 2 & 0\\0 & 2 & 0 & 0\\0 & 0 & 0 & 0
    \end{bmatrix}=-GG^T,
    \]
    $G'=[G~|~A]$ generates a self-orthogonal embedding of $\mathcal{C}$ with length $14$. Suppose that there is a $4\times m$ matrix $B$ over $\mathbb{Z}_4$ such that $m\le 4$ and $GG^T+BB^T=\mathcal{O}$. Let $b_i$ be the $i$-th row of $B$. Since $b_1\cdot b_1=3$, by a column permutation and multiplying columns by units if necessary, we may assume without loss of generality that $b_1=(1,1,1,0)$ or $b_1=(1,1,1,2)$. Since
    \[
    b_2\cdot b_2=b_3\cdot b_3=0,
    \]
    the number of odd coordinates in both $b_2$ and $b_3$ is either $0$ or $4$. Since
    \[
    b_1\cdot b_2=b_1\cdot b_3=0,
    \]
    it is impossible for all the coordinates to be odd, which implies that all the coordinates in both $b_2$ and $b_3$ must be even. Then $b_2\cdot b_3=0$, which is a contradiction. Therefore, $12 < n_s(\mathcal{C})=14 < 16$.
\end{example}

\begin{corollary}\label{SOembeddingres}
    Let $\mathcal{C}$ be a linear code over $\mathbb{Z}_4$. Then
    \[
    n_{de}({\rm{Res}}(\mathcal{C}))\le n_s(\mathcal{C})\le n_{de}({\rm{Res}}(\mathcal{C}))+2\dim({\rm{Tor}}(\mathcal{C})).
    \]
\end{corollary}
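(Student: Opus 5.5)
The plan is to prove the two inequalities separately, using a generator matrix in standard form
\[
G=\begin{bmatrix} G_1\\ 2G_2\end{bmatrix},
\]
where $G_1$ has $k_1$ rows and $\overline{G_1}$ generates ${\rm Res}(\mathcal{C})$. Here ${\rm Tor}(\mathcal{C})$ is generated by the rows of $\overline{G_1}$ together with those of $\overline{G_2}$, so $\dim{\rm Tor}(\mathcal{C})=k_1+k_2$. Throughout we use the observation that for $\mathbf{x}\in\mathbb{Z}_4^N$ one has $\mathbf{x}\cdot\mathbf{x}\equiv n_1(\mathbf{x})+n_3(\mathbf{x})\pmod 4$. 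Hence an odd vector $\mathbf{x}$ satisfies $\mathbf{x}\cdot\mathbf{x}=0$ in $\mathbb{Z}_4$ exactly when ${\rm wt}(\overline{\mathbf{x}})\equiv 0\pmod 4$.

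\textbf{Lower bound.} Let $\mathcal{C}'$ be a shortest self-orthogonal embedding of $\mathcal{C}$. Then ${\rm Res}(\mathcal{C}')$ is obtained from ${\rm Res}(\mathcal{C})$ by appending the columns reduced mod $2$. The residue map $\mathcal{C}'\to{\rm Res}(\mathcal{C}')$ restricts to a punctured projection onto ${\rm Res}(\mathcal{C})$, so ${\rm Res}(\mathcal{C}')$ is an embedding of ${\rm Res}(\mathcal{C})$ of the same length. Its dimension should be checked to equal $\dim{\rm Res}(\mathcal{C})$; if it does not, we pass to the subcode generated by the lifts of a basis.

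Every $\mathbf{c}\in\mathcal{C}'$ satisfies $\mathbf{c}\cdot\mathbf{c}=0$ in $\mathbb{Z}_4$. By the observation above, ${\rm wt}(\overline{\mathbf{c}})\equiv 0\pmod 4$. Self-orthogonality mod $2$ then shows that ${\rm Res}(\mathcal{C}')$ is doubly even and self-orthogonal. This gives $n_{de}({\rm Res}(\mathcal{C}))\le n_s(\mathcal{C})$.

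\textbf{Upper bound.} Take a shortest doubly even self-orthogonal embedding of ${\rm Res}(\mathcal{C})$, with generator matrix $[\overline{G_1}\mid B]$ for a binary matrix $B$ with $r=n_{de}({\rm Res}(\mathcal{C}))-n$ columns. Lift $B$ to $\hat B$ over $\mathbb{Z}_4$ with entries in $\{0,1\}$, and form
\[
G'=\begin{bmatrix} G_1 & \hat B\\ 2G_2 & \mathcal{O}\end{bmatrix}.
\]
Then $M=G'G'^T$ is symmetric and every entry is even. The first-block diagonal entries vanish mod $4$ because the rows of $[\overline{G_1}\mid B]$ are doubly even, so by the observation above their squares over $\mathbb{Z}_4$ are $0$. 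The $(2G_2)$-block entries vanish because $4\mid 4(\cdot)$. Thus $M=2N$ with the diagonal of $M$ zero mod $4$.

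It remains to kill $M$ with at most $2(k_1+k_2)$ further columns. Here Lemma~\ref{embed-upper}, case (iv), applies: an even symmetric matrix with zero diagonal, of rank $s\le k_1+k_2$ as a binary form $\overline{N}$, is congruent to a block sum of $s/2$ copies of $\begin{pmatrix}0&2\\2&0\end{pmatrix}$. Each such block costs $4$ columns, for a total of $2s\le 2(k_1+k_2)$. By Remark~\ref{embed-cong} and Lemma~\ref{embed-inequal}, $\nu(M)\le 2\dim{\rm Tor}(\mathcal{C})$, which yields the upper bound.

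\textbf{Main obstacle.} The main obstacle is the reduction of a zero-diagonal even symmetric matrix to hyperbolic blocks, that is, showing that the alternating binary form $\overline{N}$ splits symplectically. This is standard over $\mathbb{F}_2$ for alternating forms, but the congruence must be performed over $\mathbb{Z}_4$ while keeping the diagonal zero mod $4$; lifting a binary symplectic basis works because $2\cdot(\text{anything})$ only depends on values mod $2$. A secondary point to verify is that the lower-bound argument correctly handles the hull, i.e.\ that ${\rm Res}(\mathcal{C}')$ is genuinely an embedding of ${\rm Res}(\mathcal{C})$ with the right dimension.
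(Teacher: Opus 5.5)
Your proof is correct and follows the paper's argument. The lower bound uses that ${\rm Res}(\mathcal{C}')$ is doubly even and self-orthogonal, via $\mathbf{x}\cdot\mathbf{x}\equiv{\rm wt}(\overline{\mathbf{x}})\pmod 4$. The upper bound appends a $\{0,1\}$-lift of a shortest doubly even embedding of ${\rm Res}(\mathcal{C})$ to the free rows and zeros to the torsion rows, then kills the resulting even Gram matrix with at most $2(k_1+k_2)$ further columns.

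The differences are minor. You explicitly repair the possible dimension jump of ${\rm Res}(\mathcal{C}')$ by passing to a subcode, which the paper glosses over. You also use the vanishing diagonal to reduce to hyperbolic $\begin{pmatrix}0&2\\2&0\end{pmatrix}$ blocks, whereas the paper only notes that the type is $2^\beta$ and cites Theorem~\ref{embedding-bound} with $a=0$.
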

\begin{proof}
Let $\mathcal{C}'$ be a shortest self-orthogonal embedding of $\mathcal{C}$. For any $\mathbf{x}, \mathbf{y}\in \mathcal{C}'$, since $\mathbf{x}\cdot \mathbf{y}= 0$ in $\mathbb{Z}_4$, this implies that $\mathbf{x}\cdot \mathbf{y}\equiv 0\pmod 2$. Thus, ${\rm{Res}}(\mathcal{C'})$ is a self-orthogonal code. Also, since $\mathbf{x}\cdot\mathbf{x}\equiv 0\pmod 4$, ${\rm{Res}}(\mathcal{C'})$ is doubly even. Without loss of generality, we may assume that $\mathcal{C}'$ is constructed by appending $m=n_s(\mathcal{C})-n$ columns to a generator matrix of $\mathcal{C}$. Then, since ${\rm{Res}}(\mathcal{C})$ is obtained by puncturing the last $m$ columns of ${\rm{Res}}(\mathcal{C}')$, it follows that ${\rm{Res}}(\mathcal{C}')$ is a doubly even self-orthogonal embedding of ${\rm{Res}}(\mathcal{C})$. This gives $n_{de}({\rm{Res}}(\mathcal{C}))\le n_s(\mathcal{C})$.

Let $\mathcal{C}$ be of type $4^{k_1}2^{k_2}$ and let $G$ be a generator matrix of $\mathcal{C}$ given as
\[
 G=\begin{bmatrix}
        I_{k_1} & A & B_1+2B_2\\\mathcal{O} & 2I_{k_2} & 2D
    \end{bmatrix}
\]
where $A, B_1, B_2$ and $D$ are matrices with entries from $\{0, 1\}$. Then, we have a generator matrix $G_R'$ of a shortest doubly even self-orthogonal embedding of ${\rm Res}(\mathcal{C})$ which is given as $G_R'=[G_R~|~E]$ where $G_R=\begin{bmatrix} I_{k_1} & A & B_1 \end{bmatrix}$. Let $\mathcal{D}$ be the code generated by
\[
G'=\begin{bmatrix}
        I_{k_1} & A & B_1+2B_2 & E\\\mathcal{O} & 2I_{k_2} & 2D & \mathcal{O}
    \end{bmatrix}.
\]
Since ${\rm{Res}}(\mathcal{D})=\langle G_R'\rangle$ is self-orthogonal, every entry of $(G')(G')^T$ is divisible by $2$, which implies that the type of $\langle (G')(G')^T\rangle$ is $2^\beta$ for some $0\le \beta\le k_1+k_2=\dim({\rm Tor}(\mathcal{C}))$. Then, by Theorem~\ref{embedding-bound}, the length of a shortest self-orthogonal embedding of $\mathcal{D}$ is at most $n_{de}({\rm{Res}}(\mathcal{C}))+2\dim({\rm{Tor}}(\mathcal{C}))$, which implies that $n_s(\mathcal{C})\le n_{de}({\rm{Res}}(\mathcal{C}))+2\dim({\rm{Tor}}(\mathcal{C}))$ since $\mathcal{C}$ is a punctured code of $\mathcal{D}$.
\end{proof}

\begin{example}
    Let $\mathcal{C}$ be the code over $\mathbb{Z}_4$ generated by
    \[
    G=\begin{bmatrix}
        1 & 1 & 1 & 1\\0 & 0 & 0 & 2
    \end{bmatrix}.
    \]
    Then we have
    \[
    {\rm Res}(\mathcal{C})=\langle [1, 1, 1, 1]\rangle\quad\mbox{and}\quad{\rm Tor}(\mathcal{C})=\left\langle\begin{bmatrix}
        1 & 1 & 1 & 1\\0 & 0 & 0 & 1
    \end{bmatrix}\right\rangle.
    \]
    Then ${\rm Res}(\mathcal{C})$ is doubly even self-orthogonal. Since $GG^T=\begin{bmatrix} 0 & 2\\2 & 0 \end{bmatrix}$, as shown in Example~\ref{ex1}, $\nu(GG^T)=4=2\dim{\rm Tor}(\mathcal{C})$.

    This shows that the upper bound in Corollary~\ref{SOembeddingres} is tight. Furthermore, it will be shown later in Theorem~\ref{Preparata_emb} that the lower bound is also tight.
\end{example}

\begin{lemma}\label{matrixform}
Let $\mathcal C$ be a free code over $\mathbb Z_4$. Then there is a generator matrix $G$ of $\mathcal{C}$ which is given as
\[
    G=
    \begin{bmatrix}
        G_{h}\\
        G_{c}\\
    \end{bmatrix}
\]
where $\overline{G}=\begin{bmatrix}
        \overline{G}_{h}\\
        \overline{G}_{c}\\
    \end{bmatrix}$ generates ${\rm Res}(\mathcal{C})$, and $\overline{G}_{h}$ generates ${\rm{Hull}}({\rm Res}(\mathcal{C}))$.
\end{lemma}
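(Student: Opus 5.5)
The approach is to lift a suitably chosen basis of ${\rm Res}(\mathcal{C})$ back to $\mathbb{Z}_4$. Let $\mathcal{C}$ be free of rank $k$ and let $G_0$ be any generator matrix of $\mathcal{C}$. Its rows are $k$ codewords, and every codeword is uniquely a $\mathbb{Z}_4$-combination of them.

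The first step is to record that $\overline{G_0}$ generates ${\rm Res}(\mathcal{C})$ and that ${\rm Res}(\mathcal{C})$ has dimension exactly $k$. Reducing the defining property modulo $2$ shows that the rows of $\overline{G_0}$ span ${\rm Res}(\mathcal{C})$. For independence, suppose $\sum \epsilon_i \overline{g_i}=\mathbf{0}$ with $\epsilon_i\in\{0,1\}$ not all zero. Then $\sum \epsilon_i g_i\in 2\mathbb{Z}_4^n$, so $\sum 2\epsilon_i g_i=\mathbf{0}$. This is a nontrivial representation of zero, since the coefficients $2\epsilon_i$ are not all $0$ in $\mathbb{Z}_4$, contradicting uniqueness. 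Hence the rows of $\overline{G_0}$ form a basis of ${\rm Res}(\mathcal{C})$.

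Next, choose a basis $h_1,\dots,h_\ell$ of ${\rm Hull}({\rm Res}(\mathcal{C}))$ and extend it to a basis $h_1,\dots,h_\ell,c_1,\dots,c_{k-\ell}$ of ${\rm Res}(\mathcal{C})$. There is then an invertible binary $k\times k$ matrix $\overline{U}$ with $\overline{U}\,\overline{G_0}$ equal to the matrix whose rows are these $h$'s followed by these $c$'s. Lift $\overline{U}$ entrywise to a matrix $U$ over $\mathbb{Z}_4$ with entries in $\{0,1\}$. Its determinant is odd, hence a unit in $\mathbb{Z}_4$, so $U\in GL_k(\mathbb{Z}_4)$.

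Set $G=UG_0$. Since $U$ is invertible, the rows of $G$ generate the same module $\mathcal{C}$, and uniqueness of representations is preserved, so $G$ is again a generator matrix in the paper's sense. Moreover $\overline{G}=\overline{U}\,\overline{G_0}$. Splitting $G$ into its first $\ell$ rows $G_h$ and its remaining $k-\ell$ rows $G_c$ gives the required form: $\overline{G}$ generates ${\rm Res}(\mathcal{C})$ and $\overline{G}_h$ generates ${\rm Hull}({\rm Res}(\mathcal{C}))$.

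The only delicate point is the independence argument in the first step, which is where freeness is used. Without it, $2$-torsion generators would reduce to zero and the dimension count would fail. The rest is lifting an invertible matrix modulo $2$, which is routine.
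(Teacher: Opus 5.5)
Your proof is correct, but it runs in the opposite direction from the paper's.

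The paper starts from the desired binary basis $h_1,\dots,h_\ell,a_1,\dots,a_{k-\ell}$ of ${\rm Res}(\mathcal{C})$ and picks arbitrary codewords $u_i,v_j\in\mathcal{C}$ reducing to these vectors. It proves the lifts are $\mathbb{Z}_4$-independent by reducing modulo $2$ twice: first all coefficients are even; writing them as $2\alpha_i'$, $2\beta_j'$ and reducing $\gamma'$ shows the original coefficients are $0$. It then concludes from freeness that the lifts form a basis. That last step is an implicit count: the lifts generate a submodule of size $4^k$, and $|\mathcal{C}|=4^k$ because the rank of a free code equals $\dim{\rm Res}(\mathcal{C})$, a fact the paper takes for granted.

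You instead begin with an existing generator matrix $G_0$ and prove exactly that fact: $\overline{G_0}$ is a basis of ${\rm Res}(\mathcal{C})$, which is where freeness enters. You then transfer the binary change of basis to $\mathbb{Z}_4$, using that a $\{0,1\}$-lift of an invertible binary matrix has odd determinant and is therefore invertible over $\mathbb{Z}_4$. Spanning and uniqueness of representation are inherited from $G_0$, so you need no independence-plus-counting step.

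As stated, the paper's version yields slightly more: \emph{every} choice of codewords lifting the binary basis is a generator matrix of $\mathcal{C}$. That is the freedom one wants when looking for a lift $G_{1,h}$ satisfying the hypotheses of Theorem~\ref{Z4embeddingres}. Your framework recovers this in one line. Any matrix $G'$ of such lifts can be written as $VG_0$ over $\mathbb{Z}_4$. Independence of the rows of $\overline{G_0}$ forces $\overline{V}=\overline{U}$, so $V$ is invertible.
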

\begin{proof}
    Let $\dim {\rm Res}(\mathcal{C})=k$ and $\dim {\rm{Hull}}({\rm Res}(\mathcal{C}))=\ell$. Choose a basis
    \[
    \beta=\{h_1, \ldots, h_\ell, a_1,\ldots, a_{k-\ell}\}
    \]
    of ${\rm Res}(\mathcal{C})$ such that $h_1\ldots, h_\ell$ generate ${\rm{Hull}}({\rm Res}(\mathcal{C}))$. Since every element of $\beta$ is a codeword of ${\rm Res}(\mathcal{C})$, for each $h_i$ and $a_j$, there are codewords $u_i, v_j\in\mathcal{C}$ such that $\bar{u}_i=h_i$ and $\bar{v}_j=a_j$. Suppose that
    \[
    \gamma=\sum_{i=1}^\ell \alpha_iu_i+\sum_{j=1}^{k-\ell}\beta_jv_j=0
    \]
    for some $\alpha_i, \beta_j\in\mathbb{Z}_4$. Since $\bar{\gamma}=0\pmod 2$ and $\beta$ is a basis of ${\rm Res}(\mathcal{C})$, all the $\alpha_i$ and $\beta_j$ are $0\pmod 2$, i.e., they are all even in $\mathbb{Z}_4$. Write $\alpha_i=2\alpha_i'$ and $\beta_j=2\beta_j'$. Then 
    \[
    \gamma=2\left(\sum_{i=1}^\ell \alpha_i'u_i+\sum_{j=1}^{k-\ell}\beta_j'v_j\right)=0.
    \]
    Let 
    \[
    \gamma'=\sum_{i=1}^\ell \alpha_i'u_i+\sum_{j=1}^{k-\ell}\beta_j'v_j.
    \]
    Since $2\gamma'=0$, we have $\overline{\gamma'}=0$. It follows that $\alpha_i'$ and $\beta_j'$ are all even, that is, the $\alpha_i$ and $\beta_j$ are all zero. Note that $\mathcal{C}$ is free and $\dim {\rm Res}(\mathcal{C})=k$. Thus, $u_1, \ldots, u_\ell, v_1, \ldots, v_{k-\ell}$ form a basis of $\mathcal{C}$. Let $G_h$ be the matrix whose rows are $u_i$ and let $G_c$ be the matrix whose rows are $v_j$. Then 
    \[
    G=
    \begin{bmatrix}
        G_{h}\\
        G_{c}\\
    \end{bmatrix}
    \]
    is a generator matrix of $\mathcal{C}$ such that $\overline{G}$ generates ${\rm Res}(\mathcal{C})$, and $\overline{G}_{h}$ generates ${\rm{Hull}}({\rm Res}(\mathcal{C}))$.
\end{proof}

\begin{theorem}\label{Z4embeddingres}
Let $\mathcal C$ be a code over $\mathbb Z_4$ of length $n$ with generator matrix
\[
    G=
    \begin{bmatrix}
        G_{1, h}\\
        G_{1, c}\\
        2G_2
    \end{bmatrix}
\]
where $\overline{G}_{1, h}$ generates ${\rm{Hull}}({\rm Res}(\mathcal{C}))$ and $\begin{bmatrix}
        \overline{G}_{1, h}\\
        \overline{G}_{1, c}\\
    \end{bmatrix}$ generates ${\rm Res}(\mathcal{C})$.
If
\begin{equation}\label{lifthy}
    G_{1, h}G_{1, h}^T=\mathcal{O}\quad\mbox{and}\quad 2G_{1, h}G_2^T=\mathcal{O},
\end{equation}
then $n_s(\mathcal{C})=n_{de}({\rm Res}(\mathcal{C}))$.
\end{theorem}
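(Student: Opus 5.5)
The inequality $n_{de}({\rm Res}(\mathcal{C}))\le n_s(\mathcal{C})$ is already the lower bound of Corollary~\ref{SOembeddingres}. So the plan is to construct a self-orthogonal embedding of $\mathcal{C}$ of length exactly $n_{de}({\rm Res}(\mathcal{C}))$. The construction lifts a shortest doubly even self-orthogonal embedding of ${\rm Res}(\mathcal{C})$ to $\mathbb{Z}_4$, and then corrects the remaining even entries of the Gram matrix by adding multiples of $2$ in the appended columns.

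\textbf{Step 1: a normalized binary embedding.} For a $\mathbb{Z}_4$ vector $\mathbf{x}$, $\mathbf{x}\cdot\mathbf{x}\pmod 4$ is the number of odd coordinates of $\mathbf{x}$ modulo $4$. Hence the hypothesis $G_{1,h}G_{1,h}^T=\mathcal{O}$ has two consequences: every row of $\overline{G}_{1,h}$ is doubly even, and these rows are pairwise orthogonal. Therefore ${\rm Hull}({\rm Res}(\mathcal{C}))$ is doubly even. By Proposition~\ref{doublyevengen}, there is a shortest doubly even self-orthogonal embedding of ${\rm Res}(\mathcal{C})$ with generator matrix
\[
\begin{bmatrix}\overline{G}_{1,h} & \mathcal{O}\\ \overline{G}_{1,c} & B\end{bmatrix},
\]
where $B$ is binary with $n_{de}({\rm Res}(\mathcal{C}))-n$ columns. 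Since $\langle\overline{G}_{1,c}\rangle$ is a complement of the hull, it is LCD, so $\overline{G}_{1,c}\overline{G}_{1,c}^T$ is invertible. From $BB^T=\overline{G}_{1,c}\overline{G}_{1,c}^T$ we then get that $B$ has full row rank. This full-rank fact drives every correction below.

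\textbf{Step 2: the ansatz and its Gram matrix.} Let $\tilde B$ be the $\{0,1\}$-lift of $B$. I would look for an embedding of the form
\[
G'=\begin{bmatrix} G_{1,h} & 2Y\\ G_{1,c} & \tilde B+2X\\ 2G_2 & 2Z\end{bmatrix},
\]
with $X,Y,Z$ binary to be chosen; this $G'$ has $n_{de}({\rm Res}(\mathcal{C}))$ columns. Every entry of $G'G'^T$ is even, because the residue of $G'$ generates a self-orthogonal code. The blocks then behave as follows.
\begin{itemize}
\item The hull--hull block is $G_{1,h}G_{1,h}^T+4YY^T=\mathcal{O}$.
\item The hull--torsion block is $2G_{1,h}G_2^T=\mathcal{O}$, by \eqref{lifthy}.
\item The torsion--torsion block is $4(\cdots)=\mathcal{O}$.
\item The diagonal of the $c$--$c$ block is $\equiv$ (weight of a row of the doubly even residue code) $\equiv 0 \pmod 4$; the $2X$ term contributes only multiples of $4$ there.
\end{itemize}
What remains is the following:
\begin{itemize}
\item the off-diagonal part of the $c$--$c$ block, which changes by $2(X\tilde B^T+\tilde BX^T)$;
\item the hull--$c$ block, which changes by $2Y\tilde B^T$;
\item the torsion--$c$ block, which changes by $2Z\tilde B^T$.
\end{itemize}

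\textbf{Step 3: solving for the corrections.} Write each defect as $2S$ with $S$ binary; each must be cancelled modulo $4$, i.e.\ the corresponding binary equations must hold modulo $2$.
\begin{itemize}
\item For the hull--$c$ block we need $YB^T=S_{hc}$.
\item For the torsion--$c$ block we need $ZB^T=S_{tc}$.
\item For the $c$--$c$ block, $S_{cc}$ is symmetric with zero diagonal. Let $L$ be its strictly lower triangular part, so that $S_{cc}=L+L^T$. It then suffices to solve $XB^T=L$.
\end{itemize}
Since $B$ has full row rank, the map $W\mapsto WB^T$ is surjective onto matrices with the right number of columns, so all three equations are solvable. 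With these choices $G'G'^T=\mathcal{O}$. Moreover $\mathcal{C}$ is the puncturing of $\langle G'\rangle$ on the appended columns, and $\langle G'\rangle$ has the same type as $\mathcal{C}$. Hence $n_s(\mathcal{C})\le n_{de}({\rm Res}(\mathcal{C}))$.

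The main obstacle I expect is Step~1: making sure the hull rows can be given zero (or purely even) appended parts, and that $B$ has full row rank. The corrections in Step~3 only work if the rows that carry the hull admit no odd appended entries; otherwise the cross terms with the hull would not be controllable. This is exactly where the hypothesis \eqref{lifthy} and Proposition~\ref{doublyevengen} enter.
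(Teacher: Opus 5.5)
Your proposal is correct and follows essentially the same route as the paper. Both arguments take the normalized embedding $\begin{bmatrix}\overline{G}_{1,h} & \mathcal{O}\\ \overline{G}_{1,c} & B\end{bmatrix}$ from Proposition~\ref{doublyevengen} and lift it to $\mathbb{Z}_4$. They then cancel the residual $2S$ defects in the hull--$c$ and torsion--$c$ blocks, and in the $c$--$c$ block via a triangular splitting of its zero-diagonal symmetric defect, by solving linear equations against $B^T$; this is possible because $BB^T=\overline{G}_{1,c}\overline{G}_{1,c}^T$ is invertible. The only difference is that the paper writes the solutions explicitly in the form $S(BB^T)^{-1}B$, whereas you invoke surjectivity of $W\mapsto WB^T$.
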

\begin{proof}
    By Corollary~\ref{SOembeddingres}, we have
    \[
        n_{de}({\rm Res}(\mathcal C))\le n_s(\mathcal C).
    \]
    Thus, it is enough to construct a self-orthogonal embedding of $\mathcal C$ with length $n_{de}({\rm Res}(\mathcal C))$.

    By Lemma~\ref{matrixform}, $\mathcal{C}$ always has a generator matrix of the form $G$. Since $G_{1, h}G_{1, h}^T=\mathcal{O}$, ${\rm{Hull}}({\rm Res}(\mathcal{C}))$ is doubly even. Thus, there is a shortest doubly even self-orthogonal embedding ${\rm{Res}}(\mathcal{C})'$ of ${\rm{Res}}(\mathcal{C})$ with generator matrix
    \[
    \overline{G_1}'=\begin{bmatrix}
        \overline{G}_{1, h} & \mathcal{O}\\\overline{G}_{1, c} & X_0
    \end{bmatrix}
    \]
    for some matrix $X_0$ by Proposition~\ref{doublyevengen}. Since ${\rm{Res}}(\mathcal{C})'$ is self-orthogonal, we have
    \[
    \overline{G}_{1, c}\overline{G}_{1, c}^T+X_0X_0^T=\mathcal{O}\pmod 2\quad\mbox{and}\quad\overline{G}_{1, h}\overline{G}_{1, c}^T=\mathcal{O}\pmod 2,
    \]
    which implies that
    \[
    G_{1, c}G_{1, c}^T+X_0X_0^T=2S_c\pmod 4\quad\mbox{and}\quad G_{1, h}G_{1, c}^T=2S_h\pmod 4
    \]
    for some $\{0, 1\}$-symmetric matrix $S_c$ and some $\{0, 1\}$-matrix $S_h$. Since ${\rm{Res}}(\mathcal{C})'$ is doubly even, the diagonal entries of $S_c$ are all zero. Thus, $S_c=U+U^T$ for some upper triangular matrix $U$. Since $\overline{G}_{1, c}$ generates an LCD code, $\det(X_0X_0^T)=1$, i.e., $X_0X_0^T$ is invertible. Let $Y^T=X_0^T(X_0X_0^T)^{-1}U$. Then $X_0Y^T=U$. Let $X_c=X_0+2Y$ and $X_h=S_h(X_0X_0^T)^{-1}X_0$ be matrices over $\mathbb{Z}_4$. Define
    \[
    G_1'=\begin{bmatrix}
        G_{1, h} & 2X_h\\G_{1, c} & X_c
    \end{bmatrix}.
    \]
    Since
    \begin{align*}
    G_{1, c}G_{1, c}^T+X_cX_c^T&=G_{1, c}G_{1, c}^T+X_0X_0^T+2X_0Y^T+2YX_0^T\\&=2S_c+2U+2U^T\\&=\mathcal{O},
    \end{align*}
    and
    \[
    G_{1, h}G_{1, c}^T+2X_hX_c^T=2S_h+2S_h=\mathcal{O},
    \]
    we have $(G_1')(G_1')^T=\mathcal{O}$. Thus, $G_1'$ generates a self-orthogonal code over $\mathbb{Z}_4$. 
    
    Define a binary matrix $Z$ as $Z^T=X_0^T(X_0X_0^T)^{-1}(\overline{G}_{1, c}\overline{G}_2^T)$. Since
    \[
    \overline{G}_{1, c}\overline{G}_2^T+X_0Z^T=\mathcal{O},
    \]
    as a matrix over $\mathbb{F}_2$, we have
    \[
    G_{1, c}(2G_2)^T+X_c(2Z)^T=\mathcal{O}\pmod 4.
    \]
    Let $G''$ be a matrix over $\mathbb{Z}_4$ given as
    \[
    G''=\begin{bmatrix}
        G_{1, h} & 2X_h\\G_{1, c} & X_c\\2G_2 & 2Z
    \end{bmatrix}.
    \]
    Note that $2G_{1, h}G_2^T=\mathcal{O}$. Thus $(G'')(G'')^T=\mathcal{O}$, that is, $G''$ generates a self-orthogonal embedding of $\mathcal{C}$ with length equal to $n_{de}({\rm Res}(\mathcal{C}))$. By Corollary~\ref{SOembeddingres}, the length of a shortest self-orthogonal embedding of $\mathcal{C}$ is $n_{de}({\rm Res}(\mathcal{C}))$.
\end{proof}

\begin{corollary}
    Let $\mathcal{C}$ be a linear code over $\mathbb{Z}_4$. If ${\rm{Res}}(\mathcal{C})$ is LCD, then $n_s(\mathcal{C})=n_{de}({\rm Res}(\mathcal{C}))$.
\end{corollary}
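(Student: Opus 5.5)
This is a direct application of Theorem~\ref{Z4embeddingres}: when ${\rm Res}(\mathcal{C})$ is LCD, its hull is trivial, so hypothesis~\eqref{lifthy} holds vacuously.

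First fix a generator matrix of $\mathcal{C}$ of the shape required in Theorem~\ref{Z4embeddingres}. Write $\mathcal{C}$ of type $4^{k_1}2^{k_2}$ with generator matrix
\[
G=\begin{bmatrix} G_1\\ 2G_2\end{bmatrix},
\]
where $G_1$ has $k_1$ rows and $\overline{G}_1$ generates ${\rm Res}(\mathcal{C})$. This is possible by the standard form used in the proof of Corollary~\ref{SOembeddingres}. In that form the rows of $\overline{G}_1=[I_{k_1}\mid A\mid B_1]$ are linearly independent, so $\dim{\rm Res}(\mathcal{C})=k_1$ and they form a basis of ${\rm Res}(\mathcal{C})$.

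Since ${\rm Res}(\mathcal{C})$ is LCD, we have ${\rm Hull}({\rm Res}(\mathcal{C}))=\{\mathbf{0}\}$. We may therefore take $G_{1,h}$ to be the empty matrix with $0$ rows and $G_{1,c}=G_1$. Then $\overline{G}_{1,h}$ generates the zero hull, and $\begin{bmatrix}\overline{G}_{1,h}\\ \overline{G}_{1,c}\end{bmatrix}=\overline{G}_1$ generates ${\rm Res}(\mathcal{C})$.

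Condition~\eqref{lifthy} then holds trivially, because $G_{1,h}G_{1,h}^T$ and $2G_{1,h}G_2^T$ are empty (zero) matrices. Theorem~\ref{Z4embeddingres} now gives $n_s(\mathcal{C})=n_{de}({\rm Res}(\mathcal{C}))$.

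The only point needing care is whether the proof of Theorem~\ref{Z4embeddingres} remains valid when $G_{1,h}$ is empty. The hull of ${\rm Res}(\mathcal{C})$ is then zero, hence trivially doubly even, so Proposition~\ref{doublyevengen} still applies and gives $X_0$. Because $\overline{G}_{1,c}$ generates an LCD code, $X_0X_0^T$ is invertible, and the constructions of $X_c$ and $Z$ go through unchanged. If one prefers not to rely on degenerate matrices, the same computation can be repeated directly: lift $X_0$ to $X_c=X_0+2Y$ to kill $2S_c$, then append $2Z$ to the torsion rows. This reproduces the argument verbatim with the $G_{1,h}$ block deleted.
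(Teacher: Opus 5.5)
Your proposal is correct and follows the paper's proof: since ${\rm Res}(\mathcal{C})$ is LCD, its hull is trivial, so $G_{1,h}$ is empty, condition~\eqref{lifthy} holds vacuously, and Theorem~\ref{Z4embeddingres} gives $n_s(\mathcal{C})=n_{de}({\rm Res}(\mathcal{C}))$. Your extra checks are details the paper leaves implicit: producing a generator matrix of the required shape for non-free $\mathcal{C}$ via the standard form, and confirming that $X_0X_0^T$ is invertible in this degenerate case.
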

\begin{proof}
Since ${\rm Res}(\mathcal C)$ is LCD, ${\rm Hull}({\rm Res}(\mathcal C))=\{0\}$. Thus, in Theorem~\ref{Z4embeddingres}, the $G_{1,h}$ is empty and the required conditions are trivially satisfied. Therefore, the result follows from Theorem~\ref{Z4embeddingres}.
\end{proof}

The linear codes over $\mathbb{Z}_4$ whose Gray images are the binary Kerdock and Preparata codes are called the \textit{quaternary Kerdock} and \textit{quaternary Preparata codes}, respectively. It is known that quaternary Kerdock codes are self-orthogonal, and their duals are quaternary Preparata codes. As an application of Theorem~\ref{Z4embeddingres}, we determine the length of a shortest self-orthogonal embedding of the quaternary Preparata codes.

For more details on quaternary Kerdock and Preparata codes, we refer the reader to~\cite{book-HP,book-W}.

\begin{theorem}\label{Preparata_emb}
Let $m\ge 5$ be an odd integer, and let $\mathcal P_m$ be the
quaternary Preparata code of length $2^m$. Then
\[
n_s(\mathcal{P}_m)=
\begin{cases}
2^{m+1}-2m-1, & \text{if } m\equiv3\pmod4,\\
2^{m+1}-2m, & \text{if } m\equiv1\pmod4.
\end{cases}
\]
\end{theorem}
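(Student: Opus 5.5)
The plan is to apply Theorem~\ref{Z4embeddingres} to $\mathcal{P}_m$ and then read off the value of $n_{de}({\rm Res}(\mathcal{P}_m))$ from Theorem~\ref{DoublyevenHamming}. I will use the standard facts about the quaternary Kerdock code $\mathcal{K}_m$ of length $2^m$ ($m$ odd), which I assume rather than prove. First, $\mathcal{K}_m$ is a free code of rank $m+1$. Second, its residue code is the first order Reed--Muller code $\mathcal{R}(1,m)$. Third, $\mathcal{K}_m$ is self-orthogonal with $\mathcal{K}_m^\perp=\mathcal{P}_m$. Since $\mathcal{K}_m$ is free, we have ${\rm Tor}(\mathcal{K}_m)={\rm Res}(\mathcal{K}_m)=\mathcal{R}(1,m)$.

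\textbf{Residue and hull of $\mathcal{P}_m$.} Since $\mathcal{K}_m$ has type $4^{m+1}$, its dual $\mathcal{P}_m$ has type $4^{2^m-m-1}$, so it is free and the torsion part $2G_2$ is empty. Using ${\rm Res}(\mathcal{C}^\perp)={\rm Tor}(\mathcal{C})^\perp$, we get
\[
{\rm Res}(\mathcal{P}_m)=\mathcal{R}(1,m)^\perp=\mathcal{H}_e(m).
\]
Its hull is $\mathcal{H}_e(m)\cap\mathcal{R}(1,m)=\mathcal{R}(1,m)$, because $\mathcal{R}(1,m)\subseteq\mathcal{H}_e(m)$.

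\textbf{Choosing the generator matrix so that \eqref{lifthy} holds.} Self-orthogonality gives $\mathcal{K}_m\subseteq\mathcal{K}_m^\perp=\mathcal{P}_m$. I take $G_{1,h}$ to be a generator matrix of $\mathcal{K}_m$. Its rows lie in $\mathcal{P}_m$, and $\overline{G}_{1,h}$ generates $\mathcal{R}(1,m)={\rm Hull}({\rm Res}(\mathcal{P}_m))$.

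Next I extend $\overline{G}_{1,h}$ to a basis of $\mathcal{H}_e(m)$ and lift the added vectors to codewords of $\mathcal{P}_m$; these lifts form $G_{1,c}$. The argument in the proof of Lemma~\ref{matrixform} then shows that the rows of $G_{1,h}$ and $G_{1,c}$ together form a basis of the free code $\mathcal{P}_m$. That argument uses only that the reductions mod $2$ form a basis of the residue code and that $\mathcal{P}_m$ is free of rank $\dim{\rm Res}(\mathcal{P}_m)$, so it applies to any lifts.

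With this choice, $G_{1,h}G_{1,h}^T=\mathcal{O}$ because $\mathcal{K}_m$ is self-orthogonal. The condition $2G_{1,h}G_2^T=\mathcal{O}$ is vacuous because $G_2$ is empty. Theorem~\ref{Z4embeddingres} therefore gives
\[
n_s(\mathcal{P}_m)=n_{de}(\mathcal{H}_e(m)).
\]

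\textbf{Conclusion.} Since $m\ge5$, Theorem~\ref{DoublyevenHamming} applies. For odd $m$ it gives $2^{m+1}-2m-1$ when $m\equiv3\pmod4$ and $2^{m+1}-2m$ when $m\equiv1\pmod4$, which is exactly the statement.

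The main obstacle is the middle step: justifying that the hull of the residue code lifts to a self-orthogonal $\mathbb{Z}_4$-code inside $\mathcal{P}_m$. This relies on the Kerdock facts, namely freeness of rank $m+1$, residue $\mathcal{R}(1,m)$, and $\mathcal{K}_m^\perp=\mathcal{P}_m$, which I would cite from \cite{book-HP,book-W}. Once $G_{1,h}$ is taken to generate $\mathcal{K}_m$, everything else follows from the earlier results.
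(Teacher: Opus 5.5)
Your proposal is correct and follows the paper's proof. You take $G_{1,h}$ to be a generator matrix of the self-orthogonal Kerdock code $\mathcal{K}_m\subseteq\mathcal{P}_m$, apply Theorem~\ref{Z4embeddingres} with no torsion part to get $n_s(\mathcal{P}_m)=n_{de}(\mathcal{H}_e(m))$, and read off the value from Theorem~\ref{DoublyevenHamming}; the only difference is that you derive the freeness of $\mathcal{P}_m$ and ${\rm Res}(\mathcal{P}_m)=\mathcal{H}_e(m)$ by duality and justify the block generator matrix via the argument of Lemma~\ref{matrixform}, where the paper simply cites or asserts these facts.
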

\begin{proof}
Note that the quaternary Kerdock code $\mathcal{K}_m$ of length $2^m$ is the dual of $\mathcal{P}_m$. Take a generator matrix $G$ of $\mathcal{P}_m$ which is given as
\[
G=\begin{bmatrix}
    G(\mathcal{K}_m)\\A
\end{bmatrix}
\]
where $G(\mathcal{K}_m)$ is a generator matrix of $\mathcal{K}_m$. It is known from \cite{HKCSS-1994} that both $\mathcal{K}_m$ and $\mathcal{P}_m$ are free codes, and the residue codes of those codes are given by
\[
{\rm Res}(\mathcal{P}_m)=\mathcal{H}_e(m)\quad\mbox{and}\quad{\rm Res}(\mathcal{K}_m)=\mathcal{R}(1, m),
\]
where $\mathcal{H}_e(m)$ is the binary $m$-th order extended Hamming code and $\mathcal{R}(1, m)$ is the binary first order Reed-Muller code. Since $\mathcal{R}(1, m)$ is the dual of $\mathcal{H}_e(m)$ and is self-orthogonal,
\[
{\rm{Hull}}({\rm Res}(\mathcal{P}_m))=\mathcal{R}(1, m),
\]
and therefore $\overline{G(\mathcal{K}_m)}$ generates ${\rm{Hull}}({\rm Res}(\mathcal{P}_m))$. Since the Kerdock code $\mathcal{K}_m$ is self-orthogonal, we have $G(\mathcal{K}_m)G(\mathcal{K}_m)^T=\mathcal{O}$. Here, $G$ is in the form of Theorem~\ref{Z4embeddingres}
with
\[
    G_{1,h}=G(\mathcal K_m),\quad G_{1,c}=A,
\]
and with no $G_2$. Then, by Theorem~\ref{Z4embeddingres}, $n_s(\mathcal{P}_m)$ is equal to $n_{de}(\mathcal{H}_e(m))$, which is given as
\[
\begin{cases}
2^{m+1}-2m-1, & \text{if } m\equiv3\pmod4,\\
2^{m+1}-2m, & \text{if } m\equiv1\pmod4,
\end{cases}
\]
by Theorem~\ref{DoublyevenHamming}.
\end{proof}

\section{Construction of quaternary self-orthogonal codes}

In this section, we provide an algorithm to construct self-orthogonal codes over $\mathbb{Z}_4$.

Since the exact length of the shortest self-orthogonal embedding can be precisely determined for codes satisfying the conditions of Theorem~\ref{Z4embeddingres}, we focus on these codes. For a given code $\mathcal{C}$ over $\mathbb{Z}_4$, the condition $G_{1, h}G_{1, h}^T=\mathcal{O}$ in Theorem~\ref{Z4embeddingres} implies that the hull of ${\rm Res}(\mathcal{C})$ is doubly even. To facilitate our computations, we further restrict our consideration to codes satisfying $n_s({\rm Res}(\mathcal{C}))=n_{de}({\rm Res}(\mathcal{C}))$. Under this assumption, to find a doubly even self-orthogonal embedding, we can always append the zero matrix to the hull part in the generator matrix. 

We first consider an algorithm for finding a shortest doubly even self-orthogonal embedding of a binary code $\mathcal{C}$. As shown in the proof of Lemmas~\ref{LCDeveniso},~\ref{LCDdoublyevenembedding}, and~\ref{LCDdoublyevenembeddingodd}, we can find at least one shortest doubly even self-orthogonal embedding for both even and odd-like binary codes. Propositions~\ref{evenalg} and~\ref{oddalg} serve as doubly even versions of~\cite[Theorem 4.2]{AHKL-arXiv}, providing an algorithm to construct all shortest doubly even self-orthogonal embeddings of binary codes.

\begin{proposition}\label{evenalg}
    Let $\mathcal{C}$ be an even code over $\mathbb{F}_2$ such that $n_{de}(\mathcal{C})=n_s(\mathcal{C})$ with generator matrix
    \[
    G=\begin{bmatrix}
        G({\rm Hull}(\mathcal{C}))\\A
    \end{bmatrix},
    \]
    where $\dim\langle A\rangle=2m$. Let
    \[
    G'=\begin{bmatrix}
       G({\rm Hull}(\mathcal{C})) & \mathcal{O}\\A & B
    \end{bmatrix}
    \]
    be a generator matrix of a shortest doubly even self-orthogonal embedding of $\mathcal{C}$. Define a map $q:\mathbb{F}_2^{2m}\to\mathbb{F}_2$ as
    \[
    q(\mathbf{x})=\frac{{\rm wt}(\mathbf{x}R)}{2}\pmod 2
    \]
    for all $\mathbf{x}\in\mathbb{F}_2^{2m}$, where $R=[I_{2m}~|~\mathbf{1}^T]$. Let $M$ be a unique invertible matrix over $\mathbb{F}_2$ such that $B=MR$. Then, a code $\tilde{\mathcal{C}}$ is a shortest doubly even self-orthogonal embedding of $\mathcal{C}$ if and only if $\tilde{\mathcal{C}}$ has a generator matrix
    \[
    \tilde{G}=\begin{bmatrix}
    G({\rm Hull}(\mathcal{C})) & \mathcal{O}\\A & MDR
    \end{bmatrix}
    \]
    where $D\in GL_{2m}(\mathbb{F}_2)$ such that $D(I_{2m}+J_{2m})D^T=(I_{2m}+J_{2m})$, and $q(d_i)=1$ for every row $d_i$ of $D$.
\end{proposition}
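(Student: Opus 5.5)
The plan is to reduce everything to the structure of $E_{2m+1}$ with the quadratic form $f$, and to show that shortest doubly even self-orthogonal embeddings correspond exactly to isometries of $(\mathbb{F}_2^{2m},q)$, written in the coordinates given by $R$.

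\textbf{Step 1: the shape of $B$ and the matrix $M$.} By Theorem~\ref{shortestSO}, $n_s(\mathcal{C})-n=2m+1$, so $B$ is $2m\times(2m+1)$. Each row $(a_i,b_i)$ of $[A~|~B]$ is doubly even and $a_i$ is even, so every $b_i\in E_{2m+1}$. Moreover $BB^T=AA^T$ is invertible because $\langle A\rangle$ is LCD. Hence $\mathrm{rank}\,B=2m$ and $\langle B\rangle=E_{2m+1}$. The rows $\mathbf{e}_i+\mathbf{e}_{2m+1}$ of $R$ form a basis of $E_{2m+1}$, which gives the unique invertible $M$ with $B=MR$. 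Two identities will be used throughout:
\[
RR^T=I_{2m}+J_{2m}, \qquad q(\mathbf{x})=f(\mathbf{x}R),
\]
so $q$ is a quadratic form on $\mathbb{F}_2^{2m}$ with polar bilinear form $\mathbf{x}(I+J)\mathbf{y}^T$. Also $q(\mathbf{e}_i)=1$, since each row of $R$ has weight $2$.

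\textbf{Step 2: the hull block is zero in every shortest embedding.} Let $\tilde{\mathcal{C}}$ be any shortest doubly even self-orthogonal embedding, with generator matrix $\begin{bmatrix}G({\rm Hull}(\mathcal{C}))&C\\A&\tilde B\end{bmatrix}$.
\begin{itemize}
\item By Proposition~\ref{deequals}, ${\rm Hull}(\mathcal{C})$ is doubly even, so the rows of $C$ have weight $\equiv 0\pmod 4$ and lie in $E_{2m+1}$.
\item Exactly as in Step 1, $\langle\tilde B\rangle=E_{2m+1}$.
\item Since ${\rm Hull}(\mathcal{C})$ is orthogonal to $\langle A\rangle$, the rows of $C$ are orthogonal to $\tilde B$, so they lie in $E_{2m+1}^\perp=\langle\mathbf{1}\rangle$.
\item Because $2m+1$ is odd, $\mathbf{1}\notin E_{2m+1}$, so $C=\mathcal{O}$.
\end{itemize}
Thus every shortest embedding has the stated block form with some $\tilde B$ satisfying $\langle\tilde B\rangle=E_{2m+1}$. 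Writing $\tilde B=MDR$ then defines a unique $D\in GL_{2m}(\mathbb{F}_2)$.

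\textbf{Step 3: characterize which $D$ work.} A block matrix $\begin{bmatrix}G({\rm Hull}(\mathcal{C}))&\mathcal{O}\\A&MDR\end{bmatrix}$ generates a doubly even self-orthogonal code if and only if two conditions hold. The first is self-orthogonality, namely $(MDR)(MDR)^T=AA^T=BB^T$. The second is that each row of $[A~|~MDR]$ is doubly even; given self-orthogonality, doubly even generators give a doubly even code. The first condition reads $MD(I+J)D^TM^T=M(I+J)M^T$, which is equivalent to $D(I+J)D^T=I+J$ since $M$ is invertible. For the second, row $i$ of $MDR$ is $\mathbf{m}_iDR$, and its doubly-even condition is $f(\mathbf{m}_iDR)=f(\mathbf{m}_iR)$, i.e. $q(\mathbf{m}_iD)=q(\mathbf{m}_i)$, because $[A~|~B]$ already has doubly even rows.

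\textbf{Main obstacle: matching this with the condition on the rows $d_i$.} The proposition phrases the requirement as $q(\mathbf{d}_i)=1$, whereas Step 3 produces a condition on the vectors $\mathbf{m}_iD$. The key observation is that a quadratic form is determined by its polar bilinear form together with its values on a basis. Suppose $D$ preserves $I+J$ and $q(\mathbf{d}_i)=q(\mathbf{e}_iD)=1=q(\mathbf{e}_i)$ for all $i$. Then $\mathbf{x}\mapsto q(\mathbf{x}D)$ and $q$ have the same polar form and agree on the standard basis, so $q(\mathbf{x}D)=q(\mathbf{x})$ for all $\mathbf{x}$. In particular $q(\mathbf{m}_iD)=q(\mathbf{m}_i)$ holds. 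Conversely, if $\tilde{\mathcal{C}}$ is a shortest embedding, the map $\mathbf{x}B\mapsto\mathbf{x}\tilde B$ preserves $f$, by the argument of Lemma~\ref{LCDeveniso}. That is, $\mathbf{y}R\mapsto\mathbf{y}DR$ preserves $f$ on $E_{2m+1}$, so $D$ is an isometry of $q$. This yields $D(I+J)D^T=I+J$ and $q(\mathbf{d}_i)=q(\mathbf{e}_i)=1$. Combining Steps 2 and 3 with this observation gives both directions of the equivalence.
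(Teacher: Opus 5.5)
Your proof is correct and follows essentially the same route as the paper: both write the new block as $MDR$ using the basis of $E_{2m+1}$ given by the rows of $R$, and both read off $D(I_{2m}+J_{2m})D^T=I_{2m}+J_{2m}$ from $BB^T=AA^T$. Both then use the fact that a quadratic form is determined by its polar form and its values on a basis to pass between $q(\mathbf{d}_i)=1$ and $q(\mathbf{x}D)=q(\mathbf{x})$ for all $\mathbf{x}$. Your Step 2 shows that the hull block $C$ must vanish in every shortest doubly even self-orthogonal embedding, since its rows lie in $E_{2m+1}\cap\langle\mathbf{1}\rangle=\{\mathbf{0}\}$; this makes explicit a point the paper's proof simply assumes by starting from an embedding already in block form.
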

\begin{proof}
    Suppose that
    \[
    \begin{bmatrix}
       G({\rm Hull}(\mathcal{C})) & \mathcal{O}\\A & B'
    \end{bmatrix}
    \]
    generates a shortest doubly even self-orthogonal embedding of $\mathcal{C}$. As shown in the proof of Lemma~\ref{LCDeveniso}, the map $T:\langle A\rangle \to E_{2m+1}$ defined as $T(\mathbf{x}A)=\mathbf{x}B$ for every $\mathbf{x}\in\mathbb{F}_2^{2m}$ is an isomorphism. Thus, $\langle B\rangle=E_{2m+1}$, that is, the rows of $B$ form a basis of $E_{2m+1}$. Since the rows of $R$ also form a basis of $E_{2m+1}$, there is a unique $M\in GL_{2m}(\mathbb{F}_2)$ such that $B=MR$. Similarly there is a unique $M'\in GL_{2m}(\mathbb{F}_2)$ such that $B'=M'R$.

    Note that for $\mathbf{x}\in\mathbb{F}_2^{2m}$, 
    \[
    q(\mathbf{x}M)=\frac{{\rm wt}(\mathbf{x}MR)}{2}=\frac{{\rm wt}(\mathbf{x}B)}{2}=\frac{{\rm wt}(\mathbf{x}A)}{2}\pmod 2.
    \]
    Similarly, $q(\mathbf{x}M')={\rm wt}(\mathbf{x}A)/2\pmod 2$. Define $D=M^{-1}M'$. Then $B'=MDR$. Clearly $D$ is invertible. Note that $I_{2m}+J_{2m}=RR^T$. Then we have
    \[
    MRR^TM^T=BB^T=AA^T=(B')(B')^T=M'RR^T(M')^T=MDRR^TD^TM^T.
    \]
    Then we have $DRR^TD^T=RR^T$. For $\mathbf{y}\in\mathbb{F}_2^{2m}$, note that there is $\mathbf{x}\in\mathbb{F}_2^{2m}$ such that $\mathbf{y}=\mathbf{x}M$. Thus,
    \[
    q(\mathbf{y}D)=q(\mathbf{x}MD)=q(\mathbf{x}M')=q(\mathbf{x}M)=q(\mathbf{y}).
    \]
    It follows that
    \[
    q(d_i)=q(e_iD)=q(e_i)=1
    \]
    for $1\le i\le 2m$.

    Conversely, suppose that there is a matrix $D$ satisfying the given conditions. Since $RR^T=I_{2m}+J_{2m}$, we have
    \[
    d_iR\cdot d_jR=e_iDRR^TD^Te_j^T=e_iRR^Te_j^T=e_iR\cdot e_jR
    \]
    for all $i, j$. Then, for any $\mathbf{y}=\sum_{i=1}^{2m}y_ie_i\in\mathbb{F}_2^{2m}$, we have
    \begin{align*}
    q(\mathbf{y}D)&=q\left(\sum_{i=1}^{2m}y_ie_iD\right)\\&\equiv\sum_{i=1}^{2m}y_iq(d_i)+\sum_{i<j}y_iy_j(d_iR\cdot d_jR)\pmod 2\\&\equiv\sum_{i=1}^{2m}y_iq(e_i)+\sum_{i<j}y_iy_j(e_iR\cdot e_jR)\pmod 2\\&=q(\mathbf{y}).
    \end{align*}
    Define $B_D=MDR$. Then,
   \[
   \frac{{\rm wt}(\mathbf{x}B_D)}{2}=q(\mathbf{x}MD)=q(\mathbf{x}M)=\frac{{\rm wt}(\mathbf{x}B)}{2}\equiv \frac{{\rm wt}(\mathbf{x}A)}{2}\pmod 2,
   \]
   which implies that ${\rm wt}(\mathbf{x}A)+{\rm wt}(\mathbf{x}B_D)\equiv 0\pmod 4$ for any $\mathbf{x}\in\mathbb{F}_2^{2m}$. Note also that
   \[
   B_DB_D^T=MDRR^TD^TM^T=MRR^TM^T=BB^T=AA^T.
   \]
   Therefore,
   \[
    \begin{bmatrix}
    G({\rm Hull}(\mathcal{C})) & \mathcal{O}\\A & B_D
    \end{bmatrix}
    \]
    generates a shortest doubly even self-orthogonal embedding of $\mathcal{C}$.
\end{proof}

\begin{proposition}\label{oddalg}
    Let $\mathcal{C}$ be an odd-like code over $\mathbb{F}_2$ such that $n_{de}(\mathcal{C})=n_s(\mathcal{C})$ with generator matrix
    \[
    G=\begin{bmatrix}
        G({\rm Hull}(\mathcal{C}))\\A
    \end{bmatrix},
    \]
    where $\dim\langle A\rangle=m$. Let
    \[
    G'=\begin{bmatrix}
       G({\rm Hull}(\mathcal{C})) & \mathcal{O}\\A & B
    \end{bmatrix}
    \]
    be a generator matrix of a shortest doubly even self-orthogonal embedding of $\mathcal{C}$. Then, a code $\tilde{\mathcal{C}}$ is a shortest doubly even self-orthogonal embedding of $\mathcal{C}$ if and only if $\tilde{\mathcal{C}}$ has a generator matrix
    \[
    \tilde{G}=\begin{bmatrix}
    G({\rm Hull}(\mathcal{C})) & \mathcal{O}\\A & BD
    \end{bmatrix}
    \]
    where $D\in GL_{m}(\mathbb{F}_2)$ such that $DD^T=I$, and ${\rm wt}(d_i)\equiv 1\pmod 4$ for every row $d_i$ of $D$.
\end{proposition}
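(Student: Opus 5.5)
The plan is to mirror the proof of Proposition~\ref{evenalg}, replacing the even weight code $E_{2m+1}$ by the full space $\mathbb{F}_2^m$. By Theorem~\ref{shortestSO}, for an odd-like code the shortest self-orthogonal embedding appends $k-\ell=m$ columns. Since $n_{de}(\mathcal{C})=n_s(\mathcal{C})$, any shortest doubly even self-orthogonal embedding also appends exactly $m$ columns.

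\textbf{Normal form for every embedding.} By Theorem~\ref{shortestSOgen}, every shortest self-orthogonal embedding has a generator matrix whose hull rows are padded by $\mathcal{O}$. Hence every shortest doubly even self-orthogonal embedding has a generator matrix of the form $\begin{bmatrix} G({\rm Hull}(\mathcal{C})) & \mathcal{O}\\ A & B'\end{bmatrix}$ with $B'$ of size $m\times m$. Self-orthogonality gives $B'B'^T=BB^T=AA^T$, which is invertible because $\langle A\rangle$ is LCD. So $B$ and $B'$ are both invertible, and I set $D=B^{-1}B'$, so that $B'=BD$.

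\textbf{Necessity.} First, $BDD^TB^T=BB^T$, and since $B$ is invertible this gives $DD^T=I$. Next, doubly evenness of both embeddings means ${\rm wt}(\mathbf{x}B)\equiv-{\rm wt}(\mathbf{x}A)\equiv{\rm wt}(\mathbf{x}BD)\pmod 4$ for all $\mathbf{x}$. Since $B$ is invertible, $\mathbf{x}B$ ranges over all of $\mathbb{F}_2^m$, so ${\rm wt}(\mathbf{y}D)\equiv{\rm wt}(\mathbf{y})\pmod 4$ for every $\mathbf{y}\in\mathbb{F}_2^m$. Taking $\mathbf{y}=e_i$ yields ${\rm wt}(d_i)\equiv 1\pmod 4$.

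\textbf{Sufficiency.} Suppose $DD^T=I$ and ${\rm wt}(d_i)\equiv1\pmod 4$ for all $i$. The rows $d_i$ are then pairwise orthogonal, so for $\mathbf{y}=\sum y_ie_i$,
\[
{\rm wt}(\mathbf{y}D)\equiv\sum_i y_i{\rm wt}(d_i)+2\sum_{i<j}y_iy_j(d_i\cdot d_j)\equiv\sum_i y_i\equiv{\rm wt}(\mathbf{y})\pmod 4.
\]
This is the same identity used in Lemma~\ref{LCDdoublyevenembeddingodd}: ${\rm wt}(\mathbf{u}+\mathbf{v})={\rm wt}(\mathbf{u})+{\rm wt}(\mathbf{v})-2|\mathbf{u}\cap\mathbf{v}|$, and for orthogonal vectors $|\mathbf{u}\cap\mathbf{v}|$ is even, so the cross term vanishes mod $4$. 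Applying this with $\mathbf{y}=\mathbf{x}B$ shows that each row $\mathbf{x}[A\mid BD]$ has weight divisible by $4$. Also $(BD)(BD)^T=BB^T=AA^T$, so self-orthogonality holds. The hull rows padded by zeros are doubly even, since ${\rm Hull}(\mathcal{C})$ is doubly even by Proposition~\ref{deequals}. They are also orthogonal to the rest, because $G({\rm Hull}(\mathcal{C}))A^T=\mathcal{O}$. Therefore $\tilde{G}$ generates a doubly even self-orthogonal embedding of length $n_s(\mathcal{C})=n_{de}(\mathcal{C})$, that is, a shortest one.

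\textbf{Main obstacle.} The delicate step is the normal-form claim: that every shortest doubly even embedding can be written with $\mathcal{O}$ padding on the hull rows and the same $A$. This comes from Theorem~\ref{shortestSOgen}, since the embedding is also a shortest self-orthogonal embedding. A change of generator matrix of $\mathcal{C}$ does not matter, because we may fix the first $n$ columns to be $G$.
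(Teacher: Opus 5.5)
Your proof is correct and follows essentially the same route as the paper. Both arguments use that $B$ and $B'$ are invertible square matrices (because $BB^T=B'B'^T=AA^T$ is invertible), set $D=B^{-1}B'$, and show that the conditions on $D$ hold exactly when $D$ preserves weights modulo $4$. The only differences are minor: you read off $DD^T=I$ directly from $BDD^TB^T=BB^T$, whereas the paper obtains it from the mod-$4$ weight identity by polarization, and you make explicit (via Theorem~\ref{shortestSOgen}) the zero-padded normal form that the paper uses implicitly.
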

\begin{proof}
We first show that for $D\in GL_{m}(\mathbb{F}_2)$, $DD^T=I$, and ${\rm wt}(d_i)\equiv 1\pmod 4$ for every $i$ if and only if ${\rm wt}(\mathbf{x}D)\equiv {\rm wt}(\mathbf{x})\pmod 4$ for every $\mathbf{x}\in\mathbb{F}_2^m$.

Suppose that ${\rm wt}(\mathbf{x}D)\equiv {\rm wt}(\mathbf{x})\pmod 4$ for every $\mathbf{x}\in\mathbb{F}_2^m$. Then we have ${\rm wt}(d_i)\equiv {\rm wt}(e_i)\equiv 1\pmod 4$. Moreover, for all $\mathbf{u}, \mathbf{v}\in\mathbb{F}_2^m$,
\begin{align*}
2(\mathbf{u}D)\cdot(\mathbf{v}D)&\equiv{\rm wt}((\mathbf{u}+\mathbf{v})D)+{\rm wt}(\mathbf{u}D)+{\rm wt}(\mathbf{v}D)\pmod 4\\&\equiv {\rm wt}(\mathbf{u}+\mathbf{v})+{\rm wt}(\mathbf{u})+{\rm wt}(\mathbf{v})\pmod 4\\&\equiv 2\mathbf{u}\cdot\mathbf{v} \pmod 4
\end{align*}
Therefore, $\mathbf{u}DD^T\mathbf{v}^T=\mathbf{u}\mathbf{v}^T$, that is, $DD^T=I$.

Conversely, suppose that $D$ satisfies $DD^T=I$ and ${\rm wt}(d_i)\equiv 1\pmod 4$ for every $i$. Since $DD^T=I$, the rows $d_i$ of $D$ are pairwise orthogonal. Thus, for $\mathbf{y}=\sum_{i=1}^m y_ie_i$, we have
\[
{\rm wt}(\mathbf{y}D)={\rm wt}\left(\sum_{i=1}^m y_id_i\right)\equiv \sum_{i=1}^my_i{\rm wt}(d_i)\equiv\sum_{i=1}^my_i={\rm wt}(\mathbf{y})\pmod 4.
\]

Next, we prove the main statements. Suppose that
    \[
    \begin{bmatrix}
       G({\rm Hull}(\mathcal{C})) & \mathcal{O}\\A & B'
    \end{bmatrix}
    \]
    generates a shortest doubly even self-orthogonal embedding of $\mathcal{C}$. Since $AA^T=BB^T$ and $B$ is a square matrix by Theorem~\ref{shortestSO}, $B$ must be an invertible matrix. Similarly, $B'$ is an invertible matrix. Define $D=B^{-1}B'$. Then $B'=BD$. For any $\mathbf{x}\in\mathbb{F}_2^m$, we have
    \[
    {\rm wt}(\mathbf{x}BD)={\rm wt}(\mathbf{x}B')\equiv -{\rm wt}(\mathbf{x}A) \equiv {\rm wt}(\mathbf{x}B) \pmod 4.
    \]
    Since $B$ is invertible, it follows that ${\rm wt}(\mathbf{x}D)\equiv {\rm wt}(\mathbf{x})\pmod 4$ for every $\mathbf{x}\in\mathbb{F}_2^m$.

    Conversely, suppose that there is a matrix $D\in GL_{m}(\mathbb{F}_2)$ satisfying ${\rm wt}(\mathbf{x}D)\equiv {\rm wt}(\mathbf{x})\pmod 4$ for every $\mathbf{x}\in\mathbb{F}_2^m$. Then, for $\mathbf{x}\in\mathbb{F}_2^m$, we have
    \[
    {\rm wt}(\mathbf{x}BD)\equiv{\rm wt}(\mathbf{x}B)\equiv -{\rm wt}(\mathbf{x}A)\pmod 4.
    \]
    Then, ${\rm wt}(\mathbf{x}A)+{\rm wt}(\mathbf{x}BD)\equiv 0\pmod 4$. Since
    \[
    (BD)(BD)^T=BDD^TB^T=BB^T=AA^T,
    \]
    the code generated by
    \[
    \begin{bmatrix}
    G({\rm Hull}(\mathcal{C})) & \mathcal{O}\\A & BD
    \end{bmatrix}
    \]
    is a shortest doubly even self-orthogonal embedding of $\mathcal{C}$.
\end{proof}

When $\mathcal{C}$ is a free code over $\mathbb{Z}_4$, a shortest self-orthogonal embedding of $\mathcal{C}$ can be obtained by applying the process described in the proof of Theorem~\ref{Z4embeddingres} to a shortest doubly even self-orthogonal embedding ${\rm Res}(\mathcal{C})'$ of ${\rm Res}(\mathcal{C})$. Furthermore, based on Proposition~\ref{z4lift}, we can provide an algorithm to obtain shortest self-orthogonal embeddings of $\mathcal{C}$ arising from ${\rm Res}(\mathcal{C})'$.

\begin{proposition}\label{z4lift}
    Let $\mathcal{C}$ be a free code over $\mathbb{Z}_4$ with ${\rm rank}(\mathcal{C})=k$ and generator matrix $G$. Assume that $\mathcal{C}$ satisfies the hypotheses of Theorem~\ref{Z4embeddingres}, so that $n_{s}(\mathcal{C})=n_{de}({\rm Res}(\mathcal{C}))$. Let ${\rm Res}(\mathcal{C})'$ be a shortest doubly even self-orthogonal embedding of ${\rm Res}(\mathcal{C})$ with generator matrix $G_{de}=[\overline{G}~|~B]$ for some $k\times r$ matrix $B$ and let $\mathcal{C}'$ be a shortest self-orthogonal embedding of $\mathcal{C}$ with generator matrix $G'=[G~|~B'+2X]$ for some $\{0, 1\}$-matrix $X$ over $\mathbb{Z}_4$, where $B'$ is the $\{0,1\}$-matrix over $\mathbb Z_4$ such that $\overline{B'}=B$. Define a map $f:\mathbb{F}_2^{k\times r}\to \mathbb{F}_2^{k\times k}$ as 
    \[
    f(U)=BU^T+UB^T
    \]
    for every $U\in\mathbb{F}_2^{k\times r}$. Then a code $\tilde{\mathcal{C}}$ over $\mathbb{Z}_4$ is a shortest self-orthogonal embedding of $\mathcal{C}$ with ${\rm Res}(\tilde{\mathcal{C}})={\rm Res}(\mathcal{C})'$ if and only if $\tilde{\mathcal{C}}$ has a generator matrix
    \[
    \tilde{G}=[G~|~B'+2(X+Y)]
    \]
    where $Y$ is a $\{0,1\}$-matrix over $\mathbb Z_4$ such that $\overline{Y}\in\ker f$.
\end{proposition}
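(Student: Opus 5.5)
We compare an arbitrary candidate $\tilde G=[G~|~B'+2Z]$ with the given embedding $G'=[G~|~B'+2X]$. The whole argument reduces to the identity $2Z\,W^T\equiv 2\,\overline{Z}\,\overline{W}^T \pmod 4$, which holds because $2\mathbb{Z}_4$ is annihilated by $2$.

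\textbf{Step 1: every admissible code has a generator matrix of the stated form.} Let $\tilde{\mathcal{C}}$ be a shortest self-orthogonal embedding of $\mathcal{C}$ with ${\rm Res}(\tilde{\mathcal{C}})={\rm Res}(\mathcal{C})'$. Since $n_s(\mathcal{C})=n_{de}({\rm Res}(\mathcal{C}))$, the code $\tilde{\mathcal{C}}$ is obtained by appending exactly $r$ columns. Hence it has a generator matrix $[G~|~W]$ with $W$ a $k\times r$ matrix over $\mathbb{Z}_4$; this matrix is a generator matrix because $\mathcal{C}$ is free of rank $k$, so $\tilde{\mathcal{C}}$ is free of rank $k$ as well.

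\textbf{Step 2: normalize the residue of $W$.} The rows of $[\overline G~|~\overline W]$ lie in ${\rm Res}(\mathcal{C})'$, and the first $n$ coordinates of a codeword determine its coefficients (puncturing is injective on ${\rm Res}(\mathcal{C})'$, since $\overline G$ has rank $k$). Therefore $\overline W=B$. I take this step to mean: after choosing the generator matrix of $\tilde{\mathcal C}$ whose first block is exactly $G$, its residue is forced to be $G_{de}$. This forcing is the one place where the hypothesis ${\rm Res}(\tilde{\mathcal{C}})={\rm Res}(\mathcal{C})'$ together with the fixed generator matrix $G_{de}$ is used.

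It follows that $W=B'+2Z$ for a unique $\{0,1\}$-matrix $Z$. Set $Y\equiv Z-X \pmod 2$, taken as a $\{0,1\}$-matrix. Then $2Z=2(X+Y)$ in $\mathbb{Z}_4$, because $2\cdot 2(\cdots)=0$. So $W=B'+2(X+Y)$.

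\textbf{Step 3: compare Gram matrices.} Expand
\[
\tilde G\tilde G^T-G'G'^T=(B'+2X+2Y)(B'+2X+2Y)^T-(B'+2X)(B'+2X)^T .
\]
Every term involving $2Y$ paired with $2X$ or $2Y$ vanishes mod $4$, so
\[
\tilde G\tilde G^T-G'G'^T=2\bigl(B'Y^T+YB'^T\bigr)=2\bigl(B'Y^T+YB'^T\bigr)+4(\cdots)\pmod 4 .
\]
Since $2M \bmod 4$ depends only on $\overline M$, this equals $2 f(\overline Y)$ read as a $\{0,2\}$-matrix. Because $G'G'^T=\mathcal{O}$, we get: $\tilde G\tilde G^T=\mathcal{O}$ if and only if $f(\overline Y)=0$, that is, $\overline Y\in\ker f$.

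\textbf{Step 4: both directions.} For the forward direction, self-orthogonality of $\tilde{\mathcal{C}}$ gives $\overline Y\in\ker f$ by Step 3. For the converse, take any $Y$ with $\overline Y\in\ker f$. Then $\tilde G$ generates a self-orthogonal code of length $n+r=n_s(\mathcal{C})$. It punctures to $\mathcal{C}$, so it is an embedding of $\mathcal{C}$, and it is shortest because its length equals $n_s(\mathcal{C})$. Its residue is generated by $[\overline G~|~B]=G_{de}$, so ${\rm Res}(\tilde{\mathcal{C}})={\rm Res}(\mathcal{C})'$.

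The main obstacle is therefore Step 2, making the identification $\overline W=B$ precise; the remaining steps are the routine computation in Step 3.
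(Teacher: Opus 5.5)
Your proof is correct and follows essentially the same route as the paper. Both arguments write the candidate as $[G~|~B'+2Z]$, expand the Gram matrix modulo $4$, and reduce self-orthogonality to $\overline{Z-X}\in\ker f$ by comparison with $X$; you take the difference of the two Gram matrices directly where the paper subtracts the two congruences, and your Step 2 supplies the justification, via injectivity of puncturing on ${\rm Res}(\mathcal{C})'$, that the appended block has residue exactly $B$, which the paper only asserts.
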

\begin{proof}
    Let $\tilde{\mathcal{C}}$ be a code over $\mathbb{Z}_4$ such that ${\rm Res}(\tilde{\mathcal{C}})={\rm Res}(\mathcal{C})'$. Then $\tilde{\mathcal{C}}$ has a generator matrix $\tilde{G}=[G~|~B'+2Z]$ for some $\{0, 1\}$-matrix $Z$ over $\mathbb{Z}_4$. Note that $\tilde{\mathcal{C}}$ is self-orthogonal if and only if \[
    GG^T+(B'+2Z)(B'+2Z)^T=GG^T+B'(B')^T+2(B'Z^T+Z(B')^T)=\mathcal{O}\pmod 4.
    \]
    Since $(\overline{G})(\overline{G})^T+BB^T=\mathcal{O}\pmod 2$, this is equivalent to
    \[
    B\overline{Z}^T+\overline{Z}B^T\equiv\frac{1}{2}(GG^T+B'(B')^T)\pmod 2.
    \]
    By subtracting
    \[
    B\overline{X}^T+\overline{X}B^T\equiv\frac{1}{2}(GG^T+B'(B')^T)\pmod 2,
    \]
    we have
    \[
    B(\overline{Z-X})^T+(\overline{Z-X})B^T=\mathcal O.
    \]
    Equivalently, there exists a $\{0,1\}$-matrix $Y$ over $\mathbb Z_4$ such that $\overline{Y}=\overline{Z+X}$ and $\overline{Y}\in\ker f$. For this $Y$, we have $2Z=2(X+Y)$ over $\mathbb Z_4$, and hence
\[
\tilde{G}=[G~|~B'+2Z]=[G~|~B'+2(X+Y)].
\]
\end{proof}

We construct free self-orthogonal codes over $\mathbb{Z}_4$ with rank $4$ to $6$ by applying our shortest self-orthogonal embedding method to codes in the BKLC(best-known linear codes) database of MAGMA. For any binary code $\mathcal{C}$ in the BKLC database with generator matrix $G$ such that $n_{de}(\mathcal{C})=n_s(\mathcal{C})$, let $G_4$ be the matrix obtained by viewing $G$ as a matrix over $\mathbb{Z}_4$. Then, the code $\mathcal{C}_4$ generated by $G_4$ satisfies the conditions of Theorem~\ref{Z4embeddingres}. Thus, we have $n_s(\mathcal{C}_4)=n_{s}(\mathcal{C})$.

Next, we find a shortest doubly even self-orthogonal embedding of $\mathcal{C}$ that maximizes the minimum Hamming distance. By applying Proposition~\ref{z4lift} to this code, we construct its lift over $\mathbb{Z}_4$, thereby finding a shortest self-orthogonal embedding of $\mathcal{C}_4$ whose minimum Lee distance is as large as possible among these lifts. The algorithm for finding such an embedding is given in Algorithm~\ref{embalg}.

\begin{algorithm}[!htb]
\caption{Search for shortest self-orthogonal embeddings over
$\mathbb Z_4$}
\label{embalg}
\begin{algorithmic}[1]
\STATE Let $\mathcal{C}$ be a free code over $\mathbb{Z}_4$ with a
generator matrix
\[
G=
\begin{bmatrix}
G_h\\
G_c
\end{bmatrix}
\]
such that $\overline{G}_h$ generates
$\operatorname{Hull}(\operatorname{Res}(\mathcal{C}))$ and
$G_hG_h^T=\mathcal{O}$ over $\mathbb{Z}_4$.
\STATE \textbf{Goal:} find a shortest self-orthogonal embedding of $\mathcal{C}$ over $\mathbb{Z}_4$ with good minimum Lee distance.
\STATE Initialize an empty list $\mathcal R$.
\STATE Construct one shortest doubly even self-orthogonal embedding $\mathrm{Res}(\mathcal{}C)'$ of $\mathrm{Res}(\mathcal{}C)$ with generator matrix $G_0'=
    \begin{bmatrix}
    G(\operatorname{Hull}(\operatorname{Res}(\mathcal C)))&\mathcal{O}\\
    A&B_0
    \end{bmatrix}$.
\IF{$\mathrm{Res}(\mathcal{}C)$ is even}
    \STATE Let $R=[I_{2m}\mid \mathbf 1^T]$ and write $B_0=CR$
    for some invertible matrix $C$ over $\mathbb F_2$.
    \STATE Let $\mathcal D_{\mathrm{even}}=\{D\in GL_{2m}(\mathbb F_2)\mid D(I_{2m}+J_{2m})D^T=I_{2m}+J_{2m},\ q(d_i)=1\text{ for all }i\}$.
    \STATE Construct all shortest doubly even
    self-orthogonal embeddings $\begin{bmatrix}
        G(\mathrm{Res}(\mathcal{C}))&\mathcal{O}\\
        A & CDR
        \end{bmatrix}$ by running over all $D\in\mathcal D_{\mathrm{even}}$.
\ELSE
    \STATE Let $\mathcal D_{\mathrm{odd}} = \{D\in GL_m(\mathbb F_2)\mid DD^T=I_m,\ \operatorname{wt}(d_i)\equiv 1\pmod 4\text{ for all }i\}$.
    \STATE Construct all shortest doubly even
    self-orthogonal embeddings $\begin{bmatrix}
        G(\mathrm{Res}(\mathcal{C}))&\mathcal{O}\\
        A & B_0D
        \end{bmatrix}$ by running over all $D\in\mathcal D_{\mathrm{odd}}$.
\ENDIF
\STATE Among these, keep only those with the highest minimum Hamming distance, up to equivalence.
\FOR{each shortest doubly even self-orthogonal embedding $[G\mid B]$}
    \STATE Find a $\{0, 1\}$-matrix $X$ over $\mathbb{Z}_4$ such that $B\overline{X}^{T}+\overline{X}B^{T} =\bigl(G(\mathcal C)G(\mathcal C)^T+B'(B')^T\bigr)/2\pmod 2$ where $B'$ is the $\{0,1\}$-matrix over $\mathbb Z_4$ such that $\overline{B'}=B$.
    \STATE Construct all lifts 
    \[
        [G(\mathcal{C})\mid B'+2(X+Y)],
    \]
    where $Y$ runs over all $\{0,1\}$-matrices over $\mathbb Z_4$ such that $f(\overline{Y})=B\overline{Y}^{T}+\overline{Y}B^{T}=\mathcal O$, and append every code that is inequivalent to all codes already in $\mathcal{R}$ to $\mathcal{R}$.
\ENDFOR
\STATE Return $\mathcal R$.
\end{algorithmic}
\end{algorithm}

\begin{example}
We consider a binary $[20, 4, 10]$ code $\mathcal{C}$ from BKLC with generator matrix
\[
G=
\begin{bmatrix}
1&0&0&0&0&1&0&1&1&0&0&1&1&0&1&1&1&1&1&1\\
0&1&1&1&1&0&1&0&0&1&1&0&0&1&0&0&1&1&1&1\\
0&1&0&0&1&1&0&1&1&1&1&0&1&0&0&0&1&0&0&1\\
0&0&1&0&1&0&0&0&1&1&0&1&1&1&0&1&0&1&0&1
\end{bmatrix}.
\]
Write
\[
G=\begin{bmatrix}
    G_h\\A
\end{bmatrix}
\]
where $G_h$ is the matrix consisting of the first two rows of $G$. Then $G_h$ generates the hull of $\mathcal{C}$. Let $G_4$ be a matrix obtained by viewing $G$ as a matrix over $\mathbb{Z}_4$, and let $\mathcal{C}_4$ be a code over $\mathbb{Z}_4$ generated by $G_4$. Then $\mathcal{C}_4$ has minimum Hamming distance $d_H(\mathcal{C}_4)=10$ and minimum Lee distance $d_L(\mathcal{C}_4)=10$.

Since $G_hG_h^T=\mathcal{O}$ in $\mathbb{Z}_4$, we can apply Theorem~\ref{Z4embeddingres}. Note that $\mathcal{C}$ is an even code. Since
\[
AA^T=\begin{bmatrix}
    0 & 1\\1 & 0
\end{bmatrix},
\]
the number of doubly even codewords in $\langle A\rangle$ is one, which is equal to the number of doubly even codewords in $E_3$ by Lemma~\ref{doublyevencodeword}. Then, by Theorem~\ref{doublyevenev}, $n_{de}(\mathcal{C})=n_s(\mathcal{C})=20+(4-2)+1=23$.

Define
\[
B_0=\begin{bmatrix}
    1 & 1 & 0\\1 & 0 & 1
\end{bmatrix}\quad\mbox{and}\quad B=\begin{bmatrix}
    0 & 0 & 0\\0 & 0 & 0\\1 & 1 & 0\\1 & 0 & 1
\end{bmatrix}.
\]
Since
\[
B_0B_0^T=\begin{bmatrix}
    0 & 1\\1 & 0
\end{bmatrix}=AA^T,
\]
$G_{de}=[G~|~B]$ generates a shortest doubly even self-orthogonal embedding of $\mathcal{C}$ with parameters $[23, 4, 12]$.

Define
\[
Z=\begin{bmatrix}
    1 & 0 & 0\\1 & 0 & 0\\0 & 0 & 0\\0 & 1 & 0
\end{bmatrix}
\]
over $\mathbb{Z}_4$. Since $Z$ satisfies
\[
B\overline{Z}^T+\overline{Z}B^T=\frac{GG^T+B'(B')^T}{2}\pmod 2,
\]
the lifted matrix
\[
\tilde{G}=[G_4~|~B'+2Z]
\]
generates a shortest self-orthogonal embedding $\tilde{\mathcal{C}}$ of $\mathcal{C}_4$ with minimum Hamming distance $d_H(\tilde{\mathcal{C}})=12$ and minimum Lee distance $d_L(\tilde{\mathcal{C}})=12$.

Define
\[
Y=\begin{bmatrix}
    0 & 0 & 0\\0 & 0 & 0\\1 & 0 & 1\\0 & 0 & 0
\end{bmatrix}
\]
over $\mathbb{Z}_4$. Then
\[
f(\overline{Y})=B\overline{Y}^T+\overline{Y}B^T=\begin{bmatrix}
    0 & 0 & 0 & 0\\0 & 0 & 0 & 0\\0 & 0 & 1 & 0\\0 & 0 & 0 & 0
\end{bmatrix}+\begin{bmatrix}
    0 & 0 & 0 & 0\\0 & 0 & 0 & 0\\0 & 0 & 1 & 0\\0 & 0 & 0 & 0
\end{bmatrix}=\mathcal{O}.
\]
Therefore, by Proposition~\ref{z4lift}, $\tilde{G}'=[G_4~|~B'+2(Z+Y)]$ generates a shortest self-orthogonal embedding $\tilde{\mathcal{C}}'$ of $\mathcal{C}_4$, and its minimum distances are $d_H(\tilde{\mathcal{C}}')=12$ and $d_L(\tilde{\mathcal{C}}')=14$.
\end{example}

\begin{example}
Let $\mathcal{C}$ be a binary $[20, 5, 9]$ code from BKLC with generator matrix
\[
G=
\begin{bmatrix}
1&1&1&1&1&1&1&1&1&1&1&1&1&1&1&1&0&0&0&0\\
0&1&0&0&0&0&1&1&0&1&0&1&1&1&1&0&1&0&0&0\\
0&0&1&0&0&1&1&0&0&1&1&0&1&0&1&1&0&1&0&0\\
0&0&0&1&0&0&0&1&1&1&1&0&0&1&1&1&0&0&1&0\\
1&1&1&1&0&0&0&1&0&1&0&0&1&0&0&1&0&0&0&1
\end{bmatrix}.
\]
Let
\[
G=\begin{bmatrix}
    G_h\\A
\end{bmatrix}
\]
where $G_h$ is the first row of $G$, which generates ${\rm Hull}(\mathcal{C})$. The lifted code $\mathcal{C}_4$ has minimum distances $d_H(\mathcal{C}_4)=d_L(\mathcal{C}_4)=9$.

Since $G_hG_h^T=\mathcal{O}$ in $\mathbb{Z}_4$, we can apply Theorem~\ref{Z4embeddingres}. Note that $\mathcal{C}$ is an odd-like code. Since $AA^T=I$ and every row of $A$ has weight congruent to $1\pmod 4$, by Theorem~\ref{doublyevenod}, $n_{de}(\mathcal{C})=n_s(\mathcal{C})=24$.

Define
\[
B_0=J_4-I_4\quad\mbox{and}\quad B=\begin{bmatrix}\mathbf{0}\\B_0\end{bmatrix}.
\]
As shown in the proof of Lemma~\ref{LCDdoublyevenembeddingodd}, $G_{de}=[G~|~B]$ generates a shortest doubly even self-orthogonal embedding of $\mathcal{C}$, and its parameters are $[24, 5, 12]$.

Define
\[
Z=\begin{bmatrix}
    0 & 0 & 0 & 0\\1 & 1 & 0 & 1\\1 & 1 & 0 & 0\\0 & 1 & 0 & 1\\0 & 0 & 1 & 0
\end{bmatrix}
\]
over $\mathbb{Z}_4$. Then the lifted matrix
\[
\tilde{G}=[G_4~|~B'+2Z]
\]
generates a shortest self-orthogonal embedding $\tilde{\mathcal{C}}$ of $\mathcal{C}_4$ with minimum distances $d_H(\tilde{\mathcal{C}})=d_L(\tilde{\mathcal{C}})=12$.

Define
\[
Y=\begin{bmatrix}
    0 & 0 & 0 & 0\\0 & 0 & 0 & 1\\1 & 0 & 1 & 0\\1 & 1 & 1 & 1\\0 & 0 & 1 & 1
\end{bmatrix}
\]
over $\mathbb{Z}_4$. Then $f(\overline{Y})=B\overline{Y}^T+\overline{Y}B^T=\mathcal{O}$. Hence, by Proposition~\ref{z4lift}, $\tilde{G}'=[G_4~|~B'+2(Z+Y)]$ generates a shortest self-orthogonal embedding $\tilde{\mathcal{C}}'$ of $\mathcal{C}_4$, and its minimum distances are $d_H(\tilde{\mathcal{C}}')=12$ and $d_L(\tilde{\mathcal{C}}')=14$.
\end{example}

For each binary code $\mathcal{C}$ in the ${\rm BKLC}(n,k)$ database with $n\le 40$, $4\le k\le 6$, and minimum distance $d\ge 3$, we construct the corresponding code $\mathcal{C}_4$ over $\mathbb{Z}_4$ and apply Algorithm~\ref{embalg} to obtain a shortest self-orthogonal embedding of $\mathcal{C}_4$. The results are summarized in Table~\ref{Z_4embeddingtable}. The purpose of this table is to show that, in many cases, adding columns to make a code self-orthogonal can also increase the minimum distance to a certain extent. We also apply Algorithm~\ref{embalg} to free codes over $\mathbb{Z}_4$ in Aydin's $\mathbb{Z}_4$ code database~\cite{AAY-web,ALO-arXiv} with $20\le n\le 128$ and $4\le k\le 6$. The corresponding results are summarized in Table~\ref{Z_4embeddingtableaydin}.

In Table~\ref{Z_4embeddingtable}, the column labeled Original code lists the parameters $[n,k,d_H,d_L]$ of the $\mathbb{Z}_4$-code $\mathcal{C}_4$ constructed from a BKLC code. In Table~\ref{Z_4embeddingtableaydin}, the column labeled Original code lists the parameters $[n,k,d_H,d_L]$ of the original free code over $\mathbb{Z}_4$ taken from Aydin's database. In both tables, the column labeled Shortest SO embedding presents the parameters $[n,k,d_H,d_L]$ of a shortest self-orthogonal embedding obtained by our algorithm with the highest minimum Lee distance. Here, $n,k,d_H,d_L$ denote the length, rank, minimum Hamming distance, and minimum Lee distance, respectively. In Table~\ref{Z_4embeddingtableaydin}, an asterisk $(*)$ indicates that the shortest self-orthogonal embedding obtained by our algorithm has the same minimum Lee distance as an existing code with the same length and rank in Aydin's $\mathbb{Z}_4$ code database. A double asterisk $(**)$ indicates that the shortest self-orthogonal embedding obtained by our algorithm has a larger minimum Lee distance than any linear code with the same length and rank listed in Aydin's $\mathbb{Z}_4$ code database. The column $d_{L, \mathrm{linear}}^{\rm Aydin}$ in Table~\ref{Z_4embeddingtableaydin} denotes the best minimum Lee distance among linear codes with the same length and rank listed in Aydin's database.

The explicit generator matrices of the codes
listed in Tables~\ref{Z_4embeddingtable} and~\ref{Z_4embeddingtableaydin}
can be found in~\cite{An-git}.

\begin{table}[htbp]
\centering
\begin{tabular}{l|l||l|l}
\toprule
Original code & Shortest SO embedding & Original code & Shortest SO embedding \\
\midrule
$[12,4,6_H,6_L]$ & $[15,4,8_H,8_L]$ & $[20,5,9_H,9_L]$ & $[24,5,12_H,14_L]$ \\
\hline
$[13,4,6_H,6_L]$ & $[16,4,8_H,8_L]$ & $[21,5,10_H,10_L]$ & $[26,5,12_H,14_L]$ \\
\hline
$[14,4,7_H,7_L]$ & $[15,4,8_H,8_L]$ & $[28,5,14_H,14_L]$ & $[31,5,16_H,16_L]$ \\
\hline
$[20,4,10_H,10_L]$ & $[23,4,12_H,14_L]$ & $[29,5,14_H,14_L]$ & $[31,5,16_H,16_L]$ \\
\hline
$[21,4,10_H,10_L]$ & $[26,4,12_H,14_L]$ & $[30,5,15_H,15_L]$ & $[31,5,16_H,16_L]$ \\
\hline
$[27,4,14_H,14_L]$ & $[30,4,16_H,16_L]$ & $[36,5,17_H,17_L]$ & $[40,5,20_H,22_L]$ \\
\hline
$[28,4,14_H,14_L]$ & $[31,4,16_H,16_L]$ & $[37,5,18_H,18_L]$ & $[42,5,20_H,22_L]$ \\
\hline
$[29,4,15_H,15_L]$ & $[30,4,16_H,16_L]$ & $[13,6,4_H,4_L]$ & $[20,6,8_H,10_L]$ \\
\hline
$[34,4,17_H,17_L]$ & $[38,4,20_H,22_L]$ & $[14,6,5_H,5_L]$ & $[18,6,8_H,8_L]$ \\
\hline
$[35,4,18_H,18_L]$ & $[40,4,20_H,22_L]$ & $[15,6,6_H,6_L]$ & $[20,6,8_H,10_L]$ \\
\hline
$[36,4,18_H,18_L]$ & $[39,4,20_H,22_L]$ & $[22,6,9_H,9_L]$ & $[26,6,12_H,12_L]$ \\
\hline
$[13,5,5_H,5_L]$ & $[17,5,8_H,8_L]$ & $[29,6,13_H,13_L]$ & $[32,6,16_H,16_L]$ \\
\hline
$[14,5,6_H,6_L]$ & $[17,5,8_H,8_L]$ & $[30,6,14_H,14_L]$ & $[32,6,16_H,16_L]$ \\
\hline
$[15,5,7_H,7_L]$ & $[16,5,8_H,8_L]$ & $[31,6,15_H,15_L]$ & $[32,6,16_H,16_L]$ \\
\bottomrule
\end{tabular}
\caption{Parameters of a shortest self-orthogonal embedding obtained by our algorithm from BKLC}
\label{Z_4embeddingtable}
\end{table}

\begin{table}[htbp]
\centering
\scriptsize
\setlength{\tabcolsep}{3pt}
\renewcommand{\arraystretch}{1.15}
\resizebox{\textwidth}{!}{%
\begin{tabular}{l|l|c||l|l|c}
\toprule
Original code & Shortest SO embedding & $d_{L, \mathrm{linear}}^{\rm Aydin}$
& Original code & Shortest SO embedding & $d_{L, \mathrm{linear}}^{\rm Aydin}$ \\
\midrule
$[20,4,10_H,16_L]$ & $[25,4,12_H,20_L]$ & $22$
& $[30,5,15_H,26_L]$ & $[31,5,16_H,28_L]^*$ & $28$ \\
\hline
$[45,4,22_H,42_L]$ & $[50,4,24_H,48_L]^*$ & $48$
& $[63,5,30_H,54_L]$ & $[66,5,32_H,54_L]^{**}$ & $36$ \\
\hline
$[54,4,26_H,51_L]$ & $[58,4,28_H,54_L]^{**}$ & $42$
& $[75,5,35_H,66_L]$ & $[76,5,36_H,68_L]^{**}$ & $42$ \\
\hline
$[56,4,27_H,54_L]$ & $[57,4,28_H,56_L]^*$ & $56$
& $[105,5,50_H,98_L]$ & $[108,5,52_H,100_L]^{**}$ & $60$ \\
\hline
$[63,4,27_H,54_L]$ & $[64,4,28_H,56_L]^{**}$ & $48$
& $[21,6,8_H,14_L]$ & $[28,6,12_H,20_L]$ & $22$ \\
\hline
$[65,4,32_H,60_L]$ & $[70,4,36_H,64_L]$ & $67$
& $[31,6,15_H,26_L]$ & $[32,6,16_H,28_L]^{**}$ & $20$ \\
\hline
$[70,4,34_H,67_L]$ & $[72,4,36_H,70_L]^*$ & $70$
& $[35,6,14_H,26_L]$ & $[42,6,16_H,32_L]$ & $34$ \\
\hline
$[80,4,39_H,77_L]$ & $[81,4,40_H,78_L]^{**}$ & $48$
& $[37,6,15_H,29_L]$ & $[42,6,16_H,32_L]$ & $34$ \\
\hline
$[91,4,39_H,78_L]$ & $[92,4,40_H,80_L]^{**}$ & $56$
& $[45,6,22_H,36_L]$ & $[48,6,24_H,36_L]^{**}$ & $27$ \\
\hline
$[95,4,46_H,90_L]$ & $[100,4,48_H,94_L]$ & $96$
& $[75,6,33_H,65_L]$ & $[81,6,36_H,70_L]^*$ & $70$ \\
\hline
$[21,5,10_H,14_L]$ & $[24,5,12_H,16_L]^{**}$ & $14$
& $[91,6,40_H,78_L]$ & $[98,6,44_H,84_L]^*$ & $84$ \\
\hline
$[22,5,10_H,16_L]$ & $[27,5,12_H,20_L]^{**}$ & $16$
& $[98,6,42_H,84_L]$ & $[105,6,48_H,90_L]$ & $94$ \\
\hline
$[26,5,12_H,20_L]$ & $[31,5,16_H,28_L]^*$ & $28$
& $[112,6,50_H,98_L]$ & $[119,6,52_H,104_L]^{**}$ & $62$ \\
\bottomrule
\end{tabular}%
}
\caption{Parameters of a shortest self-orthogonal embedding obtained by our algorithm from Aydin's database}
\label{Z_4embeddingtableaydin}
\end{table}

\section{Conclusion}
In this paper, we investigated the self-orthogonal embedding problem for linear codes over $\mathbb Z_4$. In our approach to this problem, the shortest doubly even self-orthogonal embedding problem for binary linear codes arises naturally through residue codes. We therefore completely determined the exact length of a shortest doubly even self-orthogonal embedding for any binary linear code. For codes over $\mathbb{Z}_4$, we obtained a tight bound on the number of columns required to construct a shortest self-orthogonal embedding, and showed that under specific conditions, this number coincides with the minimum number of columns needed to make its residue code a doubly even self-orthogonal code. Combining this result with our findings on the length of the shortest doubly even self-orthogonal embedding for the extended Hamming codes, we obtained the length of the shortest self-orthogonal embedding of the quaternary Preparata codes. Moreover, under the condition that a free code over $\mathbb{Z}_4$ and its residue code have the same shortest embedding length, we presented an algorithm to generate shortest self-orthogonal embeddings and provided explicit examples.

Currently, the problem of embedding a given code over $\mathbb{Z}_4$ into a self-orthogonal code with a high minimum distance remains open. Therefore, exploring such constructions will be an interesting direction for future work. Another direction for future work would be to expand the study of the self-orthogonal embedding problem to other widely studied rings in coding theory, including $\mathbb{F}_2 + u\mathbb{F}_2$.
 
\section*{Acknowledgement}
This research (J.-L. Kim) was supported in part by the BK21 FOUR (Fostering Outstanding Universities for Research) funded by the Ministry of Education (MOE, Korea), National Research Foundation of Korea (NRF) under Grant No. 4120240415042, Basic Science Research Program through the National Research Foundation of Korea (NRF) funded by the Ministry of Science and ICT under Grant No. RS-2025-24534992 and Global - Learning \& Academic research institution for Master’s·PhD students, and Postdocs(LAMP) Program of the National Research Foundation of Korea(NRF) grant funded by the Ministry of Education(No. RS-2024-00441954).

This research (S. Ling) was supported by Nanyang Technological University under Research Grant 04INS000047C230GRT01. Part of this work was done when S. Ling was visiting Sogang University. He thanks the institution for the hospitality.


\begin{thebibliography}{00}

\bibitem{An-git} J. An, SO embedding over $\mathbb{Z}_4$ datasets, Available from: \url{https://github.com/junmin518-droid/SO-embedding-over-Z4-datasets.git}. 2026.

\bibitem{AHKL-arXiv} J. An, J.-H. Hong, J.-L. Kim and H. Lim, ``Shortest LCD embeddings of binary, ternary and quaternary linear codes," to appear in Adv. Math. Commun., arXiv:2601.20600, 2026.

\bibitem{AKKLW-arXiv} J. An, N. Kaplan, J.-L. Kim, J. Luo and G. Wang, ``Shortest self-orthogonal embeddings of binary linear codes," arXiv:2511.05440, 2025.

\bibitem{A-1941} C. Arf, ``Untersuchungen \"uber quadratische formen in k\"orpern der charakteristik 2. Teil I.," J. Reine Angew. Math., vol. 183, pp. 148-167, 1941.

\bibitem{AAY-web} N. Aydin, T. Asamov and B. Yoshino, Online database of $\mathbb{Z}_4$ codes, Available from: \url{http://quantumcodes.info/Z4/}, 2021.

\bibitem{ALO-arXiv} N. Aydin, Y. Lu and V. R. Onta, ``An updated database of $\mathbb{Z}_4$ codes," arXiv:2208.06832, 2022.

\bibitem{BSBM-1997} A. Bonnecaze, P. S\'ole, C. Bachoc and B. Mourrain, ``Type II codes over $\mathbb{Z}_4$," IEEE Trans. Inform. Theory, vol. 43, no. 3, pp. 969-976, 1997.

\bibitem{BSC-1995} A. Bonnecaze, P. S\'ole and A. R. Calderbank, ``Quaternary quadratic residue codes and unimodular lattices," IEEE Trans. Inform. Theory, vol. 41, no. 2, pp. 366-377, 1995.

\bibitem{CMKH-1996}
A. R. Calderbank, G. McGuire, P. V. Kumar, and T. Helleseth,
``Cyclic codes over $\mathbb Z_4$, locator polynomials, and Newton's identities," IEEE Trans. Inform. Theory, vol. 42, no. 1, pp. 217-226, 1996.

\bibitem{CS-1995} A. R. Calderbank and N. J. A. Sloane, ``Modular and $p$-adic cyclic codes," Des. Codes Cryptogr., vol. 6, pp. 21-35, 1995.

\bibitem{CMTQ-2018-1} C. Carlet, S. Mesnager, C. Tang and Y. Qi, ``New characterization and parametrization of LCD codes," IEEE Trans. Inform. Theory, vol. 65, no. 1, pp. 39-49, 2019.

\bibitem{CMTQP-2018-2} C. Carlet, S. Mesnager, C. Tang, Y. Qi, and R. Pellikaan. ``Linear codes over $\mathbb{F}_q$ are equivalent to LCD codes for $q>3$," IEEE Trans. Inform. Theory, vol. 64, no. 4, pp. 3010-3017, 2018.

\bibitem{CS-1993} J. H. Conway and N. J. A. Sloane, ``Self-dual codes over the integers modulo $4$," J. Combin. Theory Ser. A, vol. 62, pp. 30-45, 1993.

\bibitem{DHS-2001} S. T. Dougherty, M. Harada and P. S\'ole, ``Shadow codes over $\mathbb{Z}_4$," Finite Fields Appl., vol. 7, pp. 507--529, 2001.

\bibitem{FGPL-1998} J. Fields, P. Gaborit, J. S. Leon, and V. Pless, ``All self-dual $\mathbb Z_4$ codes of length 15 or less are known," IEEE Trans. Inform. Theory, vol. 44, no. 1, pp. 311-322, 1998.

\bibitem{FST-1992} G. D. Forney Jr, N. J. A. Sloane and M. D. Trott, ``The Nordstrom-Robinson code is the binary image of the octacode," Coding and Quantization: DIMACS/IEEE workshop, Amer. Math. Soc., 1992.

\bibitem{G-1996} P. Gaborit, ``Mass formulas for self-dual codes over $\mathbb Z_4$ and $\mathbb F_q+u\mathbb F_q$ rings," IEEE Trans. Inform. Theory, vol. 42, no. 4, pp. 1222-1228, 1996.

\bibitem{HKCSS-1994} A. R. Hammons Jr, P. V. Kumar, A. R. Calderbank, N. J. A. Sloane and P. S\'ole, ``The $\mathbb{Z}_4$-linearity of Kerdock, Preparata, Goethals and related codes," IEEE Trans. Inform. Theory, vol. 40, no. 2, pp. 301-319, 1994.

\bibitem{H-2010} M. Harada, ``Extremal Type II $\mathbb Z_4$-codes of lengths 56 and 64," J. Combin. Theory Ser. A, vol. 117, no. 8, pp. 1285-1288, 2010.

\bibitem{H-2012} M. Harada, ``Optimal self-dual $\mathbb Z_4$-codes and a unimodular lattice in dimension 41," Finite Fields Appl., vol. 18, no. 3, pp. 529-536, 2012.

\bibitem{HK-2000} M. Harada and M. Kitazume, ``$\mathbb Z_4$-code constructions for the Niemeier lattices and their embeddings in the Leech lattice," European J. Combin., vol. 21, no. 4, pp. 473-485, 2000.

\bibitem{book-HP} W. C. Huffman and V. Pless, {\it Fundamentals of Error-Correcting Codes}, Cambridge University Press, Cambridge, 2003.

\bibitem{KC-2022} J.-L. Kim and W.-H. Choi, ``Self-orthogonality matrix and Reed-Muller codes," IEEE Trans. Inform. Theory, vol. 68, no. 11, pp. 7159-7164, 2022.

\bibitem{KKL-2021} J.-L. Kim, Y.-H. Kim and N. Lee, ``Embedding linear codes into self-orthogonal codes and their optimal minimum distances," IEEE Trans. Inform. Theory, vol. 67, no. 6, pp. 3701-3707, 2021.

\bibitem{KLL-2026} J.-L. Kim, H. Liu and J. Luo, ``How to expand a self-orthogonal code," Adv. Math. Commun., vol. 22, pp. 233-239, 2026. 

\bibitem{LZ-2019} C. Li and P. Zeng, ``Constructions of linear codes with one-dimensional hull," IEEE Trans. Inform. Theory, vol. 65, no. 3, pp. 1668-1676, 2019.

\bibitem{book-MS} F. J. MacWilliams and N. J. A. Sloane, {\it The Theory of Error-Correcting Codes}. North-Holland, London, 1977.

\bibitem{M-2024} A. Munemasa and R. A. L. Betty, ``Classification of extremal Type II $\mathbb Z_4$-codes of length 24," Designs, Codes and Cryptography, vol. 92, pp. 771-785, 2024.

\bibitem{PQ-1996} V. Pless and Z. Qian, ``Cyclic codes and quadratic residue codes over $\mathbb Z_4$," IEEE Trans. Inform. Theory, vol. 42, no. 5, pp. 1594--1600, 1996.

\bibitem{R-1999} E. M. Rains, ``Optimal self-dual codes over $\mathbb{Z}_4$," Discrete Math., vol. 203, pp. 215-228, 1999.

\bibitem{R-2000} E. M. Rains, ``Bounds for self-dual codes over $\mathbb{Z}_4$," J. Combin. Theory Ser. A, vol. 90, pp. 195-202, 2000.

\bibitem{ST-2026} M. Shi and S. Tao, ``Expanding self-orthogonal codes over a ring $\mathbb{Z}_4$ to self-dual codes and unimodular lattices," Cryptogr. Commun., 2026. https://doi.org/10.1007/s12095-026-00884-0

\bibitem{book-W} Z. Wan, {\it Quaternary Codes}. World Scientific, New Jersey, 1997.

\bibitem{WL-arXiv} J. Wang and J. Luo, ``Shortest embeddings of linear codes with arbitrary hull dimension," arXiv:2604.08843, 2026.

\end{thebibliography}
\end{document}